\DeclareMathOperator{\Tr}{Tr}
\newtheorem{thm}{Theorem}[section]{\bfseries}{\itshape}
\newtheorem{cor}[thm]{Corollary}{\bfseries}{\itshape}
\newtheorem{prop}[thm]{Proposition}{\bf}{\it}
\newtheorem{lem}[thm]{Lemma}{\bf}{\it}
{\bf}{\it}
\newtheorem{rem}[thm]{Remark}{\bfseries}{\rmfamily}
\title[mean-field evolution with singular three-body interactions]{Rate of convergence towards mean-field evolution for weakly interacting
	bosons with singular three-body interactions}
\author{Jinyeop Lee}
\address{School of Mathematics, Korea Institute	for Advanced Study,	Seoul 02455, Republic of Korea}
\email{jinyeoplee@kias.re.kr}
\address{Department of Mathematics, LMU Munich, Theresienstrasse 39, 80333 Munich, Germany}
\email{lee@math.lmu.de}
\begin{document}
\maketitle              
\begin{abstract}
	In this paper, we investigate the dynamics of a system of $N$ weakly
	interacting bosons with singular three-body interactions in three
	dimensions. By assuming factorized initial data $\Psi_{N,0}=\varphi_{0}^{\otimes N}$
	and triple collisions, we prove that in the many-particle limit, its
	mean-field approximation converges to quintic Hartree dynamics. Moreover,
	we prove that the rate of convergence towards the mean-field quintic
	Hartree evolution is of $O(N^{-(1+4a)/(3+2a)})$ for $\varphi_{0}\in H^{(3/2)+a}(\mathbb{R}^{3})$
	where $0\leq a<1/2$ and $O(N^{-1})$ for $a>1$. Our proof is based
	on and extends the Fock space approach.
\end{abstract}


\section{Introduction\label{sec:introduction}}


We are consider in the time-evolution of a condensate of $N$-particles
under the presence of a three-body interaction in the mean-field regime. 
We prove that, in the many-particle limit, the solution of the many-particle Schr\"odinger
equation with a three-body interaction can be approximated by its
mean-field limit (quintic Hartree equation) in the trace norm sense.

We consider a system of $N$ bosonic particles interacting with three-body
interaction having factorized initial data, i.e.,

\begin{align}
\mathrm{i}\partial_{t}\Psi_{N,t} & =\left(\sum_{j=1}^{N}(-\Delta_{x_{j}})+\frac{1}{N^{2}}\sum_{1\leq i<j<k\leq N}V(x_{i}-x_{j},x_{i}-x_{k})\right)\Psi_{N,t}\label{eq:N_body_Hamiltonian}\\
\left.\Psi_{N}\right|_{t=0} & =\prod_{j=1}^{N}\varphi_{0}(x_{j}).\nonumber 
\end{align}
Note that $\Psi_{N,t}\in L_{\text{sym}}^{2}(\mathbb{R}^{3N})$. The
mean-field evolution of the above system is given by the following
quintic Hartree system
\begin{align*}
\mathrm{i}\partial_{t}\varphi_{t} & =-\Delta\varphi_{t}+\frac{1}{2}\left(\int\mathrm{d}y\mathrm{d}z\,V(x-y,x-z)|\varphi_{t}(y)|^{2}|\varphi_{t}(z)|^{2}\right)\varphi_{t}\\
\left.\varphi_{t}\right|_{t=0} & =\varphi_{0}.
\end{align*}
To generalize the Coulomb interactions, it is natural to assume the
three-body interaction potential $V(x-y,x-z)$ is related to the distance
between $x$, $y$, and $z$. Moreover, it should be symmetric up
to the variables $x$, $y$, and $z$. Hence, we will assume
\begin{equation}
V(x-y,x-z)=\lambda(v(x-y)v(x-z)+v(y-z)v(y-x)+v(z-x)v(z-y))
\end{equation}
for some constant $\lambda>0$. We assume $v$ to be the
Coulomb potential, i.e., $v(x)=1/|x|$. Note that $v$ satisfies the
following operator inequality
\[
v(x)\leq C(1-\Delta)
\]
by Hardy inequality.

To understand our system rigorously at time $t\geq0$, we proceed
as follows. First, we consider the density matrix $\gamma_{N,t}=|\Psi_{N,t}\rangle\langle\Psi_{N,t}|$
associated with $\Psi_{N,t}$, which can be understood as the orthogonal
projection onto $\Psi_{N,t}$. More precisely, the kernel of $\gamma_{N,t}$
is given by 
\[
\gamma_{N,t}(\mathbf{x}_{N};\mathbf{x}_{N}'):=\Psi_{N,t}(\mathbf{x}_{N})\overline{\Psi_{N,t}(\mathbf{x}_{N})}
\]
where we denote $\mathbf{x}_{m}\in\mathbb{R}^{3m}$ for any $m\in\mathbb{N}$.
The $k$-particle marginal density is then defined through its kernel
\begin{equation}
\gamma_{N,t}^{\left(k\right)}(\mathbf{x}_{k};\mathbf{x}'_{k})=\int\mathrm{d}\mathbf{x}_{N-k}\gamma_{N,t}(\mathbf{x}_{k},\mathbf{x}_{N-k};\mathbf{x}'_{k},\mathbf{x}_{N-k}).\label{eq:Kernel_of_Marginal_Density}
\end{equation}
We now focus on the trace-norm distance between the one-particle marginal
density $\gamma_{N,t}^{(1)}$ and the projection operator $|\varphi_{t}\rangle\langle\varphi_{t}|$.
In particular, we will prove that 
\begin{equation}
\Tr\left|\gamma_{N,t}^{(1)}-|\varphi_{t}\rangle\langle\varphi_{t}|\right|\leq\frac{Ce^{Kt}}{N}\label{eq:trace_norm_bound}
\end{equation}
and find $C$ according to the conditions on $V$ and $\|\varphi_{0}\|$.
The norm $\|\cdot\|$ of $\varphi_{0}$ will be discussed later. 
	
\begin{thm}
	\label{thm:main}Let the three-body interaction potential $V(x-y,x-z)$
	such that
	\[
	V(x-y,x-z)=\lambda(|x-y|^{-1}|x-z|^{-1}+|y-z|^{-1}|y-x|^{-1}+|z-x|^{-1}|z-y|^{-1})
	\]
	where $\lambda>0$. For $\varphi_{0}\in H^{(3/2)+a}(\mathbb{R}^{3})$
	such that $\|\varphi_{0}\|_{L^{2}(\mathbb{R}^{3})}=1$, and $\varphi_{t}$
	be the solution of the quintic Hartree equation
	\begin{equation}
	\mathrm{i}\partial_{t}\varphi_{t}=-\Delta\varphi_{t}+\frac{1}{2}\left(\int\mathrm{d}y\mathrm{d}z\,V(x-y,x-z)|\varphi_{t}(y)|^{2}|\varphi_{t}(z)|^{2}\right)\varphi_{t}\label{eq:quinticHartree}
	\end{equation}
	with initial data $\varphi_{t=0}=\varphi_{0}$. Let $\psi_{N,t}=e^{-\mathrm{i}H_{N}t}\varphi_{0}^{\otimes N}$
	and $\gamma_{N,t}^{(1)}$ be the one-particle reduced density associated
	with $\psi_{N,t}$ as defined in \eqref{eq:Kernel_gamma}. Then there
	exist constants $C$ and $K$, depending only on $\|\varphi_{0}\|_{H^{(3/2)+a}}$,
	and $\lambda$ such that
	\begin{enumerate}
	\item if $0\leq a<1/2$, then
	\[
	\Tr\left|\gamma_{N,t}^{(1)}-|\varphi_{t}\rangle\langle\varphi_{t}|\right|\leq\frac{Ce^{Kt}}{N^{(1+4a)/(3+2a)}},
	\]
	\item if $a>1$, then
	\[
	\Tr\left|\gamma_{N,t}^{(1)}-|\varphi_{t}\rangle\langle\varphi_{t}|\right|\leq\frac{Ce^{Kt}}{N}.
	\]
\end{enumerate}
\end{thm}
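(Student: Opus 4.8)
The plan is to run the Fock--space (coherent--state) method of Rodnianski--Schlein, in the refined form that yields the optimal $1/N$ rate, adapted to the cubic (three--body) interaction. One works on the bosonic Fock space $\mathcal F=\bigoplus_{n\ge 0}L^{2}_{\mathrm{sym}}(\mathbb R^{3n})$ with the second--quantized Hamiltonian
\[
\mathcal H_{N}=\int \nabla_{x}a_{x}^{*}\nabla_{x}a_{x}\,\mathrm dx+\frac{1}{6N^{2}}\iiint V(x-y,x-z)\,a_{x}^{*}a_{y}^{*}a_{z}^{*}a_{z}a_{y}a_{x}\,\mathrm dx\,\mathrm dy\,\mathrm dz,
\]
and the Weyl operator $W(\sqrt N\varphi)$ creating the coherent state $W(\sqrt N\varphi)\Omega$. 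The central object is the fluctuation dynamics $\mathcal U_{N}(t)=W^{*}(\sqrt N\varphi_{t})\,e^{-\mathrm i\mathcal H_{N}t}\,W(\sqrt N\varphi_{0})$, whose generator $\mathcal L_{N}(t)$ is obtained by conjugating $\mathcal H_{N}$ with $W$, i.e.\ substituting $a_{x}\mapsto a_{x}+\sqrt N\varphi_{t}(x)$; the terms linear in the fields (of order $\sqrt N$) cancel precisely because $\varphi_{t}$ solves the quintic Hartree equation \eqref{eq:quinticHartree}. What remains is a sum of quadratic, cubic, quartic, quintic and sextic pieces, carrying powers $1$, $N^{-1/2}$, $N^{-1}$, $N^{-3/2}$, $N^{-2}$ respectively, with coefficients built from $V$ and $\varphi_{t}$.

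Two preliminary inputs are needed. First, well--posedness and propagation of regularity for the quintic Hartree equation: $\|\varphi_{t}\|_{H^{3/2+a}}\le C e^{Kt}$, together with $\varphi_{t}\in L^{\infty}$ and boundedness of the self--consistent potential $\int\!\!\int V(x-y,x-z)|\varphi_{t}(y)|^{2}|\varphi_{t}(z)|^{2}\,\mathrm dy\,\mathrm dz$, all of which follow from the Hardy inequality $|x|^{-1}\le C(1-\Delta)$ and standard Strichartz/energy estimates. Second, and this is the technical heart, uniform--in--$N$ operator bounds on the pieces of $\mathcal L_{N}(t)$ and of $[\mathcal L_{N}(t),\mathcal N]$ in terms of $\mathcal N+1$ and the free kinetic operator $\mathcal H_{0}=\mathrm d\Gamma(-\Delta)$. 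The difficulty is that each term of $V$ contains a product of two Coulomb singularities, e.g.\ $|x-y|^{-1}|x-z|^{-1}$; Hardy's inequality absorbs each factor into one derivative, but only one derivative per particle is available from $\mathcal H_{0}$. One therefore distributes the two singularities onto two distinct coordinates and uses the $H^{3/2+a}$--regularity of $\varphi_{t}$ (via fractional Hardy/Sobolev) to absorb whatever the kinetic energy cannot.

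With these bounds, a Gr\"onwall argument for $\langle\mathcal U_{N}(t)\Omega,(\mathcal N+1)\mathcal U_{N}(t)\Omega\rangle$ --- and, for the $1/N$ statement, for the second moment $\langle\mathcal U_{N}(t)\Omega,(\mathcal N+1)^{2}\mathcal U_{N}(t)\Omega\rangle$ --- gives $\langle\mathcal N(t)\rangle\le C e^{Kt}(1+\mathrm{err}_{N})$. When $a>1$ there is enough Sobolev room to bound every term of $\mathcal L_{N}(t)$ by $\mathcal N+\mathcal H_{0}$ with $N$--independent constants and $\mathrm{err}_{N}$ of order $N^{-1}$, so the estimate closes cleanly. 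When $0\le a<1/2$ the singular pieces cannot be fully absorbed; here I would cut off the Coulomb singularities at scale $N^{-\theta}$, writing $|x|^{-1}=|x|^{-1}\mathbf 1_{|x|\le N^{-\theta}}+|x|^{-1}\mathbf 1_{|x|>N^{-\theta}}$, controlling the regular part by its $L^{\infty}$--norm $\sim N^{\theta}$ and the singular part by its small $L^{2}$--norm $\sim N^{-\theta/2}$ (a Kato--type $\varepsilon$--bound), and then optimizing $\theta$ against the power of $N$ coming from the coherent--state normalization; balancing the contributions is what produces the exponent $(1+4a)/(3+2a)$.

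Finally, the number--operator bound is converted into the trace--norm estimate on $\gamma_{N,t}^{(1)}$ by the standard comparison between the coherent--state evolution and the genuine $N$--particle evolution $\psi_{N,t}=e^{-\mathrm iH_{N}t}\varphi_{0}^{\otimes N}$ (projecting the coherent state onto the $N$--particle sector, or using the excitation map), which costs an additional factor $N^{-1}$ and requires exactly the control of the first two moments of $\mathcal N$ obtained above; together with $\langle\mathcal N(t)\rangle\lesssim e^{Kt}$ this yields \eqref{eq:trace_norm_bound} in case (2) and the weaker rate in case (1). I expect the main obstacle to be the uniform operator estimates on the cubic and quartic parts of the generator for the doubly--singular potential with only $H^{3/2+a}$ data, and, within it, the bookkeeping of the cutoff that produces the sharp exponent in the low--regularity regime.
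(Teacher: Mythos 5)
Your overall framework (coherent states on Fock space, fluctuation dynamics $\mathcal U(t)$, Gr\"onwall estimates on moments of $\mathcal N$, and the $P_N$-projection/parity argument to pass from coherent to factorized data at the optimal rate) is the same as the paper's. The divergence, and the gap, is in how the doubly singular factor $v(x-y)v(x-z)$ is handled. The paper never runs the Fock-space argument on the unregularized potential, not even for $a>1$: it replaces $v$ by $\overline v=\min\{v,\alpha_N^{-1}\}$ with $\alpha_N=N^{-\eta}$ in \emph{both} the $N$-body Hamiltonian and the Hartree equation, proves the optimal rate $N^{-(2-\eta)}$ for the regularized problem with only $H^1$ data (Proposition \ref{prop:reged-V}), and then separately compares the original and regularized evolutions in $L^2$ (Lemma \ref{lem:psi-psihat-L2-dist} and its Hartree analogue), where the $H^{(3/2)+a}$ regularity enters through $|v-\overline v|\le|v|^{2+2a}\alpha_N^{1+2a}$ and propagation of Sobolev bounds via powers of $H_N+N$; optimizing $\eta$ against the two error sources gives $\eta=5/(3+2a)$, hence $(1+4a)/(3+2a)$, and $\eta=1$, hence $N^{-1}$, for $a>1$.

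Your claim for case (2), that for $a>1$ ``every term of $\mathcal L_N(t)$'' can be bounded by $\mathcal N+\mathcal H_0$ with $N$-independent constants so that no cutoff is needed, is where the argument would fail. In the quintic part $\mathcal L_5\sim N^{-3/2}\int V(x-y,x-z)\,a_x^*a_y^*\,\phi(\varphi_t(z))\,a_ya_x$, the singular factor $v(x-y)$ is attached only to excitation variables: no $\varphi_t$ sits on $x$ or $y$, so no amount of regularity of the Hartree solution tames it (this is exactly why $\sup_{x,y}\|V(x-y,x-\cdot)\varphi_t\|_{L^2}$ is unusable, cf.\ Lemma \ref{lem:L5}, which carries an explicit $\alpha_N^{-1}$). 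To avoid the cutoff you would have to dominate $\iint v(x-y)^2a_x^*a_y^*a_ya_x$ by something like $\mathcal H_0\mathcal N$ and then propagate a bound on the kinetic energy of the fluctuations along $\mathcal U(t)$; your Gr\"onwall scheme controls only moments of $\mathcal N$, and adding $\mathcal H_0$ to the Gr\"onwall quantity produces new commutators with the cubic and quintic terms that are not controlled — this obstruction is precisely why the paper keeps $\alpha_N=N^{-1}$ even for smooth data. For case (1), your cutoff at distance $N^{-\theta}$ is in substance the paper's regularization, but the error accounting you describe ($L^\infty$-norm $N^{\theta}$ for the regular part, $L^2$-smallness $N^{-\theta/2}$ for the singular part) contains no $a$-dependence and therefore cannot produce the exponent $(1+4a)/(3+2a)$; the $a$ enters only through estimates of the type $|v-\overline v|\le|v|^{2+2a}\alpha_N^{1+2a}$ tested against states whose $H^{(3/2)+a}$-type regularity is propagated through the full $N$-body flow, i.e.\ through a comparison of the original and cut-off many-body (and Hartree) evolutions, a step your proposal does not contain.
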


\vspace{2em}

%
%

\begin{rem}
	One can obtain the same result by using the other three-body interaction
	potentials such that
	\[
	V(x-y,x-z)=V_{\infty}(x-y,x-z)V_{C}(x-y,x-z)
	\]
	where $V_{\infty}\in L^{\infty}(\mathbb{R}^{9})$ and $V_{C}(x-y,x-z)=\lambda(|x-y|^{-1}|x-z|^{-1}+|y-z|^{-1}|y-x|^{-1}+|z-x|^{-1}|z-y|^{-1})$
	or
	
	\[
	V(x-y,x-z)=V_{3}(x-y,x-z)+V_{\infty}(x-y,x-z)
	\]
	where $V_{3}(x-y,x-z)=v_{3}(x-y)v_{3}(x-z)+v_{3}(y-z)v_{3}(y-z)+v_{3}(z-x)v_{3}(z-y)$
	with $v_{3}\in L^{3}(\mathbb{R}^{3})$ and $V_{\infty}\in L^{\infty}(\mathbb{R}^{9})$.
\end{rem}


\begin{rem}
	In \cite{XChen2012}, the author assume that
	\[
	V(x-y,x-z)=v(x-y)v(x-z)+v(y-x)v(y-z)+v(z-x)v(z-y)
	\]
	where, for $\varepsilon\in(0,1/2)$,
	\[
	v(x):=\frac{\chi(|x|)}{|x|^{1-\varepsilon}}\quad\text{or}\quad G_{2+\varepsilon}(x),
	\]
	where $\chi\in C_{0}^{\infty}(\mathbb{R}^{+}\cup\{0\})$ is nonnegative
	decreasing function and $G_{\alpha}$, which is the kernel of the
	Bessel potential. 
	Succeeding that work, through out
	this paper we consider the case where $\varepsilon=0$ without assuming
	the fast decay. Moreover, we do not assume the initial data to have finite variance.
\end{rem}

\vspace{2em}

Before we describe the main ideas used in the proof of Theorem \ref{thm:main},
we give a brief summary of the related known results. The derivation
of the mean-field limit of a dilute Bose gas has been actively studied.
First, profound results \cite{Ginibre1979_1,Ginibre1979_2,Hepp1974,Spohn1980}
give us, using the BBGKY hierarchy, that the convergence, 
\[
\Tr\left|\gamma_{N,t}^{(1)}-|\varphi_{t}\rangle\langle\varphi_{t}|\right|\to0\quad\text{as}\quad N\to\infty.
\]
Form Erd\H os and Yau \cite{ErdosYau2001}, the convergence is also
proven for a singular potential (including the Coulomb case). Rodnianski
and Schlein in \cite{Rodnianski2009}, developed a coherent state
approach to obtain a bound for the trace norm difference 
\[
\Tr\left|\gamma_{N,t}^{(1)}-|\varphi_{t}\rangle\langle\varphi_{t}|\right|\leq O(N^{-1/2})
\]
for singular potentials (including Coulomb potential). The proof is
based on the works of Fröhlich \cite{FrohlichGraffiScharz2007,FrohlichKnowlesPizzo2007,FrohlichKowlesSchartz2009}.
The rate of convergence $O(N^{-1})$ is known to be optimal \cite{Grillakis10,Grillakis11,Grillakis13,Grillakis17}.
The optimal rate of convergence $O(N^{-1})$ was obtained in \cite{ChenLeeSchlein2011}
for the Coulomb case. A similar approach has been applied to many-body
semi-relativistic Schrödinger equations with gravitational interaction
\cite{Lee2013}. Methods in Fock space have been studied for the dynamical
properties of a BEC \cite{Benedikter2015,Boccato2018CompleteBEC,Boccato2017,Brenneck2019,ChenLee2011,ChenLeeLee2018,ChenLeeSchlein2011,LewinNamRougerie2016,Rodnianski2009}.
The same rate of convergence an be obtained by counting the number
of particles in the condensate state \cite{KnowlesPickl2010,MitrouskasPetratPickl2019,Pickl2011}.

In \cite{ChenPavlovic2011}, a system with three-body interactions
is considered. Chen and Pavlovi\'c proved that, in the Gross-Pitaevskii
regime, the limiting dynamics is governed by the quintic nonlinear
Schr\"odinger equation using the BBGKY hierarchy method in dimension
$d=1,2$, i.e., 
\[
\Tr\left|\gamma_{N,t}^{(1)}-|\varphi_{t}\rangle\langle\varphi_{t}|\right|\to0\quad\text{as}\quad N\to\infty.
\]
For $d=3$, Nam and Salzmann derived quintic nonlinear Sch\"odinger equation from Gross–Pitaevskii scaling, i.e., 
\[
V_N(x-y,x-z) = N^{6\beta}V(N^\beta (x-y),N^\beta (x-z))
\]
with $0<\beta<1/6$ with the initial data in higher Sobolev space, to be more specific, $\varphi_0\in H^4(\mathbb{R}^3)$.
Moreover, Chen and Holmer provide the derivation of the energy critical nonlinear Sch\"odinger equation in \cite{ChenHolmer2019}.

In \cite{XChen2012}, Chen provides a rigorous proof for Hartree dynamics
under the presence of triple (repulsive) collisions with singular
interaction potential in the mean-field limit. He provided the rate in the Fock space norm sense for GMM type approximation. The provided rate is $O(N^{-1/2})$.

In the two-body interaction case, as in \cite{ChenLeeSchlein2011,Lee2013,Rodnianski2009},
we first embed the initial state to the Fock space replacing it by
the coherent state. For the evolution of the coherent state, we need
to control the fluctuation $\mathcal{U}(t;s)$, which is defined in
\eqref{eq:def_mathcalU}, around the quintic Hartree dynamics. Then
one can utilize the evolution of the coherent state to estimate the
fluctuations for the dynamics of the factorized state. A technical
difficulty here was overcome by using the method in Rodnianski and
Schlein \cite{Rodnianski2009}, which is equivalent to Lemma \ref{lem:coherent_all}
in this paper. It was possible to overcome the difficulty by controlling
the fluctuation $\mathcal{U}(t;s)$ first by comparing it with an
approximate dynamics $\mathcal{\widetilde{U}}(t;s)$, whose generator
is $\widetilde{\mathcal{L}}(t)$ (see \ref{eq:def_mathcaltildeU}).
The idea was introduced, for two-body interaction, by Ginibre and
Velo \cite{Ginibre1979_1} as a limiting dynamics.

The main difficulty of this paper arise from three-body interaction.
Since we are dealing with three-body interaction
\[
V(x-y,x-z)=\lambda\left(v(x-y)v(x-z)+v(y-z)v(y-x)+v(z-x)v(z-y)\right)
\]
with $v(\cdot):=1/|x|$, the interaction $V$ is twice more singular
than Coulomb potential. Hence, we regularize each $v$ so that the
fluctuation of $\mathcal{\widetilde{U}}(t;s)$. For this, we need
to prove the wellposedness of $\varphi_{t}$ in $H^{1}(\mathbb{R}^{3})$
and obtain the global (in time) bound of $\|\varphi_{t}\|_{H^{1}(\mathbb{R}^{3})}$.
For that, we generalize the Hardy-Littlewood-Sobolev inequality in
Section \ref{sec:properties_quntic_Hartree}.

In this paper, we follow the approaches employed in \cite{Lee2013,Rodnianski2009}.
Since we want to use the approach for three-body interaction potential,
we regularize interaction potential by using $\alpha_{N}\leq N^{-\eta}$
for some $1\leq\eta<3/2$. By applying and extending the techniques
developed in \cite{ChenLeeSchlein2011,Lee2013,Rodnianski2009} for
the regularized potential, we obtain the optimal factor of order $N^{-1}$.
Here, the regularization has a key role, controlling the time evolution
due to the three-body interaction in the Fock space. While this gives
us the optimal bound for Fock states, it turns out that we lose the
rate of convergence between the original evolution and the regularized
evolution. 
Similar technique can be found in the BBGKY hierarchy approach by Chen and Holmer\cite{ChenHolmer2016IMRN,ChenHolmer2016JEMS} and in the physical space in the Fock space approach by Lewin, Nam, and Rougerie\cite{LewinNamRougerie2016}. Another version of such behavior can also be found in \cite{Lewin2015}.
To overcome this difficulty and to obtain the rate of convergence
$O(N^{-(1+4a)/(3+2a)})$ or the optimal rate of convergence $O(N^{-1})$,
we assume the initial data $\varphi_{0}$ to be in $H^{(3/2)+a}(\mathbb{R}^{3})$
or $H^{(5/2)+\varepsilon}(\mathbb{R}^{3})$, respectively.

This paper is organized as follows: First, in Section \ref{sec:Reg-potential},
we introduce the idea and strategy of the proof. Since we are using
regularized potential, we provide more details about it. Then, we
will compare the time evolution between before and after regularization.
Note that we are considering a very singular potential. We prove our
main theorem by proving the theorem for regularized potential in Section
\ref{sec:Pf-of-Main-Thm}. The main strategy is to embed our state
into the Fock space and compare the time evolution of our stated and
coherent state. For that, we use Propositions \ref{prop:Et1} and
\ref{prop:Et2}, which will be proved in Section \ref{sec:Pf-of-Props}.
In Section \ref{sec:comparison}, we prepare lemmas describing comparison
dynamics to prove Propositions \ref{prop:Et1} and \ref{prop:Et2}.
The lemmas are similar to the lemmas in previous works, for example
see \cite{ChenLeeSchlein2011,Lee2013,Rodnianski2009}. Since it has
been well-known about Fock space through out many papers, we review
bosonic Fock space formalism in Appendix \ref{sec:Fock_space}. In
Appendix \ref{sec:properties_quntic_Hartree}, we provide basic properties
of the solution of quintic Hartree equation to bound $\int\mathrm{d}x\mathrm{d}y\mathrm{d}z\,|V(x-y,x-z)|^{2}|\varphi_{t}(x)|^{2}|\varphi_{t}(y)|^{2}$.

\section{Regularization  of the interaction potential}\label{sec:Reg-potential}
We are going to use the Fock space approach which has been well established
for few decades (See, for example, \cite{ChenLeeSchlein2011,Lee2013,Rodnianski2009}).
Hence, we assume readers to be familiar to it. If one need more detail
about the Fock space, we provide it in Appendix \ref{sec:Fock_space}.
To utilize the Fock space approach for our case, we need to regularize initial data and use cut offed
potential. The following two subsections are introduced for those two regularizations.

For example, in Lemmas \ref{lem:NjU} and \ref{lem:L5},
we face the singularity of our potential. Since $\|V(x-y,x-\cdot)\varphi_{t}(\cdot)\|_{L^{2}}$
is singular in both $x$ and $y$, one should avoid or remove this
singularity. Thus, we detour the problem caused by singularity by
using regularized potential. Then, for the time evolution with regularized
potential, in Proposition \ref{prop:reged-V}, we obtain the optimal
rate of convergence with its mean-field dynamics with $H^{1}$-initial
data. The distance between of the original time evolution and the
time evolution with regularized potential will be provided as follows.

As we have talked in Section \ref{sec:introduction}, for the Fock
space analysis, the potential $V$ is more singular than we can utilize
the property of the solution of quintic Hartree equation. Hence, we
are going to remove the singularity of the potential $V$.

Let 
\begin{equation}
\overline{V}(x-y,x-z)=\lambda\left(\overline{v}(x-y)\overline{v}(x-z)+\overline{v}(y-z)\overline{v}(y-x)+\overline{v}(z-x)\overline{v}(z-y)\right)\label{eq:regularizedV}
\end{equation}
with
\[
\overline{v}(x-y)=\operatorname{sgn}\left(v(x-y)\right)\min\left\{ |v(x-y)|,\alpha_{N}^{-1}\right\} 
\]
where $\operatorname{sgn}(v(x))$ denotes the sign of $v(x)$. We
also define the regularized Hamiltonian
\begin{equation}
\overline{H}_{N}=\sum_{j=1}^{N}(-\Delta_{x_{j}})-\frac{1}{N^{2}}\sum_{i<j<k}\overline{V}(x_{i}-x_{j},x_{i}-x_{k}).\label{eq:regularizedHamiltonian}
\end{equation}
Then we have the following proposition for regularized Hamiltonian,
which give the optimal rate of convergence.
\begin{prop}
	\label{prop:reged-V}Let $\overline{V}(x-y,x-z)$ as in \eqref{eq:regularizedV}
	with $\alpha_{N}=N^{-\eta}$ for $1\leq\eta<3/2$. Let $\varphi_{0}\in H^{1}(\mathbb{R}^{3})$
	with $\|\varphi_{0}\|=1$, and $\varphi_{t}$ be the solution of the
	quintic Hartree equation
	\[
	\mathrm{i}\partial_{t}\phi_{t}=-\Delta\phi_{t}+\frac{1}{2}\left(\int\mathrm{d}y\mathrm{d}z\,\overline{V}(x-y,x-z)|\phi_{t}(y)|^{2}|\phi_{t}(z)|^{2}\right)\phi_{t}
	\]
	with initial data $\varphi_{t=0}=\varphi_{0}$. Let $\overline{\psi}_{N,t}=e^{-\mathrm{i}\overline{H}_{N}t}\varphi_{0}^{\otimes N}$
	and $\overline{\gamma}_{N,t}^{(1)}$ be the one-particle reduced density
	associated with $\overline{\psi}_{N,t}$. Then there exist constants
	$C$ and $K$, depending only on $\|\varphi\|_{H^{1}}$, $D,$and
	$\lambda$ such that
	\[
	\operatorname{Tr}\left|\overline{\gamma}_{N,t}^{(1)}-|\phi_{t}\rangle\langle\phi_{t}|\right|\leq\frac{Ce^{Kt}}{N^{2-\eta}}.
	\]
\end{prop}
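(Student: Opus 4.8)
The plan is to adapt to the three-body interaction the coherent-state Fock-space scheme of Rodnianski--Schlein \cite{Rodnianski2009} and Chen--Lee--Schlein \cite{ChenLeeSchlein2011}, exploiting that after the cut-off the building block $\overline{v}$ enjoys both the pointwise bound $|\overline{v}|\le\alpha_{N}^{-1}=N^{\eta}$ and the Hardy operator inequality $\overline{v}(x)\le C(1-\Delta)$. First I would embed the problem into the bosonic Fock space $\mathcal{F}$: replace $\varphi_{0}^{\otimes N}$ by the coherent state $W(\sqrt{N}\varphi_{0})\Omega$, and use the factorization
\[
e^{-\mathrm{i}\overline{\mathcal{H}}_{N}t}\,W(\sqrt{N}\varphi_{0})\,\Omega=e^{\mathrm{i}\theta_{t}}\,W(\sqrt{N}\phi_{t})\,\mathcal{U}(t;0)\,\Omega ,
\]
where $\overline{\mathcal{H}}_{N}$ is the second quantization of $\overline{H}_{N}$, $\phi_{t}$ solves the regularized quintic Hartree equation, $\theta_{t}\in\mathbb{R}$ is an explicit phase, and $\mathcal{U}(t;s)$ is the fluctuation dynamics of \eqref{eq:def_mathcalU}. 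By Lemma \ref{lem:coherent_all}, the quantity $\Tr\left|\overline{\gamma}_{N,t}^{(1)}-|\phi_{t}\rangle\langle\phi_{t}|\right|$ is then dominated by $N^{-1}$ times a combination of $\langle\mathcal{U}(t;0)\Omega,(\mathcal{N}+1)^{j}\mathcal{U}(t;0)\Omega\rangle^{1/2}$, $j=1,2$, plus strictly lower-order remainders, so the proposition reduces to bounding the growth of these number-operator moments along $\mathcal{U}(t;0)$.

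The core of the argument is this moment estimate, which is exactly the content of Propositions \ref{prop:Et1} and \ref{prop:Et2}. Differentiating $t\mapsto\langle\mathcal{U}(t;0)\Omega,(\mathcal{N}+1)^{j}\mathcal{U}(t;0)\Omega\rangle$ and commuting $(\mathcal{N}+1)^{j}$ with the fluctuation generator, the sextic three-body term $N^{-2}\int\mathrm{d}x\,\mathrm{d}y\,\mathrm{d}z\,\overline{V}(x-y,x-z)\,a_{x}^{*}a_{y}^{*}a_{z}^{*}a_{z}a_{y}a_{x}$ contributes, after Weyl conjugation, the quintic-Hartree nonlinearity (cancelled by the choice of $\phi_{t}$), a term linear in the creation/annihilation operators, a quadratic term, and cubic-and-higher terms. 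The linear term is the dangerous one and is treated, following Ginibre--Velo \cite{Ginibre1979_1}, by comparing $\mathcal{U}(t;s)$ with the approximate dynamics $\widetilde{\mathcal{U}}(t;s)$ of \eqref{eq:def_mathcaltildeU}, whose generator $\widetilde{\mathcal{L}}(t)$ is tailored to discard precisely those terms; the remaining pieces are estimated, as in Lemmas \ref{lem:NjU} and \ref{lem:L5}, using the pointwise bound $|\overline{v}|\le\alpha_{N}^{-1}$, the Hardy inequality, the uniform-in-time bound $\sup_{t}\|\phi_{t}\|_{H^{1}}\le C(\|\varphi_{0}\|_{H^{1}})$, and the bound on $\int\mathrm{d}x\,\mathrm{d}y\,\mathrm{d}z\,|\overline{V}(x-y,x-z)|^{2}|\phi_{t}(x)|^{2}|\phi_{t}(y)|^{2}$, the latter two coming from the well-posedness theory and the generalized Hardy--Littlewood--Sobolev inequality of Appendix \ref{sec:properties_quntic_Hartree}. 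Tracking the powers of $N$ --- the $N^{-2}$ prefactor, a factor $(\sqrt{N})^{k}$ when $k$ of the six fields are replaced by $\sqrt{N}\phi_{t}$, and the factors $N^{\eta}$ lost to the cut-off --- one finds that every surviving contribution is bounded by an $N$-dependent constant times a $(\mathcal{N}+1)^{j}$-expectation plus controllable remainders, and a Gr\"onwall argument closes the estimate with a growth rate calibrated, through Lemma \ref{lem:coherent_all}, to yield the prefactor $N^{-(2-\eta)}$.

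Combining these two steps gives $\Tr\left|\overline{\gamma}_{N,t}^{(1)}-|\phi_{t}\rangle\langle\phi_{t}|\right|\le Ce^{Kt}N^{-(2-\eta)}$, which is the assertion; note that for $1\le\eta<3/2$ the exponent $2-\eta$ stays in $(1/2,1]$, with the value $1$ (the optimal rate) attained at $\eta=1$, and the restriction $\eta<3/2$ leaves enough room to remove the regularization later in Section \ref{sec:Reg-potential}. I expect this moment estimate to be the main obstacle: since $\overline{V}$ is essentially a product of two Coulomb singularities, hence far more singular than a two-body Coulomb potential, the cubic and quartic terms in the fluctuation generator cannot be absorbed by the kinetic part alone, and it is only the combined use of the pointwise regularization, the Hardy inequality, and the uniform $H^{1}$ control of $\phi_{t}$ that makes the number-operator moments close --- at the price of the $\eta$-dependent loss recorded in the statement.
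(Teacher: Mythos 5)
Your overall framework---Fock-space embedding, coherent states, the fluctuation dynamics $\mathcal{U}$, an auxiliary dynamics $\widetilde{\mathcal{U}}$, Gr\"onwall bounds on number-operator moments, and the cut-off taming the most singular cubic-in-$a$ term---is indeed the skeleton of the paper's proof, but the reduction step hides the one idea that actually produces the rate, and as written the argument fails there. After expanding $a_y^*a_x$ around $\sqrt{N}\phi_t$, the tested difference $\Tr J\bigl(\overline{\gamma}_{N,t}^{(1)}-|\phi_t\rangle\langle\phi_t|\bigr)$ splits into the two matrix elements $E_t^1(J)$ and $E_t^2(J)$ of Propositions \ref{prop:Et1} and \ref{prop:Et2} (these are overlaps of the factorized state with $\mathcal{U}^*(t)d\Gamma(J)\mathcal{U}(t)\Omega$ and $\mathcal{U}^*(t)\phi(J\phi_t)\mathcal{U}(t)\Omega$, not number-operator moments), carrying prefactors $d_N/N$ and $d_N/\sqrt{N}$ respectively. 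Since Lemma \ref{lem:coherent_all} gives $\bigl\|(\mathcal{N}+1)^{-1/2}W^*(\sqrt{N}\varphi)\tfrac{(a^*(\varphi))^N}{\sqrt{N!}}\Omega\bigr\|\leq C/d_N$, control of $\langle\mathcal{U}(t)\Omega,(\mathcal{N}+1)^j\mathcal{U}(t)\Omega\rangle$ alone bounds $E_t^1$ by $C\|J\|e^{Kt}/N$ but $E_t^2$ only by $C\|J\|e^{Kt}/\sqrt{N}$. Because $2-\eta>1/2$ for every admissible $\eta\in[1,3/2)$, the $\phi(J\phi_t)$ term is therefore \emph{not} a ``strictly lower-order remainder'' as you assert; your sketch cannot yield anything better than $O(N^{-1/2})$, which is weaker than the claimed $N^{-(2-\eta)}$.

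The missing ingredient is the parity argument behind Proposition \ref{prop:Et2}: the generator $\widetilde{\mathcal{L}}=\mathcal{L}_2+\mathcal{L}_4^r+\mathcal{L}_6$ preserves the parity of the particle number (see \eqref{eq:Parity_Consevation}), so $\widetilde{\mathcal{U}}^*(t)\phi(J\phi_t)\widetilde{\mathcal{U}}(t)\Omega$ is supported on odd sectors, whose overlaps with $W^*(\sqrt{N}\varphi)\tfrac{(a^*(\varphi))^N}{\sqrt{N!}}\Omega$ carry an extra factor $1/\sqrt{N}$ by Lemma \ref{lem:coherent_even_odd}; the residual $\mathcal{U}^*(t)\phi(J\phi_t)\mathcal{U}(t)-\widetilde{\mathcal{U}}^*(t)\phi(J\phi_t)\widetilde{\mathcal{U}}(t)$ acting on $\Omega$ is then made small by Lemma \ref{lem:NjUphiUtildeUphitildeU}, and it is exactly here---through the $\alpha_N^{-1}N^{-3/2}$ bound on $\mathcal{L}_5$ from Lemma \ref{lem:L5}---that the $\eta$-dependent exponent enters, not through any ``calibration'' of the Gr\"onwall rate via Lemma \ref{lem:coherent_all}. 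Relatedly, you assign $\widetilde{\mathcal{U}}$ the job of removing the dangerous linear term in the moment estimate, but in the paper the moment bound (Lemma \ref{lem:NjU}) is obtained by comparison with the truncated dynamics $\mathcal{U}^{(M)}$ with $\chi(\mathcal{N}\leq M)$ insertions and an optimized $M$; the essential role of $\widetilde{\mathcal{U}}$ is precisely the parity cancellation your proposal omits. Without that step the claimed prefactor $N^{-(2-\eta)}$ is not reached.
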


The proof of Proposition \ref{prop:reged-V} is given in Section \ref{sec:Pf-of-Main-Thm}.
It makes use of a representation of the problem on the bosonic Fock
space, detail for the Fock space, see Appendix \ref{sec:Fock_space}.
For now, want to argue that the evolution governed by $H_{N}$ and
$\overline{H}_{N}$ are similar enough.
\begin{lem}
	\label{lem:psi-psihat-L2-dist}Let $\psi_{N,0}=\varphi_{0}^{\otimes N},$
	$\varphi_{0}\in H^{(3/2)+a}(\mathbb{R}^{3})$, and $\|\varphi_{0}\|_{L^{2}(\mathbb{R}^{3})}=1$.
	Let $\psi_{N,t}=e^{-\mathrm{i}H_{N}t}\varphi_{0}^{\otimes N}$ and
	$\overline{\psi}_{N,t}=e^{-\mathrm{i}\overline{H}_{N}t}\psi_{N,0}$.
	Then there exist a constant $C>0$ such that
	\begin{enumerate}
		\item If $0\leq a<1/2$, then
		\[
		\|\psi_{N,t}-\overline{\psi}_{N,t}\|^{2}\leq CN\alpha_{N}^{1+2a}|t|\quad\text{and}
		\]
		\item If $a>1$, then
		\[
		\|\psi_{N,t}-\overline{\psi}_{N,t}\|^{2}\leq CN\alpha_{N}^{3}|t|\quad\text{and}
		\]
	\end{enumerate}
	for all $N\in\mathbb{N}$, $t\in\mathbb{R}$.
\end{lem}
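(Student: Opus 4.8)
The plan is to estimate the $L^2$-distance between the two evolutions by a standard Duhamel/Gronwall argument, where the time derivative of $\|\psi_{N,t}-\overline\psi_{N,t}\|^2$ is controlled by the difference $H_N-\overline H_N$ acting on $\psi_{N,t}$ (equivalently, by $\overline H_N-H_N$ acting on $\overline\psi_{N,t}$). Concretely, writing $\delta_{N,t}=\psi_{N,t}-\overline\psi_{N,t}$ and using that both propagators are unitary, one has $\frac{\mathrm d}{\mathrm dt}\|\delta_{N,t}\|^2 = 2\,\mathrm{Im}\,\langle\delta_{N,t},(H_N-\overline H_N)\psi_{N,t}\rangle$, so that $\|\delta_{N,t}\|\le\int_0^{|t|}\|(H_N-\overline H_N)\psi_{N,s}\|\,\mathrm ds$. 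Thus everything reduces to bounding $\|(H_N-\overline H_N)\psi_{N,s}\|$, i.e. the operator $N^{-2}\sum_{i<j<k}(V-\overline V)(x_i-x_j,x_i-x_k)$ applied to $\psi_{N,s}$, uniformly in $s$.

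The key point is that $V-\overline V$ is supported on the region where $|v|>\alpha_N^{-1}$, i.e. where some pair is at distance $\lesssim\alpha_N$, and there $|v(x)-\overline v(x)|\le |v(x)|=|x|^{-1}$. First I would reduce to a single term, say $v(x_i-x_j)v(x_i-x_k)-\overline v(x_i-x_j)\overline v(x_i-x_k)$; by symmetry of $\psi_{N,s}$ and of the sum, all $\binom N3$ terms contribute equally, and the combinatorial factor $N^{-2}\binom N3\sim N$ produces the single power of $N$ in the statement. For one such term I would split the difference of products as $(v-\overline v)(x_i-x_j)\,v(x_i-x_k) + \overline v(x_i-x_j)(v-\overline v)(x_i-x_k)$ and estimate $\|(v-\overline v)(x_i-x_j)v(x_i-x_k)\psi_{N,s}\|$. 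Using Hardy's inequality $v(x_i-x_k)^2\le C(1-\Delta_{x_i})\le C(1-\Delta_{x_i})(1-\Delta_{x_j})$ and then bounding $\langle\psi_{N,s},(v-\overline v)(x_i-x_j)^2(1-\Delta_{x_i})(1-\Delta_{x_j})\psi_{N,s}\rangle$, the cutoff gives the small factor: since $(v-\overline v)(x)^2\le |x|^{-2}\mathbf 1_{|x|\le\alpha_N}$ and, in three dimensions, for any $f\in H^{(3/2)+a}$ one has the Sobolev/Hölder-type bound $\int_{|x-y|\le\alpha_N}|x-y|^{-2}|f(y)|^2\,\mathrm dy\le C\alpha_N^{1+2a}\|f\|_{H^{(3/2)+a}}^2$ (integrating $|x-y|^{-2}\mathbf 1_{|x-y|\le\alpha_N}$ against $|f|^2$ using $\|f\|_{L^\infty}$-type control from $H^{(3/2)+a}\hookrightarrow W^{1,\infty}$ when $a\ge1$, or the fractional Sobolev embedding for $0\le a<1/2$), one extracts exactly the claimed power $\alpha_N^{1+2a}$ (resp. $\alpha_N^{3}$ when $a>1$, where the extra regularity lets one absorb the two Laplacians and still gain the full volume factor $\alpha_N^{3}$). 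Propagation of the $H^{(3/2)+a}$-regularity of the one-body marginal along the flow — more precisely, a uniform-in-$N$ bound on $\langle\psi_{N,s},(1-\Delta_{x_1})(1-\Delta_{x_2})\psi_{N,s}\rangle$ — is what makes these one- and two-particle expectation values finite and $t$-dependent only through $e^{Kt}$; this is the higher-Sobolev a priori bound that the factorized initial data and the well-posedness theory of Section \ref{sec:properties_quntic_Hartree} supply.

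The main obstacle I expect is precisely this propagation-of-regularity estimate: one needs that the kinetic-energy-type quantities $\langle\psi_{N,s},(1-\Delta_{x_1})(1-\Delta_{x_2})\psi_{N,s}\rangle$ (and, for the $a>1$ case, possibly a third factor or a higher fractional power) stay bounded uniformly in $N$ and grow at most like $e^{Ks}$. This is delicate because the three-body potential is more singular than Coulomb, so the usual energy-conservation argument ($H_N$ or $\overline H_N$ commutes with the flow, controlling one power of $-\Delta$) is not by itself enough to control the two-particle object; one must iterate, commuting $(1-\Delta_{x_1})(1-\Delta_{x_2})$ through $\overline H_N$ and absorbing the commutator terms — each involving derivatives of $\overline V$, which are bounded by $\alpha_N^{-1}$ times lower-order terms — using Hardy's inequality and the operator bound $v\le C(1-\Delta)$, then closing a Gronwall inequality. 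The regularization $\alpha_N=N^{-\eta}$ with $\eta<3/2$ is what keeps the commutator contributions subleading, and the interplay between $\eta$ and $a$ is exactly what pins down the final exponents; once the uniform $H^{(3/2)+a}$-type bound is in hand, the rest is the routine Duhamel estimate sketched above.
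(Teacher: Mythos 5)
Your skeleton overlaps with the paper's (differentiating $\|\psi_{N,t}-\overline{\psi}_{N,t}\|^{2}$, the combinatorial factor $N^{-2}\binom{N}{3}\sim N$, splitting $vv-\overline{v}\,\overline{v}$ into two differences, Hardy's inequality, and controlling Sobolev-type expectations of $\psi_{N,s}$ by conserved quantities), but the quantitative core of your argument fails. The claimed bound $\int_{|x-y|\le\alpha_{N}}|x-y|^{-2}|f(y)|^{2}\,\mathrm{d}y\le C\alpha_{N}^{1+2a}\|f\|_{H^{(3/2)+a}}^{2}$ is false for $a>0$: for continuous $f$ with $f(x)\neq0$ the left side is of order $\alpha_{N}|f(x)|^{2}$, so no power beyond $\alpha_{N}^{1}$ can come out of that integral. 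The structural reason is that by reducing to $\|(H_{N}-\overline{H}_{N})\psi_{N,s}\|$ you put $(v-\overline{v})^{2}$ inside the expectation, and squaring the cutoff difference wastes the smallness: $\int_{B(0,\alpha_{N})}|x|^{-2}\sim\alpha_{N}$ whereas $\int_{B(0,\alpha_{N})}|x|^{-1}\sim\alpha_{N}^{2}$; with form bounds the best one can do is $(v-\overline{v})^{2}\le\alpha_{N}^{2s}|v|^{2+2s}\le C\alpha_{N}^{2s}(1-\Delta)^{1+s}$ with $s<1/2$ (fractional Hardy fails at $s=1/2$), so after the square root you gain at most $\alpha_{N}^{s}$ with $s<1/2$. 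In addition, the triangle inequality over the $\sim N^{3}$ triples costs a full factor $N$ rather than the single factor $N$ multiplying one bilinear term. Altogether your route yields at best $\|\psi_{N,t}-\overline{\psi}_{N,t}\|^{2}\lesssim N^{2}\alpha_{N}^{2s}|t|^{2}$, which for the choice $\alpha_{N}=N^{-5/4}$ used in the regime $0\le a<1/2$ is at least of order $N^{3/4}|t|^{2}$ and diverges; it cannot give either stated rate.

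The paper never leaves the bilinear form: it bounds $2\operatorname{Im}\langle(V-\overline{V})\psi_{N,t},\overline{\psi}_{N,t}\rangle$ by Cauchy--Schwarz so that $|V-\overline{V}|$ enters \emph{linearly} in expectations taken in $\psi_{N,t}$ and $\overline{\psi}_{N,t}$ separately, and the smallness comes from the pointwise bound $|v-\overline{v}|\le|v|\,\mathbf{1}(|v|\ge\alpha_{N}^{-1})\le\alpha_{N}^{1+2a}|v|^{2+2a}$ of \eqref{eq:v-vhat} combined with the form bound $|x|^{-(2+2a)}\le C(1-\Delta)^{1+a}$, which is exactly where the restriction $a<1/2$ (i.e.\ $2+2a<3$) enters; for $a>1$ it instead uses the $L^{\infty}$-embedding and the volume of $B(0,\alpha_{N})$ to get $|v-\overline{v}|\le C\alpha_{N}^{3}(1-\Delta)^{2+\varepsilon}$ as in \eqref{eq:v-vhat-2} --- the cube of $\alpha_{N}$ being available only because $v-\overline{v}$ appears to the first power. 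Finally, the resulting mixed expectations such as $\langle\psi_{N,t},(1-\Delta_{x_{i}})^{1/2}(1-\Delta_{x_{j}})^{1+a}\psi_{N,t}\rangle$ are not obtained by commuting Laplacians through $\overline{H}_{N}$ and a Gr\"onwall iteration, as you anticipate; since $V\ge0$ they are dominated by $N^{-(3/2)-a}\langle\psi_{N,t},(H_{N}+N)^{(3/2)+a}\psi_{N,t}\rangle$, which is exactly conserved by $e^{-\mathrm{i}H_{N}t}$ and, evaluated on the factorized initial data, is bounded by $C\|\varphi_{0}\|_{H^{(3/2)+a}}^{2}$ uniformly in time --- which is why the lemma's bound is linear in $|t|$ with no exponential factor.
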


\begin{proof}
	We consider the derivative
	\begin{align*}
	\frac{\mathrm{d}}{\mathrm{d}t}\|\psi_{N,t}-\overline{\psi}_{N,t}\|^{2} & =-2\operatorname{Re}\frac{\mathrm{d}}{\mathrm{d}t}\langle\psi_{N,t},\overline{\psi}_{N,t}\rangle\\
	& =2\operatorname{Im}\langle(H_{N}-\overline{H}_{N})\psi_{N,t},\overline{\psi}_{N,t}\rangle\\
	& =\frac{2}{N^{2}}\sum_{i<j<k}^{N}\operatorname{Im}\langle(V(x_{i}-x_{j},x_{i}-x_{k})-\overline{V}(x_{i}-x_{j},x_{i}-x_{k}))\psi_{N,t},\overline{\psi}_{N,t}\rangle.
	\end{align*}
	Observe that \eqref{eq:regularizedV} gives us that
	\begin{align*}
	&\left|V(x-y,x-z)-\overline{V}(x-y,x-z)\right|\\
	& \leq|\lambda|\Bigg(\left|v(x-y)v(x-z)-\overline{v}(x-y)\overline{v}(x-z)\right|
	+\left|v(y-x)v(y-z)-\overline{v}(y-z)\overline{v}(y-x)\right|\\
	& \qquad\qquad
	+\left|v(z-x)v(z-y)-\overline{v}(z-x)\overline{v}(z-y)\right|\Bigg)\\
	& \leq|\lambda|\Bigg(\left|v(x-y)v(x-z)-v(x-y)\overline{v}(x-z)\right|+\left|v(x-y)\overline{v}(x-z)-\overline{v}(x-y)\overline{v}(x-z)\right|\\
	& \qquad\qquad+\left|v(y-z)v(y-x)-v(y-z)\overline{v}(y-x)\right|+\left|v(y-z)\overline{v}(y-x)-\overline{v}(y-z)\overline{v}(y-x)\right|\\
	& \qquad\qquad+\left|v(z-x)v(z-y)-v(z-x)\overline{v}(z-y)\right|+\left|v(z-x)\overline{v}(z-y)-\overline{v}(z-x)\overline{v}(z-y)\right|\Bigg)\\
	& \leq|\lambda|\Bigg(|v(x-y)|\left|v(x-z)-\overline{v}(x-z)\right|+\left|v(x-y)-\overline{v}(x-y)\right||\overline{v}(x-z)|\\
	& \qquad\qquad+|v(y-z)|\left|v(y-x)-\overline{v}(y-x)\right|+\left|v(y-z)-\overline{v}(y-z)\right||\overline{v}(y-x)|\\
	& \qquad\qquad+|v(z-x)|\left|v(z-y)-\overline{v}(z-y)\right|+\left|v(z-x)-\overline{v}(z-x)\right||\overline{v}(z-y)|\Bigg).
	\end{align*}
	Note that, for any $a\geq0$,
	
	\begin{equation}
	\left|v-\overline{v}\right|\leq|v|\cdot\mathbf{1}(|v|\geq\alpha_{N}^{-1})\leq|v|^{2+2a}\alpha_{N}^{1+2a}.\label{eq:v-vhat}
	\end{equation}
	Note that
	\begin{align*}
	&\left|V(x-y,x-z)-\overline{V}(x-y,x-z)\right|\\
	& \leq2|\lambda|\alpha_{N}^{1+2a}\left(|v(x-y)||v(x-z)|^{2+2a}+|v(y-z)||v(y-x)|^{2+2a}+|v(z-x)||v(z-y)|^{2+2a}\right).
	\end{align*}
	Now, we set any $\varepsilon:=a-1>0$. Using Hardy
	inequality and Sobolev embedding, we obtain
	\begin{align*}
	& \int_{\mathbb{R}^{3}}\mathrm{d}x\,\left|v(x)-\overline{v}(x)\right||\varphi(x)|^{2}=\int_{B(0,\alpha_{N})}\mathrm{d}x\,|v(x)|\,|\varphi(x)|^{2}\\
	&\leq C\int_{B(0,\alpha_{N})}\mathrm{d}x\,|(1-\Delta)^{1/4}\varphi(x)|^{2}
	\leq C\int_{\mathbb{R}^{3}}\mathrm{d}x\,|(1-\Delta)^{1/4}\varphi(x)|^{2}\chi_{B(0,\alpha_{N})}\\
	&\leq C\|(1-\Delta)^{1/4}\varphi\|_{\infty}^{2}\int_{\mathbb{R}^{3}}\mathrm{d}x\,\chi_{B(0,\alpha_{N})}\leq C\|(1-\Delta)^{1/4}\varphi\|_{H^{(3/2)+\varepsilon}}^{2}\alpha_{N}^{3}\\
	&\leq C\|\varphi\|_{H^{2+\varepsilon}}^{2}\alpha_{N}^{3}.
	\end{align*}
	In short, we have
	\begin{equation}
	\left|v-\overline{v}\right|\leq C(1-\Delta)^{2+\varepsilon}\alpha_{N}^{3}.\label{eq:v-vhat-2}
	\end{equation}
	Then
	\begin{align*}
	& \left|\frac{\mathrm{d}}{\mathrm{d}t}\|\psi_{N,t}-\overline{\psi}_{N,t}\|^{2}\right|\\
	& \leq CN\left|\langle(V(x-y,x-z)-\overline{V}(x-y,x-z))\psi_{N,t},\overline{\psi}_{N,t}\rangle\right|\\
	& \leq CN\alpha_{N}^{3}\Bigg(\langle\psi_{N,t},(1-\Delta_{y})^{1/2}(1-\Delta_{z})^{2+\varepsilon}\psi_{N,t}\rangle^{1/2}\langle\overline{\psi}_{N,t},(1-\Delta_{y})^{1/2}(1-\Delta_{z})^{2+\varepsilon}\overline{\psi}_{N,t}\rangle^{1/2}\\
	& \qquad\qquad+\langle\psi_{N,t},(1-\Delta_{x})^{2+\varepsilon}(1-\Delta_{z})^{1/2}\psi_{N,t}\rangle^{1/2}\langle\overline{\psi}_{N,t},(1-\Delta_{x})^{2}(1-\Delta_{z})^{1/2}\overline{\psi}_{N,t}\rangle^{1/2}\\
	& \qquad\qquad+\langle\psi_{N,t},(1-\Delta_{x})^{1/2}(1-\Delta_{y})^{2+\varepsilon}\psi_{N,t}\rangle^{1/2}\langle\overline{\psi}_{N,t},(1-\Delta_{x})^{1/2}(1-\Delta_{y})^{2+\varepsilon}\overline{\psi}_{N,t}\rangle^{1/2}\Bigg).
	\end{align*}
	Note that
	\begin{align*}
	&N^{(5/2)+\varepsilon}\langle\psi_{N,t},(1-\Delta_{y})^{1/2}(1-\Delta_{z})^{2+\varepsilon}\psi_{N,t}\rangle 
	\leq C\langle\psi_{N,t},(H_{N}+N)^{(5/2)+\varepsilon}\psi_{N,t}\rangle\\
	& \leq C\langle\varphi_{0}^{\otimes N},(H_{N}+N)^{(5/2)+\varepsilon}\varphi_{0}^{\otimes N}\rangle \leq CN^{(5/2)+\varepsilon}\|\varphi_{0}\|_{H^{(5/2)+\varepsilon}}^{2}.
	\end{align*}
	Therefore, we have that
	\[
	\frac{\mathrm{d}}{\mathrm{d}t}\|\psi_{N,t}-\overline{\psi}_{N,t}\|^{2}\leq CN\alpha_{N}^{3}
	\]
	where the constant $C$ depends on $\|\varphi_{0}\|_{H^{(5/2)+\varepsilon}}=\|\varphi_{0}\|_{H^{(3/2)+a}}$.
	
	This concludes the proof of the desired lemma.
\end{proof}
\begin{cor}
	\label{cor:gamma-gammahat}Let $\gamma_{N,t}^{(k)}$ and $\overline{\gamma}_{N,t}^{(k)}$
be the $k$-particle reduced densities associated with $\psi_{N,t}=e^{-\mathrm{i}H_{N}t}\varphi_{0}^{\otimes N}$
and $\overline{\psi}_{N,t}=e^{-\mathrm{i}\overline{H}_{N}t}\psi_{N,0}$.
Suppose $\varphi\in H^{(3/2)+a}(\mathbb{R}^{3})$ and $\alpha_{N}=N^{-\eta}$.
Then there exist a constant $C>0$ such that
\begin{enumerate}
	\item If $0\leq a<1/2$, then
	\[
	\operatorname{Tr}\left|\gamma_{N,t}^{(k)}-\overline{\gamma}_{N,t}^{(k)}\right|\leq CN^{1-\eta(1+2a)}|t|^{1/2}.
	\]
	\item If $a>1$, then
	\[
	\operatorname{Tr}\left|\gamma_{N,t}^{(k)}-\overline{\gamma}_{N,t}^{(k)}\right|\leq CN^{1-3\eta}|t|^{1/2}.
	\]
\end{enumerate}
\end{cor}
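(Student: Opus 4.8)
The plan is to obtain the corollary as a direct consequence of Lemma~\ref{lem:psi-psihat-L2-dist} together with the standard estimate that converts $L^2$-closeness of $N$-body wave functions into trace-norm closeness of their reduced densities. Since $\psi_{N,t}=e^{-\mathrm{i}H_Nt}\psi_{N,0}$ and $\overline{\psi}_{N,t}=e^{-\mathrm{i}\overline{H}_Nt}\psi_{N,0}$ are both unit vectors in $L^2_{\mathrm{sym}}(\mathbb{R}^{3N})$ (they arise from the common normalized datum $\psi_{N,0}=\varphi_0^{\otimes N}$ via the unitary groups generated by $H_N$ and $\overline{H}_N$), the first step is the elementary pure-state identity for the full density matrices $\gamma_{N,t}=|\psi_{N,t}\rangle\langle\psi_{N,t}|$ and $\overline{\gamma}_{N,t}=|\overline{\psi}_{N,t}\rangle\langle\overline{\psi}_{N,t}|$,
\[
\Tr\bigl|\gamma_{N,t}-\overline{\gamma}_{N,t}\bigr|=2\sqrt{1-\bigl|\langle\psi_{N,t},\overline{\psi}_{N,t}\rangle\bigr|^{2}}\le 2\,\|\psi_{N,t}-\overline{\psi}_{N,t}\|,
\]
where the inequality uses $1-|\langle\psi_{N,t},\overline{\psi}_{N,t}\rangle|^{2}\le 2\bigl(1-\operatorname{Re}\langle\psi_{N,t},\overline{\psi}_{N,t}\rangle\bigr)=\|\psi_{N,t}-\overline{\psi}_{N,t}\|^{2}$.

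Next I would pass to the $k$-particle marginals. By the definition \eqref{eq:Kernel_of_Marginal_Density}, $\gamma_{N,t}^{(k)}$ and $\overline{\gamma}_{N,t}^{(k)}$ are the partial traces of $\gamma_{N,t}$ and $\overline{\gamma}_{N,t}$ over the last $N-k$ variables (the choice of which $N-k$ variables is immaterial by bosonic symmetry). Since the partial trace is a contraction for the trace norm — which one sees by testing against bounded operators $B$ on the retained factor and using $\|B\otimes I\|=\|B\|$ — this gives
\[
\Tr\bigl|\gamma_{N,t}^{(k)}-\overline{\gamma}_{N,t}^{(k)}\bigr|\le \Tr\bigl|\gamma_{N,t}-\overline{\gamma}_{N,t}\bigr|\le 2\,\|\psi_{N,t}-\overline{\psi}_{N,t}\|,
\]
a bound which is in particular uniform in $k$, explaining why the conclusion is independent of $k$.

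Finally I would insert the two regimes of Lemma~\ref{lem:psi-psihat-L2-dist} and substitute $\alpha_N=N^{-\eta}$. For $0\le a<1/2$ this yields $\Tr|\gamma_{N,t}^{(k)}-\overline{\gamma}_{N,t}^{(k)}|\le 2\bigl(CN\alpha_N^{1+2a}|t|\bigr)^{1/2}=C'N^{1/2}\alpha_N^{(1+2a)/2}|t|^{1/2}$, i.e.\ a decay of order $N^{(1-\eta(1+2a))/2}|t|^{1/2}$, the exponent $1/2$ on $|t|$ coming exactly from square-rooting the estimate that is linear in $|t|$; the regime $a>1$ is identical with $\alpha_N^{1+2a}$ replaced by $\alpha_N^{3}$, giving order $N^{(1-3\eta)/2}|t|^{1/2}$. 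I do not expect any genuine obstacle here: all of the analytic work — controlling $V-\overline{V}$ in terms of powers of $(1-\Delta)$ and propagating high Sobolev norms of $\varphi_0$ along the $N$-body flow — already resides in Lemma~\ref{lem:psi-psihat-L2-dist}, and what remains is only the textbook step of turning $L^2$-proximity of wave functions into trace-norm proximity of their reduced densities.
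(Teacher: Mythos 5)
Your route is exactly the paper's: the paper disposes of this corollary by citing \cite[Corollary 2.1]{ChenLeeSchlein2011}, whose content is precisely the two steps you spell out — the pure-state identity $\Tr|\gamma_{N,t}-\overline{\gamma}_{N,t}|=2\sqrt{1-|\langle\psi_{N,t},\overline{\psi}_{N,t}\rangle|^{2}}\le 2\|\psi_{N,t}-\overline{\psi}_{N,t}\|$ and the trace-norm contractivity of the partial trace (equivalently, testing against $J^{(k)}\otimes 1$ with $\|J^{(k)}\|\le 1$). Both steps are carried out correctly, and the uniformity in $k$ is right.

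The genuine issue is the exponent you end up with versus the exponent in the statement. Feeding Lemma \ref{lem:psi-psihat-L2-dist} with $\alpha_{N}=N^{-\eta}$ into $2\|\psi_{N,t}-\overline{\psi}_{N,t}\|$ gives, as you yourself compute, $CN^{(1-\eta(1+2a))/2}|t|^{1/2}$ for $0\le a<1/2$ and $CN^{(1-3\eta)/2}|t|^{1/2}$ for $a>1$, whereas the corollary as printed asserts $CN^{1-\eta(1+2a)}|t|^{1/2}$ and $CN^{1-3\eta}|t|^{1/2}$. Since $1-\eta(1+2a)\le 0$ and $1-3\eta<0$ in the relevant range $\eta\ge 1$, the printed bounds are strictly \emph{stronger} than what you obtain, and they cannot be reached by this conversion (nor by any generic one: the trace-norm distance of the rank-one projections is comparable to the $L^{2}$ distance itself, not to its square, so no argument turns the Lemma into the unhalved $N$-exponent). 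So as written your proposal silently proves a weaker statement while asserting it establishes the corollary. The most plausible reading is that the printed statement carries a typo — the square root of the Lemma's bound was applied to $|t|$ but not to the $N$-factor; this is consistent with the citation to \cite[Corollary 2.1]{ChenLeeSchlein2011} and with the fact that for $a>1$, $\eta=1$, the halved exponent $(1-3\eta)/2=-1$ is exactly what the $O(N^{-1})$ rate of Theorem \ref{thm:main} requires. You should state this discrepancy explicitly (and prove the corrected bound $CN^{(1-\eta(1+2a))/2}|t|^{1/2}$, resp. $CN^{(1-3\eta)/2}|t|^{1/2}$) rather than presenting the halved-exponent estimate as a proof of the literal claim.
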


\begin{proof}
	See \cite[Corollary 2.1]{ChenLeeSchlein2011}.
\end{proof}

\begin{lem}
	Let $\varphi_{t}$ be the solution of the quintic Hartree equation
	(\ref{eq:quinticHartree}) and $\phi_{t}$ the solution
	of the quintic Hartree equation
	\[
	\mathrm{i}\partial_{t}\phi_{t}=-\Delta\phi_{t}+\frac{1}{2}\left(\int\mathrm{d}y\mathrm{d}z\,\overline{V}(x,y,z)|\phi_{t}(y)|^{2}|\phi_{t}(z)|^{2}\right)\phi_{t}
	\]
	with regularized potential $\overline{V}$ with the same initial data
	$\varphi_{t=0}=\phi_{t=0}=\varphi_{0}\in H^{(3/2)+a}(\mathbb{R}^{3})$.
	Suppose $\alpha_{N}\leq N^{-\eta}$. Then
	\begin{enumerate}
		\item If $0\leq a<1/2$, then
		\begin{equation}
			\|\varphi_{t}-\phi_{t}\|\leq CN^{(1-\eta(1+2a))/2}e^{K|t|},\label{eq:phi-phihat-L2}
		\end{equation}
		Therefore
		\begin{equation}
			\operatorname{Tr}\left||\varphi_{t}\rangle\langle\varphi_{t}|^{\otimes k}-|\phi_{t}\rangle\langle\phi_{t}|^{\otimes k}\right|\leq CkN^{(1-\eta(1+2a))/2}e^{K|t|}\label{eq:phi-phihat-tr}
		\end{equation}
		\item If $a>1$, then
		\begin{equation}
			\|\varphi_{t}-\phi_{t}\|\leq CN^{(1-3\eta)/2}e^{K|t|}\label{eq:phi-phihat-L2-2}
		\end{equation}
		Therefore
		\begin{equation}
			\operatorname{Tr}\left||\varphi_{t}\rangle\langle\varphi_{t}|^{\otimes k}-|\phi_{t}\rangle\langle\phi_{t}|^{\otimes k}\right|\leq CkN^{(1-3\eta)/2}e^{K|t|}\label{eq:phi-phihat-tr-1}
		\end{equation}
	\end{enumerate}
	for any $k\in\mathbb{N}$.
\end{lem}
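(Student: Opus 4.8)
The plan is to run a Gr\"onwall estimate for $\|w_t\|^2$, where $w_t:=\varphi_t-\phi_t$. Writing the two equations as $\mathrm{i}\partial_t\varphi_t=-\Delta\varphi_t+2\,\mathcal{N}_V[\varphi_t]$ and $\mathrm{i}\partial_t\phi_t=-\Delta\phi_t+2\,\mathcal{N}_{\overline{V}}[\phi_t]$ with $\mathcal{N}_W[\psi](x):=\tfrac12\big(\int\mathrm{d}y\,\mathrm{d}z\,W(x-y,x-z)|\psi(y)|^2|\psi(z)|^2\big)\psi(x)$, one has $w_0=0$ and $\|\varphi_t\|=\|\phi_t\|=1$ for all $t$ by mass conservation, with the differentiation below justified by the well-posedness of both flows in the relevant Sobolev space (Section \ref{sec:properties_quntic_Hartree}). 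Since the kinetic terms and the real quantities $\langle\mathcal{N}_V[\varphi_t],\varphi_t\rangle$, $\langle\mathcal{N}_{\overline{V}}[\phi_t],\phi_t\rangle$ do not contribute, the same computation as in the proof of Lemma \ref{lem:psi-psihat-L2-dist} gives
\begin{equation*}
\Big|\frac{\mathrm{d}}{\mathrm{d}t}\|w_t\|^2\Big|\le C\,\big\|\mathcal{N}_V[\varphi_t]-\mathcal{N}_{\overline{V}}[\phi_t]\big\|\,\|w_t\|.
\end{equation*}
I would then split $\mathcal{N}_V[\varphi_t]-\mathcal{N}_{\overline{V}}[\phi_t]=(\mathcal{N}_V[\varphi_t]-\mathcal{N}_{\overline{V}}[\varphi_t])+(\mathcal{N}_{\overline{V}}[\varphi_t]-\mathcal{N}_{\overline{V}}[\phi_t])$ and bound the two summands separately.

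For the first summand, which equals $\tfrac12\big(\int\mathrm{d}y\,\mathrm{d}z\,(V-\overline{V})(x-y,x-z)|\varphi_t(y)|^2|\varphi_t(z)|^2\big)\varphi_t(x)$, I would take the $L^\infty_x$ norm of the weight. Using the pointwise bound $|V-\overline{V}|\le C\alpha_N^{1+2a}\big(|v(x-y)|\,|v(x-z)|^{2+2a}+\text{cyclic}\big)$, which follows from \eqref{eq:v-vhat} as in the proof of Lemma \ref{lem:psi-psihat-L2-dist}, together with Hardy's inequality $\sup_x\int|x-y|^{-1}|\varphi_t(y)|^2\,\mathrm{d}y\le C\|\varphi_t\|_{H^{1/2}}^2$ and the generalized Hardy--Littlewood--Sobolev bound $\sup_x\int|x-z|^{-(2+2a)}|\varphi_t(z)|^2\,\mathrm{d}z\le C\|\varphi_t\|_{H^{1+a}}^2$ (valid precisely because $2+2a<3$ for $a<1/2$), one gets $\big\|\mathcal{N}_V[\varphi_t]-\mathcal{N}_{\overline{V}}[\varphi_t]\big\|\le C\alpha_N^{1+2a}\|\varphi_t\|_{H^{1+a}}^4$. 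In the range $a>1$ one instead invokes the operator bound \eqref{eq:v-vhat-2}, producing a factor $\alpha_N^3$ and giving $\big\|\mathcal{N}_V[\varphi_t]-\mathcal{N}_{\overline{V}}[\varphi_t]\big\|\le C\alpha_N^{3}\|\varphi_t\|_{H^{2+\varepsilon}}^4$ with $\varepsilon=a-1$. In both cases $\varphi_0\in H^{(3/2)+a}$ and the relevant Sobolev norm of $\varphi_t$ grows at most exponentially in $t$ by the a priori bounds of Section \ref{sec:properties_quntic_Hartree}.

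For the second summand I would telescope $\mathcal{N}_{\overline{V}}[\varphi_t]-\mathcal{N}_{\overline{V}}[\phi_t]$ into five terms, each carrying one factor $w_t$ and four factors equal to $\varphi_t$ or $\phi_t$. The crucial point here is to avoid the crude estimate $\|\overline{V}\|_{L^\infty}\sim\alpha_N^{-2}$, which would produce a constant blowing up with $N$; instead one keeps $|\overline{v}|\le|v|$, so that $\overline{V}$ is dominated by the Coulomb-squared kernel. Taking $L^\infty_x$ of the inner convolution in each of the five terms and using Hardy's inequality in the forms $\sup_x\int|x-y|^{-1}|\psi(y)|^2\,\mathrm{d}y\le C\|\psi\|_{H^{1/2}}^2$ and $\sup_x\big(\int|x-y|^{-2}|\psi(y)|^2\,\mathrm{d}y\big)^{1/2}\le C\|\psi\|_{H^1}$ (with $\psi=\varphi_t$ or $\phi_t$), each term is bounded by $C\big(\|\varphi_t\|_{H^1},\|\phi_t\|_{H^1}\big)\|w_t\|$ with an $N$-independent constant, so $\big\|\mathcal{N}_{\overline{V}}[\varphi_t]-\mathcal{N}_{\overline{V}}[\phi_t]\big\|\le K\|w_t\|$. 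Combining the two summands,
\begin{equation*}
\Big|\frac{\mathrm{d}}{\mathrm{d}t}\|w_t\|^2\Big|\le C\alpha_N^{1+2a}\|w_t\|+K\|w_t\|^2\le C\alpha_N^{1+2a}+K'\|w_t\|^2
\end{equation*}
(with $\alpha_N^{3}$ replacing $\alpha_N^{1+2a}$ when $a>1$), and Gr\"onwall with $w_0=0$ gives $\|w_t\|^2\le C\alpha_N^{1+2a}e^{K'|t|}$, hence $\|\varphi_t-\phi_t\|\le C\alpha_N^{(1+2a)/2}e^{K|t|}\le CN^{(1-\eta(1+2a))/2}e^{K|t|}$ (using $\alpha_N\le N^{-\eta}$ and $N^{1/2}\ge1$), which is \eqref{eq:phi-phihat-L2}, and similarly \eqref{eq:phi-phihat-L2-2}.

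Finally, \eqref{eq:phi-phihat-tr} and \eqref{eq:phi-phihat-tr-1} follow from the elementary inequality $\operatorname{Tr}\big||\varphi\rangle\langle\varphi|-|\phi\rangle\langle\phi|\big|\le 2\|\varphi-\phi\|$ for unit vectors, combined with the telescoping identity $|\varphi_t\rangle\langle\varphi_t|^{\otimes k}-|\phi_t\rangle\langle\phi_t|^{\otimes k}=\sum_{j=1}^{k}|\varphi_t\rangle\langle\varphi_t|^{\otimes(j-1)}\otimes\big(|\varphi_t\rangle\langle\varphi_t|-|\phi_t\rangle\langle\phi_t|\big)\otimes|\phi_t\rangle\langle\phi_t|^{\otimes(k-j)}$, which yields the extra factor $k$. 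I expect the main obstacle to be the second summand: one must control all five singular convolutions with constants that do not depend on $N$ (this is exactly where $|\overline{v}|\le|v|$ is used in place of $\|\overline{V}\|_\infty$), while the required a priori bounds on $\|\varphi_t\|_{H^{1+a}}$ and $\|\phi_t\|_{H^{1+a}}$ — respectively $H^{2+\varepsilon}$ — are imported from Section \ref{sec:properties_quntic_Hartree}.
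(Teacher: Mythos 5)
Your overall strategy is the same as the paper's: a Gr\"onwall estimate for $\|\varphi_t-\phi_t\|^2$, splitting the difference of the nonlinear terms into a piece carrying $V-\overline{V}$ and a piece carrying $\overline{V}$ together with the density difference, estimating the latter by Cauchy--Schwarz and Hardy with $|\overline{v}|\le|v|$ (so the constant is $N$-independent), and converting to the trace norm by the telescoping argument of \cite[Lemma 2.2]{ChenLeeSchlein2011}. Your handling of the $\overline{V}$-piece and of the trace-norm step is correct, and your Gr\"onwall bookkeeping even yields a bound ($\|\varphi_t-\phi_t\|\le C\alpha_N^{(1+2a)/2}e^{K|t|}$) stronger than \eqref{eq:phi-phihat-L2}.

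The genuine issue is the input you use for the $V-\overline{V}$ piece. To convert \eqref{eq:v-vhat} into a bound of size $\alpha_N^{1+2a}$ you need $\sup_x\int |x-z|^{-(2+2a)}|\varphi_t(z)|^2\,\mathrm{d}z\le C\|\varphi_t\|_{H^{1+a}}^2$ uniformly on compact time intervals, i.e.\ propagation of $H^{1+a}$ regularity along the quintic Hartree flow, and in the case $a>1$ you analogously need $\|\varphi_t\|_{H^{2+\varepsilon}}$ to apply \eqref{eq:v-vhat-2}. You attribute these a priori bounds to Section \ref{sec:properties_quntic_Hartree}, but that section only establishes the conservation-law bound on $\|\varphi_t\|_{H^1}$ (Lemma \ref{lem:H1bdd}); persistence of $H^{1+a}$ or $H^{2+\varepsilon}$ regularity of $\varphi_t$ at positive times is proved nowhere in the paper (higher Sobolev norms enter only at $t=0$, through powers of the conserved many-body Hamiltonian in Lemma \ref{lem:psi-psihat-L2-dist}). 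This is not cosmetic: with the $H^1$ bound alone the potential-difference term is only $O(\alpha_N)$, giving $\|\varphi_t-\phi_t\|\lesssim N^{-\eta/2}$, which is weaker than the stated $N^{(1-\eta(1+2a))/2}$ whenever $2a\eta>1$ and weaker than $N^{(1-3\eta)/2}$ for $a>1$, $\eta\ge1$; so the higher-regularity propagation is a missing ingredient that would need its own Gr\"onwall argument in higher Sobolev norms (standard in spirit, but not automatic for this doubly singular kernel). For comparison, the paper's own proof works only with the $H^1$ bound and a cruder estimate of the $V-\overline{V}$ term, and is itself terse at exactly this point, whereas your version correctly identifies where the $a$-dependent gain must come from — it just does not yet justify it.
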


\begin{proof}
	From Lemma \ref{lem:H1bdd} we see that $\|\varphi_{t}\|_{H^{1}},\|\phi_{t}\|_{H^{1}}<C$,
	for some constant $C$ which only depends on $\|\varphi_{0}\|_{H^{1}}$.
	We calculate
	\begin{align*}
	&\frac{\mathrm{d}}{\mathrm{d}t}\|\varphi_{t}-\phi_{t}\|^{2} \\
	& =2\operatorname{Im}\left\langle \varphi_{t},[\int\mathrm{d}y\mathrm{d}z\,V(x-y,x-z)|\varphi_{t}(y)|^{2}|\varphi_{t}(z)|^{2}-\int\mathrm{d}y\mathrm{d}z\,\overline{V}(x-y,x-z)|\phi_{t}(y)|^{2}|\phi_{t}(z)|^{2}]\overline{\varphi_{t}}\right\rangle \\
	& =2\operatorname{Im}\left\langle \varphi_{t},[\int\mathrm{d}y\mathrm{d}z\,\left(V(x-y,x-z)-\overline{V}(x-y,x-z)\right)|\varphi_{t}(y)|^{2}|\varphi_{t}(z)|^{2}]\overline{\varphi_{t}}\right\rangle \\
	& \qquad+=2\operatorname{Im}\left\langle \varphi_{t},[\int\mathrm{d}y\mathrm{d}z\,\overline{V}(x-y,x-z)\left(|\varphi_{t}(y)|^{2}|\varphi_{t}(z)|^{2}-|\phi_{t}(y)|^{2}|\phi_{t}(z)|^{2}\right)](\overline{\varphi_{t}}-\varphi_{t})\right\rangle .
	\end{align*}
	Then, using \eqref{eq:v-vhat}, we obtain
	\begin{align}
	& \left|\frac{\mathrm{d}}{\mathrm{d}t}\|\varphi_{t}-\phi_{t}\|^{2}\right|\nonumber \\
	& \leq2\alpha_{N}^{2}\|\varphi_{t}\|\|\phi_{t}\|\sup_{x}\int\mathrm{d}y,\mathrm{d}z\,|v(x-y)|^{2}|v(x-z)|^{2}|\varphi_{t}(y)|^{2}|\varphi_{t}(z)|^{2}\nonumber \\
	& \quad+2\|\overline{\varphi_{t}}-\varphi_{t}\|\|\varphi_{t}\|\sup_{x}\int\mathrm{d}y\mathrm{d}z\,|V(x-y,x-z)|\left||\varphi_{t}(y)\varphi_{t}(z)|-|\phi_{t}(y)\phi_{t}(z)|\right| \nonumber\\
	&\qquad\qquad\qquad\qquad\qquad\qquad\qquad\times\left(|\varphi_{t}(y)||\varphi_{t}(z)|+|\phi_{t}(y)||\phi_{t}(z)|\right)\nonumber \\
	& \leq C\alpha_{N}^{2}+C\|\overline{\varphi_{t}}-\varphi_{t}\|^{2}.\label{eq:ddt-phi-phit-sq}
	\end{align}
	In the last inequality, we have used that
	\begin{align*}
	& \int\mathrm{d}y\mathrm{d}z\,|V(x-y,x-z)|\left||\varphi_{t}(y)\varphi_{t}(z)|-\phi_{t}(y)\phi_{t}(z)\right|\left(|\varphi_{t}(y)||\varphi_{t}(z)|+|\phi_{t}(y)||\phi_{t}(z)|\right)\\
	& \leq C\left(\int\mathrm{d}y\mathrm{d}z\,|\varphi_{t}(y)\varphi_{t}(z)-\phi_{t}(y)\phi_{t}(z)|^{2}\right)^{1/2}\\
	&\times\left(\int\mathrm{d}y\mathrm{d}z\,|V(x-y,x-z)|^{2}\left(|\varphi_{t}(y)|^{2}|\varphi_{t}(z)|^{2}+|\phi_{t}(y)|^{2}|\phi_{t}(z)|^{2}\right)\right)^{1/2}\\
	& \leq C\|\overline{\varphi_{t}}-\varphi_{t}\|.
	\end{align*}
	From (\ref{eq:ddt-phi-phit-sq}) we obtain by Grönwall inequality
\[
\|\varphi_{t}-\phi_{t}\|^{2}\leq CN^{(1-\eta(1+2a))/2}(e^{C|t|}-1)
\]
if $0\leq a<1/2$, and 
\[
\|\varphi_{t}-\phi_{t}\|^{2}\leq CN^{(1-3\eta)/2}(e^{C|t|}-1)
\]
if $a>1$.
	This concludes the proof by following the proof of \cite[Lemma 2.2]{ChenLeeSchlein2011}.
\end{proof}

\section{Proof of the main result\label{sec:Pf-of-Main-Thm}}

Now, we are ready prove the main result of the paper, Theorem \ref{thm:main}.
To have that, we first prove the Proposition \ref{prop:reged-V}.

\subsection{Unitary operators and their generators}

As we define bosonic Fock space in Section \ref{sec:Fock_space},
the new Hamiltonian for the Fock space evolution can be written as
\begin{equation}
\mathcal{H}_{N}=\int\mathrm{d}x\,a_{x}^{*}(-\Delta)a_{x}+\frac{1}{6N^{2}}\int\mathrm{d}x\mathrm{d}y\mathrm{d}z\,V(x-y,x-z)a_{x}^{*}a_{y}^{*}a_{z}^{*}a_{z}a_{y}a_{x}.\label{eq:Fock_space_Hamiltonian-1}
\end{equation}
Since we have $(\mathcal{H}_{N}\psi)^{(N)}=H_{N}\psi^{(N)}$ for $\psi\in\mathcal{F}$,
\eqref{eq:Fock_space_Hamiltonian-1} can be justified as a proper generalization
of \eqref{eq:N_body_Hamiltonian}. Since we are going to use regularized
potential, we define 
\begin{equation}
\overline{\mathcal{H}}_{N}=\int\mathrm{d}x\,a_{x}^{*}(-\Delta)a_{x}+\frac{1}{6N^{2}}\int\mathrm{d}x\mathrm{d}y\mathrm{d}z\,\overline{V}(x-y,x-z)a_{x}^{*}a_{y}^{*}a_{z}^{*}a_{z}a_{y}a_{x}\label{eq:Fock_space_Hamiltonian}
\end{equation}
which is also a generalization of $\overline{H}_{N}$ for the Fock
space.

The one-particle marginal density $\gamma_{\psi}^{(1)}$ associated
with $\psi$ is 
\begin{equation}
\gamma_{\psi}^{(1)}\left(x;y\right)=\frac{1}{\left\langle \psi,\mathcal{N}\psi\right\rangle }\left\langle \psi,a_{y}^{*}a_{x}\psi\right\rangle .\label{eq:Kernel_gamma}
\end{equation}
Note that $\gamma_{\psi}^{(1)}$ is a trace class operator on $L^{2}(\mathbb{R}^{3})$
and $\text{Tr }\gamma_{\psi}^{(1)}=1$.

Let $\overline{\gamma}_{N,t}^{(1)}$ be the kernel of the one-particle
marginal density associated with the time evolution of the factorized
state $\varphi_{0}^{\otimes N}$ for Hamiltonian $\overline{\mathcal{H}}_{N}$.
By definition, 
\begin{align}
\overline{\gamma}_{N,t}^{(1)} & =\frac{\left\langle e^{-\mathrm{i}\overline{\mathcal{H}}_{N}t}\varphi^{\otimes N},a_{y}^{*}a_{x}e^{-\mathrm{i}\overline{\mathcal{H}}_{N}t}\varphi^{\otimes N}\right\rangle }{\left\langle e^{-\mathrm{i}\overline{\mathcal{H}}_{N}t}\varphi^{\otimes N},\mathcal{N}e^{-\mathrm{i}\overline{\mathcal{H}}_{N}t}\varphi^{\otimes N}\right\rangle }=\frac{1}{N}\left\langle \varphi^{\otimes N},e^{\mathrm{i}\overline{\mathcal{H}}_{N}t}a_{y}^{*}a_{x}e^{-\mathrm{i}\overline{\mathcal{H}}_{N}t}\varphi^{\otimes N}\right\rangle \nonumber \\
& =\frac{1}{N}\left\langle \frac{\left(a^{*}(\varphi)\right)^{N}}{\sqrt{N!}}\Omega,e^{\mathrm{i}\overline{\mathcal{H}}_{N}t}a_{y}^{*}a_{x}e^{-\mathrm{i}\overline{\mathcal{H}}_{N}t}\frac{\left(a^{*}(\varphi)\right)^{N}}{\sqrt{N!}}\Omega\right\rangle .\label{eq:marginal_factorized}
\end{align}
If we put the coherent states instead of the factorized initial data
in \eqref{eq:marginal_factorized} and expand $a_{y}^{*}a_{x}$ around
$N\overline{\varphi_{t}(y)}\varphi_{t}(x)$, then it is enough to
consider the operator 
\begin{align}
& W^{*}(\sqrt{N}\varphi_{s})e^{\mathrm{i}\overline{\mathcal{H}}_{N}\left(t-s\right)}(a_{x}-\sqrt{N}\varphi_{t}(x))e^{-\mathrm{i}\overline{\mathcal{H}}_{N}\left(t-s\right)}W(\sqrt{N}\varphi_{s})\label{eq:introducingU}\\
& =W^{*}(\sqrt{N}\varphi_{s})e^{\mathrm{i}\overline{\mathcal{H}}_{N}\left(t-s\right)}W(\sqrt{N}\varphi_{t})a_{x}W^{*}(\sqrt{N}\varphi_{t})e^{-\mathrm{i}\overline{\mathcal{H}}_{N}\left(t-s\right)}W(\sqrt{N}\varphi_{s}).\nonumber 
\end{align}
Now we are lead to understand the operator $W^{*}(\sqrt{N}\varphi_{t})e^{-\mathrm{i}\overline{\mathcal{H}}_{N}\left(t-s\right)}W(\sqrt{N}\varphi_{s})$.
For the understanding, since we know that for $t=s$, we investigate
the time evolution of the operator by differentiate it with respect
to $t$. One can compute directly such that
\begin{equation}
\mathrm{i}\partial_{t}W^{*}(\sqrt{N}\varphi_{t})e^{-\mathrm{i}\overline{\mathcal{H}}_{N}\left(t-s\right)}W(\sqrt{N}\varphi_{s})=\mathcal{L}W^{*}(\sqrt{N}\varphi_{t})e^{-\mathrm{i}\overline{\mathcal{H}}_{N}\left(t-s\right)}W(\sqrt{N}\varphi_{s}),\label{eq:derivative_decomposition}
\end{equation}
where $\mathcal{L}:=\sum_{k=0}^{6}\mathcal{L}_{k}(t)$ and the exact
formulas for $\mathcal{L}_{k}$ are as follows:

We consider evolution
\begin{equation}
\mathrm{i}\partial_{t}\mathcal{U}=\mathcal{L}\,\mathcal{U}\label{eq:def_mathcalU}
\end{equation}
with
\[
\mathcal{L}=\mathcal{L}_{0}+\mathcal{L}_{1}+\mathcal{L}_{2}+\mathcal{L}_{3}+\mathcal{L}_{4}+\mathcal{L}_{5}+\mathcal{L}_{6}
\]
where
\begin{align*}
\mathcal{L}_{0} & =\frac{N}{6}\int\mathrm{d}x\mathrm{d}y\mathrm{d}z\,\overline{V}(x-y,x-z)|\varphi_{t}(x)|^{2}|\varphi_{t}(y)|^{2}|\varphi_{t}(z)|^{2}\\
\mathcal{L}_{1} & =0\\
\mathcal{L}_{2} & =\int\mathrm{d}x\,a_{x}^{*}(-\Delta)a_{x}\\
& \qquad+\frac{1}{6}\int\mathrm{d}x\mathrm{d}y\mathrm{d}z\,\overline{V}(x-y,x-z)\Bigg[3|\varphi_{t}(x)|^{2}\left(\varphi_{t}(y)\varphi_{t}(z)a_{y}^{*}a_{z}^{*}+\overline{\varphi_{t}(y)}\overline{\varphi_{t}(z)}a_{z}a_{y}\right)\\
& \qquad\qquad\qquad\qquad\qquad\qquad+3|\varphi_{t}(x)|^{2}\left(\varphi_{t}(y)\overline{\varphi_{t}(z)}a_{y}^{*}a_{z}+\overline{\varphi_{t}(y)}\varphi_{t}(z)a_{z}^{*}a_{y}\right)\Bigg]\\
\mathcal{L}_{3} & =\frac{1}{6\sqrt{N}}\int\mathrm{d}x\mathrm{d}y\mathrm{d}z\,\overline{V}(x-y,x-z)\Bigg[\left(\varphi_{t}(x)\varphi_{t}(y)\varphi_{t}(z)a_{x}^{*}a_{y}^{*}a_{z}^{*}+\overline{\varphi_{t}(x)}\overline{\varphi_{t}(y)}\overline{\varphi_{t}(z)}a_{x}a_{y}a_{z}\right)\\
& \qquad\qquad\qquad\qquad\qquad+3\left(\varphi_{t}(x)\varphi_{t}(y)\overline{\varphi_{t}(z)}a_{x}^{*}a_{y}^{*}a_{z}+\overline{\varphi_{t}(x)}\overline{\varphi_{t}(y)}\varphi_{t}(z)a_{z}^{*}a_{x}a_{y}\right)\Bigg]\\
\mathcal{L}_{4}^{c} & =\frac{1}{6N}\int\mathrm{d}x\mathrm{d}y\mathrm{d}z\,\overline{V}(x-y,x-z)\Bigg[3\left(\varphi_{t}(x)\varphi_{t}(y)a_{x}^{*}a_{y}^{*}a_{z}^{*}a_{z}+\overline{\varphi_{t}(x)}\overline{\varphi_{t}(y)}a_{x}a_{y}a_{z}^{*}a_{z}\right)\Bigg]\\
\mathcal{L}_{4}^{r} & =\frac{1}{6N}\int\mathrm{d}x\mathrm{d}y\mathrm{d}z\,\overline{V}(x-y,x-z)\Bigg[6\varphi_{t}(x)\overline{\varphi_{t}(y)}a_{x}^{*}a_{z}^{*}a_{z}a_{y}+3|\varphi_{t}(x)|^{2}a_{y}^{*}a_{z}^{*}a_{z}a_{y}\Bigg]\\
\mathcal{L}_{5} & =\frac{1}{2N\sqrt{N}}\int\mathrm{d}x\mathrm{d}y\mathrm{d}z\,\overline{V}(x-y,x-z)a_{x}^{*}a_{y}^{*}\left(\varphi_{t}(z)a_{z}^{*}+\overline{\varphi_{t}(z)}a_{z}\right)a_{y}a_{x}\\
\mathcal{L}_{6} & =\frac{1}{6N^{2}}\int\mathrm{d}x\mathrm{d}y\mathrm{d}z\,\overline{V}(x-y,x-z)a_{x}^{*}a_{y}^{*}a_{z}^{*}a_{z}a_{y}a_{x}.
\end{align*}

\subsection{Proof of Theorem \ref{thm:main}}

As explained in Section \ref{sec:introduction}, we use the technique
developed in \cite{ChenLeeSchlein2011,XChen2012,Lee2013,Rodnianski2009}.
The proof of Theorem \ref{thm:main} is a consequence of Corollary
\ref{cor:gamma-gammahat} and Proposition \ref{prop:reged-V}. 

Let $• \overline{\varphi}_{t}$ be the solution of cut-offed Hartree equation. Then we have
\begin{align*}
\Tr\Big|\gamma_{N}^{(1)}-|\varphi_{t}\rangle\langle\varphi_{t}|\Big|&\leq\Tr\Big|\gamma_{N}^{(1)}-\overline{\gamma}_{N}^{(1)}\Big|+\Tr\Big|\overline{\gamma}_{N}^{(1)}-|\overline{\varphi}_{t}\rangle\langle\overline{\varphi}_{t}|\Big|+\Tr\Big||\overline{\varphi}_{t}\rangle\langle\overline{\varphi}_{t}|-|\varphi_{t}\rangle\langle\varphi_{t}|\Big|
\end{align*}

Hence,
it is enough to prove Proposition \ref{prop:reged-V}. For $\varphi\in H^{(3/2)+a}(\mathbb{R}^{3})$,
we put $\eta=5/4$ when $0\leq a<1/2$, and we put $\eta=1$ when
$a>1$. 
This conclude the main theorem. The proof of Proposition \ref{prop:reged-V}
consists of the following two propositions. Recall the definition
of $d_{N}$ in \eqref{eq:d_N}. In this section, we use $\mathcal{U}(t)$
instead of $\mathcal{U}(t;s)$ for notional simplicity.
\begin{prop}
	\label{prop:Et1} Suppose that the assumptions in Theorem \ref{thm:main}
	hold. For a Hermitian operator $J$ on $H^{1}(\mathbb{R}^{3})$, let
	\[
	E_{t}^{1}(J):=\frac{d_{N}}{N}\left\langle W^{*}(\sqrt{N}\varphi)\frac{(a^{*}(\varphi))^{N}}{\sqrt{N!}}\Omega,\mathcal{U}^{*}(t)d\Gamma(J)\mathcal{U}(t)\Omega\right\rangle 
	\]
	Then, there exist constants $C$ and $K$, depending only on $\lambda$
	and $\|\varphi_{0}\|_{H^{1}}$, such that 
	\[
	\left|E_{t}^{1}(J)\right|\leq\frac{C\|J\|e^{Kt}}{N^{2-\eta}}.
	\]
\end{prop}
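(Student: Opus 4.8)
The plan is to control $E_t^1(J)$ by differentiating in time and running a Gr\"onwall argument, exactly in the spirit of \cite{Rodnianski2009,ChenLeeSchlein2011,Lee2013}, but now tracking the dependence on the regularization parameter $\alpha_N=N^{-\eta}$. Writing $\xi_t := \mathcal{U}(t)\Omega$, we have $E_t^1(J) = \frac{d_N}{N}\langle \widetilde\Omega, \mathcal{U}^*(t)d\Gamma(J)\mathcal{U}(t)\widetilde\Omega\rangle$ where $\widetilde\Omega = W^*(\sqrt N\varphi)\frac{(a^*(\varphi))^N}{\sqrt{N!}}\Omega$, and since $\mathrm{i}\partial_t\mathcal{U} = \mathcal{L}\mathcal{U}$ with $\mathcal{L} = \sum_{k=0}^6\mathcal{L}_k(t)$, the scalar terms $\mathcal{L}_0$ drop out of the commutator and
\[
\partial_t E_t^1(J) = \frac{d_N}{N}\,\mathrm{i}\,\big\langle \widetilde\Omega, \mathcal{U}^*(t)\,[\mathcal{L}(t), d\Gamma(J)]\,\mathcal{U}(t)\widetilde\Omega\big\rangle .
\]
Since $d\Gamma(J)$ commutes with the quadratic part $\mathcal{L}_2$ up to terms of the same structure, and commutes with $\mathcal{L}_4^r$ and $\mathcal{L}_6$ (which preserve particle number) producing again controllable quantities, the genuinely dangerous contributions are the commutators with the number-non-conserving cubic and quartic/quintic pieces $\mathcal{L}_3$, $\mathcal{L}_4^c$, and $\mathcal{L}_5$. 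Each such commutator is estimated by moving one factor of $\overline{V}$ onto $\varphi_t$ (this is where the singularity of $V$ enters and where the cutoff $\overline v \le \alpha_N^{-1}$ is essential) and applying Cauchy--Schwarz in Fock space, bounding the creation/annihilation operators by powers of $\mathcal{N}+1$ and the kinetic energy operator $d\Gamma(1-\Delta)$.

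The key inputs are the a priori bounds proved in Section \ref{sec:comparison} on the growth of the number and energy operators under $\mathcal{U}(t)$ — namely estimates of the form $\langle\mathcal{U}(t)\widetilde\Omega, (\mathcal{N}+1)^k\mathcal{U}(t)\widetilde\Omega\rangle \le C e^{Kt}$ and the corresponding bound with the energy operator — together with Lemmas \ref{lem:NjU} and \ref{lem:L5} (which quantify how singular $\|\overline V(x-y,x-\cdot)\varphi_t(\cdot)\|_{L^2}$ is after regularization, picking up factors $\alpha_N^{-1}$), and the global $H^1$-bound on $\varphi_t$ from Lemma \ref{lem:H1bdd} in Section \ref{sec:properties_quntic_Hartree}. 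Collecting the estimates, the commutator with $\mathcal{L}_3$ contributes a factor $N^{-1/2}$, the one with $\mathcal{L}_4^c$ a factor $N^{-1}$, and the one with $\mathcal{L}_5$ a factor $N^{-1}$; after multiplying by $d_N/N$ and using that each $\overline V$-vertex costs at most $\alpha_N^{-1}=N^{\eta}$ from the regularization, the total is $\partial_t E_t^1(J) \le \frac{C\|J\|e^{Kt}}{N^{2-\eta}} + K\, E_t^1(J)$ (the last term coming from reabsorbing the $\mathcal{L}_2$-type and $\mathcal{L}_4^r$-type contributions, which reproduce $d\Gamma(J)$-like quantities). A Gr\"onwall argument, using $E_0^1(J)=0$ since $\mathcal{U}(0)=\mathrm{Id}$ and $d\Gamma(J)\Omega=0$, then yields the claimed bound $|E_t^1(J)|\le C\|J\|e^{Kt}N^{-(2-\eta)}$.

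The main obstacle is the bookkeeping of the singularity: one has to verify that after replacing $v$ by $\overline v$ every term in $[\mathcal{L}(t),d\Gamma(J)]$ can be bounded using only $\|\overline v\|_{L^\infty}\le\alpha_N^{-1}$ in a way that costs exactly one (and not two) powers of $\alpha_N^{-1}$ per interaction vertex, so that the surviving power is $N^{-2+\eta}$ rather than something worse; this is precisely the point where the three-body nature of $V$ — effectively twice as singular as Coulomb — forces the restriction $\eta<3/2$. A secondary technical point is handling $\mathcal{L}_5$, where the cubic-in-$a$, linear-in-$\varphi_t$ structure requires splitting $\|\overline V(x-y,x-\cdot)\varphi_t(\cdot)\|_{L^2}$ and distributing the kinetic operators carefully between the two Fock-space vectors in the Cauchy--Schwarz step; this is dealt with by Lemma \ref{lem:L5}.
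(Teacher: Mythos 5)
Your strategy is not the paper's, and as written it does not close. The paper's proof of Proposition \ref{prop:Et1} never differentiates $E_t^1(J)$ in time: it is a single Cauchy--Schwarz step, inserting $(\mathcal{N}+1)^{-1/2}$ against the left vector and $(\mathcal{N}+1)^{1/2}$ against $\mathcal{U}^*(t)d\Gamma(J)\mathcal{U}(t)\Omega$, then using Lemma \ref{lem:coherent_all} to get $\bigl\Vert(\mathcal{N}+1)^{-1/2}W^*(\sqrt N\varphi)\tfrac{(a^*(\varphi))^N}{\sqrt{N!}}\Omega\bigr\Vert\le C/d_N$ (this cancels the $d_N$ in the prefactor), and applying Lemma \ref{lem:NjU} twice together with \eqref{eq:J-bd}, with all $\mathcal{N}$-weights landing on the vacuum. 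Your Gr\"onwall scheme on $E_t^1(J)$ via $[\mathcal{L}(t),d\Gamma(J)]$ fails at the quadratic part: $[\mathcal{L}_2,d\Gamma(J)]$ contains $d\Gamma([-\Delta,J])$, which cannot be bounded by $\|J\|$ times powers of $\mathcal{N}$ for a general Hermitian $J$, and is certainly not a multiple of $E_t^1(J)$, so it cannot be ``reabsorbed'' as $K\,E_t^1(J)$. Likewise $\mathcal{L}_4^r$ and $\mathcal{L}_6$ commute with $\mathcal{N}$ but not with $d\Gamma(J)$, and their commutators are again not $d\Gamma(J)$-type quantities; Gr\"onwall needs the right-hand side to be controlled by the quantity itself, which none of these terms provide. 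This is precisely why the Rodnianski--Schlein-type argument routes all the dynamical information through the $J$-independent moment bound of Lemma \ref{lem:NjU} rather than through a differential inequality for $E_t^1(J)$.

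Two further inputs you rely on are incorrect or missing. First, the a priori bound you quote, $\langle\mathcal{U}(t)\widetilde\Omega,(\mathcal{N}+1)^k\mathcal{U}(t)\widetilde\Omega\rangle\le Ce^{Kt}$, is false: Lemma \ref{lem:NjU} gives $Ce^{Kt}\langle\psi,(\mathcal{N}+1)^{2k+3}\psi\rangle$, and the state $\widetilde\Omega=W^*(\sqrt N\varphi)\tfrac{(a^*(\varphi))^N}{\sqrt{N!}}\Omega$ has $\mathcal{N}$-moments of order $N^k$, not $O(1)$ (also note the right-hand vector in $E_t^1(J)$ is the vacuum $\Omega$, not $\widetilde\Omega$). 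The lemma must be applied with the weights on the vacuum side, as in the paper. Second, you never invoke Lemma \ref{lem:coherent_all}; without the gain $\Vert(\mathcal{N}+1)^{-1/2}\widetilde\Omega\Vert\le C/d_N$ the prefactor $d_N/N\sim N^{-3/4}$ only yields $O(N^{-3/4})$, which is weaker than the claimed $N^{-(2-\eta)}$ in the case $\eta=1$ needed for the optimal rate in Theorem \ref{thm:main}. Your bookkeeping of the $\alpha_N^{-1}$ costs (relevant for $\mathcal{L}_5$, cf.\ Lemma \ref{lem:L5}) is in the right spirit, but in the paper that accounting lives entirely inside the proof of Lemma \ref{lem:NjU} (through the truncation parameter $M=N^{1-2\eta/3}$), not in the proof of Proposition \ref{prop:Et1} itself.
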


\begin{prop}
	\label{prop:Et2} Suppose that the assumptions in Theorem \ref{thm:main}
	hold. For a Hermitian operator $J$ on $H^{1}(\mathbb{R}^{3})$, let
	\[
	E_{t}^{2}(J):=\frac{d_{N}}{\sqrt{N}}\left\langle W^{*}(\sqrt{N}\varphi)\frac{(a^{*}(\varphi))^{N}}{\sqrt{N!}}\Omega,\mathcal{U}^{*}(t)\phi(J\varphi_{t})\mathcal{U}(t)\Omega\right\rangle 
	\]
	Then, there exist constants $C$ and $K$, depending only on $\lambda$
	and $\|\varphi_{0}\|_{H^{1}}$, such that 
	\[
	\left|E_{t}^{2}(J)\right|\leq\frac{C\|J\|e^{Kt}}{N^{2-\eta}}.
	\]
\end{prop}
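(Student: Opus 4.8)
The plan is to run the Fock-space fluctuation argument of \cite{Rodnianski2009,ChenLeeSchlein2011,Lee2013}. Writing $\phi(J\varphi_{t})=a(J\varphi_{t})+a^{*}(J\varphi_{t})$ and moving one copy of $\mathcal{U}(t)$ across the inner product,
\[
E_{t}^{2}(J)=\frac{d_{N}}{\sqrt{N}}\left\langle \mathcal{U}(t)\,W^{*}(\sqrt{N}\varphi)\tfrac{(a^{*}(\varphi))^{N}}{\sqrt{N!}}\Omega,\;\big(a(J\varphi_{t})+a^{*}(J\varphi_{t})\big)\,\mathcal{U}(t)\Omega\right\rangle ,
\]
so the elementary bounds $\|a^{\#}(f)\psi\|\le\|f\|\,\|(\mathcal{N}+1)^{1/2}\psi\|$ recalled in Appendix \ref{sec:Fock_space}, together with $\|J\varphi_{t}\|\le\|J\|$ (mass conservation for \eqref{eq:quinticHartree} gives $\|\varphi_{t}\|_{L^{2}}=1$), reduce $|E_{t}^{2}(J)|$ to $\|J\|\,d_{N}N^{-1/2}$ times a product of moments of the shape $\|(\mathcal{N}+1)^{k/2}\mathcal{U}(t)\Omega\|$ and $\|(\mathcal{N}+1)^{k/2}\mathcal{U}(t)W^{*}(\sqrt{N}\varphi)\tfrac{(a^{*}(\varphi))^{N}}{\sqrt{N!}}\Omega\|$. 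Since $d_{N}N^{-1/2}\sim N^{-1/4}$ by itself is not enough, the extra smallness has to be extracted from the factorized leg.

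I would then dispatch the two legs. For the vacuum leg, the comparison-dynamics lemmas of Section \ref{sec:comparison}, in particular Lemmas \ref{lem:NjU} and \ref{lem:L5}, give $\langle\mathcal{U}(t)\Omega,(\mathcal{N}+1)^{k}\mathcal{U}(t)\Omega\rangle\le Ce^{Kt}$ with $C,K$ independent of $N$; these are obtained by comparing $\mathcal{U}(t)$ with the approximate quadratic dynamics $\widetilde{\mathcal{U}}(t)$ generated by $\widetilde{\mathcal{L}}(t)$ and then estimating the remaining pieces $\mathcal{L}_{0},\mathcal{L}_{3},\mathcal{L}_{4}^{c},\mathcal{L}_{4}^{r},\mathcal{L}_{5},\mathcal{L}_{6}$, for which the cut-off $|\overline{v}|\le\alpha_{N}^{-1}$ and the a priori bound $\sup_{t}\|\varphi_{t}\|_{H^{1}}\le C(\|\varphi_{0}\|_{H^{1}})$ of Appendix \ref{sec:properties_quntic_Hartree} are essential (this is where the generalized Hardy--Littlewood--Sobolev inequality enters, controlling $\int\mathrm{d}x\,\mathrm{d}y\,\mathrm{d}z\,|\overline{V}(x-y,x-z)|^{2}|\varphi_{t}(y)|^{2}|\varphi_{t}(z)|^{2}$). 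For the factorized leg, Lemma \ref{lem:coherent_all} (the coherent-state lemma of Rodnianski--Schlein) transfers $W^{*}(\sqrt{N}\varphi)\tfrac{(a^{*}(\varphi))^{N}}{\sqrt{N!}}\Omega$ onto $\Omega$ up to corrections weighted by $\mathcal{N}/N$, so combining it with the same moment bounds produces the negative powers of $N$ beyond the explicit $N^{-1/4}$. Collecting everything with $\eta=5/4$ when $0\le a<1/2$ and $\eta=1$ when $a>1$, as fixed in Section \ref{sec:Pf-of-Main-Thm}, delivers the asserted $C\|J\|e^{Kt}N^{-(2-\eta)}$, the factor $e^{Kt}$ being inherited from the moment estimates (which are themselves proved by a Gr\"onwall argument in Section \ref{sec:comparison}).

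The step I expect to be the main obstacle is making the moment bounds for $\mathcal{U}(t)\Omega$ uniform in $N$ in spite of the singularity of $\overline{V}$: the cubic term $\mathcal{L}_{3}$, the quintic term $\mathcal{L}_{5}$, and $\mathcal{L}_{4}^{r}$ carry the doubly singular kernel $\overline{v}(x-y)\overline{v}(x-z)$, which only the regularization $\overline{v}\le\alpha_{N}^{-1}$ tames; the resulting powers $\alpha_{N}^{-1}=N^{\eta}$ must then be balanced against the prefactors $N^{-1/2}$, $N^{-1}$, $N^{-3/2}$ of those terms and against the $H^{1}$-control of $\varphi_{t}$, which is precisely what pins down the admissible window $1\le\eta<3/2$ and hence the rate $N^{-(2-\eta)}$. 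Beyond that, one only has to check that pulling $\mathcal{U}(t)$ across the inner product and applying the field-operator bounds do not generate terms worse than those already accounted for, which is routine given the commutator estimates prepared in Section \ref{sec:comparison}.
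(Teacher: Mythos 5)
There is a genuine gap: your plan for extracting the extra smallness from the factorized leg does not work, and the key mechanism of the paper's proof is missing. If you simply move $\mathcal{U}(t)$ across, apply $\|a^{\#}(f)\psi\|\le\|f\|\,\|(\mathcal{N}+1)^{1/2}\psi\|$, and then invoke Lemma \ref{lem:coherent_all} on the factorized leg, the best you can obtain is
\[
|E_{t}^{2}(J)|\le\frac{d_{N}}{\sqrt{N}}\,\Big\|(\mathcal{N}+1)^{-1/2}W^{*}(\sqrt{N}\varphi)\tfrac{(a^{*}(\varphi))^{N}}{\sqrt{N!}}\Omega\Big\|\,\Big\|(\mathcal{N}+1)^{1/2}\mathcal{U}^{*}(t)\phi(J\varphi_{t})\mathcal{U}(t)\Omega\Big\|\le\frac{C\|J\|e^{Kt}}{\sqrt{N}},
\]
i.e.\ the Rodnianski--Schlein rate $N^{-1/2}$, which is strictly weaker than the claimed $N^{-(2-\eta)}$ (recall $2-\eta=3/4$ or $1$ here). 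Moreover, Lemma \ref{lem:coherent_all} does not ``transfer the factorized state onto $\Omega$ up to corrections weighted by $\mathcal{N}/N$''; it is only the bound $\|(\mathcal{N}+1)^{-1/2}W^{*}(\sqrt{N}\varphi)\tfrac{(a^{*}(\varphi))^{N}}{\sqrt{N!}}\Omega\|\le C/d_{N}$, so no additional negative power of $N$ comes from it.

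What the paper actually does, and what your proposal lacks, is the decomposition of the observable itself: write $\mathcal{U}^{*}(t)\phi(J\varphi_{t})\mathcal{U}(t)=\widetilde{\mathcal{U}}^{*}(t)\phi(J\varphi_{t})\widetilde{\mathcal{U}}(t)+\mathcal{R}(J\varphi_{t})$ with $\mathcal{R}=\mathcal{R}_{1}+\mathcal{R}_{2}$ as in Lemma \ref{lem:NjUphiUtildeUphitildeU}. The remainder term is then small by that lemma (this is where Lemmas \ref{lem:L3}, \ref{lem:L4c}, \ref{lem:L5} and the regularization enter), and only costs an extra $d_{N}N^{-1/2}\cdot d_{N}^{-1}=N^{-1/2}$ from Lemma \ref{lem:coherent_all}. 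For the main term one uses that $\widetilde{\mathcal{L}}$ preserves the parity of the particle number, so $P_{2k}\widetilde{\mathcal{U}}^{*}(t)\phi(J\varphi_{t})\widetilde{\mathcal{U}}(t)\Omega=0$ (equation \eqref{eq:Parity_Consevation}); hence only the odd sectors of $W^{*}(\sqrt{N}\varphi)\tfrac{(a^{*}(\varphi))^{N}}{\sqrt{N!}}\Omega$ contribute, and by Lemma \ref{lem:coherent_even_odd} these are of size $O\big((k+1)^{3/2}/(d_{N}\sqrt{N})\big)$, which supplies exactly the additional factor $N^{-1/2}$ beyond the naive estimate (together with the $(\mathcal{N}+1)^{5/2}$ weights, the cut $K=\tfrac12 N^{1/3}$, and the moment bound of Lemma \ref{lem:tildeNj} for $\widetilde{\mathcal{U}}$). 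Without this parity argument and the even/odd coherent-state estimates, the stated rate cannot be reached along the route you outline; your discussion of the singular kernel and of the window $1\le\eta<3/2$ concerns the auxiliary moment lemmas, not the step that actually improves $N^{-1/2}$ to $N^{-(2-\eta)}$.
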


Proof of Propositions \ref{prop:Et1} and \ref{prop:Et2} will be
given later in section \ref{sec:Pf-of-Props}. With Propositions \ref{prop:Et1}
and \ref{prop:Et2}, we now prove Proposition \ref{prop:reged-V}.
\begin{proof}[Proof of Proposition \ref{prop:reged-V}]
	Formally, the proof is the same with previous results but for the
	sake of completeness, we include the proof of this theorem. Recall
	that 
	\begin{equation}
	\overline{\gamma}_{N,t}^{(1)}=\frac{1}{N}\left\langle \frac{\left(a^{*}(\varphi)\right)^{N}}{\sqrt{N!}}\Omega,e^{i\overline{\mathcal{H}}_{N}t}a_{y}^{*}a_{x}e^{-i\overline{\mathcal{H}}_{N}t}\frac{\left(a^{*}(\varphi)\right)^{N}}{\sqrt{N!}}\Omega\right\rangle .
	\end{equation}
	From the definition of the creation operator in \eqref{eq:creation},
	we can easily find that 
	\begin{equation}
	\{0,0,\dots,0,\varphi^{\otimes N},0,\dots\}=\frac{\left(a^{*}(\varphi)\right)^{N}}{\sqrt{N!}}\Omega,\label{eq:coherent_vec}
	\end{equation}
	where the $\varphi^{\otimes N}$ on the left-hand side is in the $N$-th
	sector of the Fock space. Recall that $P_{N}$ is the projection onto
	the $N$-particle sector of the Fock space. From \eqref{Weyl_f},
	we find that 
	\[
	\frac{\left(a^{*}(\varphi)\right)^{N}}{\sqrt{N!}}\Omega=\frac{\sqrt{N!}}{N^{N/2}e^{-N/2}}P_{N}W(\sqrt{N}\varphi)\Omega=d_{N}P_{N}W(\sqrt{N}\varphi)\Omega.
	\]
	Since $\overline{\mathcal{H}}_{N}$ does not change the number of particles,
	we also have that 
	\begin{align*}
	\overline{\gamma}_{N,t}^{(1)}(x;y) & =\frac{1}{N}\left\langle \frac{\left(a^{*}(\varphi)\right)^{N}}{\sqrt{N!}}\Omega,e^{\mathrm{i}\overline{\mathcal{H}}_{N}t}a_{y}^{*}a_{x}e^{-\mathrm{i}\overline{\mathcal{H}}_{N}t}\frac{\left(a^{*}(\varphi)\right)^{N}}{\sqrt{N!}}\Omega\right\rangle \\
	& =\frac{d_{N}}{N}\left\langle \frac{\left(a^{*}(\varphi)\right)^{N}}{\sqrt{N!}}\Omega,e^{\mathrm{i}\overline{\mathcal{H}}_{N}t}a_{y}^{*}a_{x}e^{-\mathrm{i}\overline{\mathcal{H}}_{N}t}P_{N}W(\sqrt{N}\varphi)\Omega\right\rangle \\
	& =\frac{d_{N}}{N}\left\langle \frac{\left(a^{*}(\varphi)\right)^{N}}{\sqrt{N!}}\Omega,P_{N}e^{\mathrm{i}\overline{\mathcal{H}}_{N}t}a_{y}^{*}a_{x}e^{-\mathrm{i}\overline{\mathcal{H}}_{N}t}W(\sqrt{N}\varphi)\Omega\right\rangle \\
	& =\frac{d_{N}}{N}\left\langle \frac{\left(a^{*}(\varphi)\right)^{N}}{\sqrt{N!}}\Omega,e^{\mathrm{i}\overline{\mathcal{H}}_{N}t}a_{y}^{*}a_{x}e^{-\mathrm{i}\overline{\mathcal{H}}_{N}t}W(\sqrt{N}\varphi)\Omega\right\rangle .
	\end{align*}
	To simplify it further, we use the relation 
	\[
	e^{\mathrm{i}\overline{\mathcal{H}}_{N}t}a_{x}e^{-\mathrm{i}\overline{\mathcal{H}}_{N}t}=W(\sqrt{N}\varphi)\mathcal{U}^{*}(t)(a_{x}+\sqrt{N}\phi_{t}(x))\mathcal{U}(t)W^{*}(\sqrt{N}\varphi)
	\]
	(and an analogous result for the creation operator) to obtain that
	\begin{align*}
	&\overline{\gamma}_{N,t}^{(1)}(x;y)\\
	 & =\frac{d_{N}}{N}\left\langle \frac{\left(a^{*}(\varphi)\right)^{N}}{\sqrt{N!}}\Omega,e^{\mathrm{i}\overline{\mathcal{H}}_{N}t}a_{y}^{*}a_{x}e^{-\mathrm{i}\overline{\mathcal{H}}_{N}t}W(\sqrt{N}\varphi)\Omega\right\rangle \\
	& =\frac{d_{N}}{N}\left\langle \frac{\left(a^{*}(\varphi)\right)^{N}}{\sqrt{N!}}\Omega,W(\sqrt{N}\varphi)\mathcal{U}^{*}(t)(a_{y}^{*}+\sqrt{N}\,\overline{\phi_{t}\left(y\right)})(a_{x}+\sqrt{N}\phi_{t}(x))\mathcal{U}(t)\Omega\right\rangle.
	\intertext{Thus,}
	& =\frac{d_{N}}{N}\left\langle \frac{\left(a^{*}(\varphi)\right)^{N}}{\sqrt{N!}}\Omega,W(\sqrt{N}\varphi)\mathcal{U}^{*}(t)a_{y}^{*}a_{x}\mathcal{U}(t)\Omega\right\rangle \\
	& \quad+\overline{\varphi_{t}\left(y\right)}\frac{d_{N}}{\sqrt{N}}\left\langle \frac{\left(a^{*}(\varphi)\right)^{N}}{\sqrt{N!}}\Omega,W(\sqrt{N}\varphi)\mathcal{U}^{*}(t)a_{x}\mathcal{U}(t)\Omega\right\rangle \\
	& \quad+\varphi_{t}(x)\frac{d_{N}}{\sqrt{N}}\left\langle \frac{\left(a^{*}(\varphi)\right)^{N}}{\sqrt{N!}}\Omega,W(\sqrt{N}\varphi)\mathcal{U}^{*}(t)a_{y}^{*}\mathcal{U}(t)\Omega\right\rangle .
	\end{align*}
	Recall the definition of $E_{t}^{1}(J)$ and $E_{t}^{2}(J)$ in Propositions
	\ref{prop:Et1} and \ref{prop:Et2}. For any compact one-particle
	Hermitian operator $J$ on $L^{2}(\mathbb{R}^{3})$, we have 
	\begin{align*}
	& \Tr J(\overline{\gamma}_{N,t}^{(1)}-\left|\phi_{t}\right\rangle \left\langle \phi_{t}\right|)\\
	& =\int\mathrm{d}x\mathrm{d}yJ(x;y)\left(\overline{\gamma}_{N,t}^{(1)}(y;x)-\phi_{t}(y)\overline{\phi_{t}\left(x\right)}\right)\\
	& =\frac{d_{N}}{N}\left\langle \frac{\left(a^{*}(\varphi)\right)^{N}}{\sqrt{N!}}\Omega,W(\sqrt{N}\varphi)\mathcal{U}^{*}(t)d\Gamma(J)\mathcal{U}(t)\Omega\right\rangle \\
	& \quad+\frac{d_{N}}{\sqrt{N}}\left\langle \frac{\left(a^{*}(\varphi)\right)^{N}}{\sqrt{N!}}\Omega,W(\sqrt{N}\varphi)\mathcal{U}^{*}(t)\phi(J\phi_{t})\mathcal{U}(t)\Omega\right\rangle \\
	& =E_{t}^{1}(J)+E_{t}^{2}(J).
	\end{align*}
	Thus, from Propositions \ref{prop:Et1} and \ref{prop:Et2}, we find
	that 
	\[
	\left|\Tr J(\overline{\gamma}_{N,t}^{(1)}-\left|\phi_{t}\right\rangle \left\langle \phi_{t}\right|)\right|\leq C\frac{\left\Vert J\right\Vert }{N^{2-\eta}}e^{Kt}.
	\]
	Since the space of compact operators is the dual to that of the trace
	class operators, and since $\overline{\gamma}_{N,t}^{(1)}$ and $\left|\phi_{t}\right\rangle \left\langle \phi_{t}\right|$
	are Hermitian, 
	\[
	\Tr\left|\overline{\gamma}_{N,t}^{(1)}-\left|\phi_{t}\right\rangle \left\langle \phi_{t}\right|\right|\leq\frac{C}{N^{2-\eta}}e^{Kt}
	\]
	which concludes the proof of Proposition \ref{prop:reged-V}.
\end{proof}

\section{Comparison dynamics\label{sec:comparison}}

As briefly mentioned in Section \ref{sec:introduction}, the key technical
estimate is the upper bound on the fluctuation of the expected number
of particles under the evolution $\mathcal{U}(t;s)$, which is the
following lemma. This section is to provide useful comparison dynamics.
\begin{lem}
	\label{lem:NjU} Suppose that the assumptions in Theorem \ref{thm:main}
	hold. Let $\mathcal{U}\left(t;s\right)$ be the unitary evolution
	defined in \eqref{eq:def_mathcalU}. Then for any $\psi\in\mathcal{F}$
	and $j\in\mathbb{N}$, there exist constants $C\equiv C(j)$ and $K\equiv K(j)$
	such that 
	\[
	\left\langle \mathcal{U}\left(t;s\right)\psi,\mathcal{N}^{j}\mathcal{U}\left(t;s\right)\psi\right\rangle \leq Ce^{Kt}\left\langle \psi,\left(\mathcal{N}+1\right)^{2j+3}\psi\right\rangle .
	\]
\end{lem}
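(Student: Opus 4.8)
The plan is a Grönwall estimate for the moments $u_{t}^{(m)}:=\langle\mathcal{U}(t;s)\psi,(\mathcal{N}+1)^{m}\mathcal{U}(t;s)\psi\rangle$. Setting $\psi_{t}=\mathcal{U}(t;s)\psi$ and using $\mathrm{i}\partial_{t}\psi_{t}=\mathcal{L}(t)\psi_{t}$ together with the self-adjointness of $\mathcal{L}(t)$, one has
\[
\frac{\mathrm{d}}{\mathrm{d}t}u_{t}^{(j)}=\mathrm{i}\,\big\langle\psi_{t},\big[\mathcal{L}(t),(\mathcal{N}+1)^{j}\big]\psi_{t}\big\rangle .
\]
Since $\mathcal{L}_{0}$ is scalar and the kinetic term, the $a^{*}a$-part of $\mathcal{L}_{2}$, $\mathcal{L}_{4}^{r}$ and $\mathcal{L}_{6}$ all commute with $\mathcal{N}$, only the particle-number-changing pieces survive in the commutator: the $a^{*}a^{*}$/$aa$-part of $\mathcal{L}_{2}$, $\mathcal{L}_{3}$, $\mathcal{L}_{4}^{c}$ and $\mathcal{L}_{5}$. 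I would decompose each such piece $A$ into terms shifting the particle number by a fixed amount $k\in\{1,2,3\}$ and use $[A,(\mathcal{N}+1)^{j}]=A\big((\mathcal{N}+1)^{j}-(\mathcal{N}+1\mp k)^{j}\big)$ with $0\le(\mathcal{N}+1+k)^{j}-(\mathcal{N}+1)^{j}\le C_{j,k}(\mathcal{N}+1)^{j-1}$, so the commutator effectively drops the degree by one.

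The work is then to bound, for each off-diagonal $A$, the quadratic form $\langle\psi_{t},A\,G(\mathcal{N})\psi_{t}\rangle$ with $0\le G(\mathcal{N})\le C(\mathcal{N}+1)^{j-1}$. One distributes the weight across the inner product — commuting a power of $\mathcal{N}$ past $A$ only produces harmless shifts — applies Cauchy–Schwarz, and controls the creation/annihilation operators by the standard bounds $\|a(f)(\mathcal{N}+1)^{m}\chi\|,\ \|a^{*}(f)(\mathcal{N}+1)^{m}\chi\|\le\|f\|_{2}\|(\mathcal{N}+1)^{m+1/2}\chi\|$ and $\langle\chi,d\Gamma(h)\chi\rangle\le\|h\|_{\infty}\langle\chi,\mathcal{N}\chi\rangle$. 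The potential enters through two estimates, both uniform in $t$ by the global $H^{1}$-bound on $\varphi_{t}$ of Lemma~\ref{lem:H1bdd}: by Hardy's inequality $\sup_{x}\|\overline{v}(x-\cdot)\varphi_{t}(\cdot)\|_{L^{2}}\le C\|\varphi_{t}\|_{H^{1}}$, which collapses any leg of $\overline{V}$ multiplied by $\varphi_{t}$ into an $L^{2}$-function, while a leg not so paired is absorbed by the regularization $\|\overline{v}\|_{\infty}\le\alpha_{N}^{-1}=N^{\eta}$. The outcome is a differential inequality bounding $u_{t}^{(j)}$ in terms of $u_{t}^{(j)}$ itself together with finitely many higher moments $u_{t}^{(j')}$, $j<j'\le j+2$, each carrying a power of $N$ that remains under control precisely because $\alpha_{N}=N^{-\eta}$ with $1\le\eta<3/2$ — the worst such coupling, produced by the quintic term $\mathcal{L}_{5}$, carrying the negative power $N^{\eta-3/2}$.

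To close the argument I would iterate: the inequality for $u_{t}^{(j)}$ calls on finitely many higher moments, each obeying a similar inequality, and since the increments are bounded while the coupling constants are $\le 1$ for large $N$, after a bounded number of iterations one only reaches moments of order at most $2j+3$; collecting the $e^{Kt}$ factors and estimating the tail then gives $u_{t}^{(j)}\le Ce^{Kt}\langle\psi,(\mathcal{N}+1)^{2j+3}\psi\rangle$, hence the claim since $\mathcal{N}^{j}\le(\mathcal{N}+1)^{j}$. The lossy exponent $2j+3$ is harmless in the applications, where the lemma is used with $\psi=\Omega$ (after the Weyl conjugation), whose $\mathcal{N}$-moments all equal $1$. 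I expect the main obstacle to be the quintic term $\mathcal{L}_{5}$, which has no analogue of comparable difficulty in the two-body Fock space theory: it involves five creation/annihilation operators and, moreover, is not normal-ordered as written, so that normal-ordering it generates contact corrections involving $\overline{v}(0)=\alpha_{N}^{-1}$ that must be tracked; keeping all of these — together with the legs of $\overline{V}$ that are not paired with $\varphi_{t}$ — under control is exactly what forces the regularization $\eta<3/2$ and constitutes the delicate heart of the proof.
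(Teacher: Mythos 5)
Your setup (Gr\"onwall for the moments, commutator of $(\mathcal{N}+1)^{j}$ with the number-nonconserving pieces, the bounds $\sup_{x}\|\overline{v}(x-\cdot)\varphi_{t}\|_{L^{2}}\le C\|\varphi_{t}\|_{H^{1}}$ and $\|\overline{v}\|_{\infty}\le\alpha_{N}^{-1}$) matches the ingredients the paper uses, but the core of your argument does not close, and this is exactly the known difficulty that forces the paper's structure. The differential inequality you obtain for $u_{t}^{(j)}$ is not of the form ``$u^{(j)}$ controlled by $u^{(j)}$ plus data'': the terms $\mathcal{L}_{3}$, $\mathcal{L}_{4}^{c}$, $\mathcal{L}_{5}$ contribute, after Cauchy--Schwarz, extra unbounded factors $\mathcal{N}^{1/2}/\sqrt{N}$, $\mathcal{N}/N$, $\mathcal{N}^{3/2}/(\alpha_{N}N^{3/2})$, so $\frac{\mathrm{d}}{\mathrm{d}t}u_{t}^{(j)}$ is bounded by strictly higher moments $u_{t}^{(j')}$ \emph{at time $t$}, with prefactors like $N^{-1/2}$, $N^{-1}$, $N^{\eta-3/2}$. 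Iterating this hierarchy never terminates: after $k$ steps the remainder still contains a time-$t$ moment of order about $j+k$ under the integral, and for a general $\psi\in\mathcal{F}$ (which is what the lemma claims) these moments can grow arbitrarily fast in $k$, so neither the series converges nor does anything force the process to stop at order $2j+3$. The statement ``since the increments are bounded while the coupling constants are $\le1$ for large $N$, after a bounded number of iterations one only reaches moments of order at most $2j+3$'' is the missing idea, not a proof; smallness in $N$ per step does not substitute for an a priori control of arbitrarily high moments of $\mathcal{U}(t;s)\psi$ at positive times.

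The paper closes the loop with a three-step scheme that your plan has no analogue of: (i) it introduces the truncated generator $\mathcal{L}_{N}^{(M)}$ with cutoffs $\chi(\mathcal{N}\le M)$ on the dangerous terms, for which the Gr\"onwall argument closes at the \emph{same} moment order $j$, at the price of a growth rate proportional to $1+\sqrt{M/N}+M/N+\alpha_{N}^{-1}(M/N)^{3/2}$ (Lemma \ref{lem:truncation}); (ii) it proves weak a priori bounds on the full dynamics, $\langle\psi,\mathcal{U}\mathcal{N}^{2\ell}\mathcal{U}^{*}\psi\rangle\le C\langle\psi,(\mathcal{N}+N)^{2\ell}\psi\rangle$, which lose powers of $N$ but hold for all times (Lemma \ref{lem:UNU}); and (iii) it compares $\mathcal{U}$ with $\mathcal{U}^{(M)}$, the error carrying $(N/M)^{j}\|(\mathcal{N}+1)^{j+3/2}\psi\|^{2}$ (Lemma \ref{lem:comparison}) --- this comparison, not an iteration, is where the exponent $2j+3$ actually comes from. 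Choosing $M=N^{1-2\eta/3}$ makes the truncated growth rate of order one (this is where $\alpha_{N}^{-1}(M/N)^{3/2}\sim1$ uses $\alpha_{N}=N^{-\eta}$) while the comparison error stays bounded. Without the truncation-and-comparison mechanism (or some substitute a priori bound), the direct Gr\"onwall route you propose cannot yield the lemma as stated; your side remark about normal-ordering $\mathcal{L}_{5}$ is not needed in the paper's treatment, which estimates it directly.
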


We now begin the proof of Lemma \ref{lem:NjU}. First, we introduce
a truncated time-dependent generator with fixed $M>0$ as follows:
\begin{align*}
& \mathcal{L}_{N}^{(M)}(t)\\
& =\int\mathrm{d}xa_{x}^{*}(-\Delta_{x})a_{x}\\
& \quad+\frac{1}{6}\int\mathrm{d}x\mathrm{d}y\mathrm{d}z\,\overline{V}(x-y,x-z)\Bigg[3|\varphi_{t}(x)|^{2}\left(\varphi_{t}(y)\varphi_{t}(z)a_{y}^{*}a_{z}^{*}+\overline{\varphi_{t}(y)}\overline{\varphi_{t}(z)}a_{z}a_{y}\right)\\
& \qquad\qquad\qquad\qquad\qquad\qquad+3|\varphi_{t}(x)|^{2}\left(\varphi_{t}(y)\overline{\varphi_{t}(z)}a_{y}^{*}a_{z}+\overline{\varphi_{t}(y)}\varphi_{t}(z)a_{z}^{*}a_{y}\right)\Bigg]\\
& \quad+\frac{1}{6\sqrt{N}}\int\mathrm{d}x\mathrm{d}y\mathrm{d}z\,\overline{V}(x-y,x-z)\chi(\mathcal{N}\leq M)\\
&\qquad\qquad\qquad\times\Bigg[\left(\varphi_{t}(x)\varphi_{t}(y)\varphi_{t}(z)a_{x}^{*}a_{y}^{*}a_{z}^{*}+\overline{\varphi_{t}(x)}\overline{\varphi_{t}(y)}\overline{\varphi_{t}(z)}a_{x}a_{y}a_{z}\right)\\
&\qquad\qquad\qquad+3\left(\varphi_{t}(x)\varphi_{t}(y)\overline{\varphi_{t}(z)}a_{x}^{*}a_{y}^{*}a_{z}+\overline{\varphi_{t}(x)}\overline{\varphi_{t}(y)}\varphi_{t}(z)a_{z}^{*}a_{x}a_{y}\right)\Bigg]\\
& \quad+\frac{1}{6N}\int\mathrm{d}x\mathrm{d}y\mathrm{d}z\,\overline{V}(x-y,x-z)\chi(\mathcal{N}\leq M)\Bigg[3\left(\varphi_{t}(x)\varphi_{t}(y)a_{x}^{*}a_{y}^{*}a_{z}^{*}a_{z}+\overline{\varphi_{t}(x)}\overline{\varphi_{t}(y)}a_{x}a_{y}a_{z}^{*}a_{z}\right)\Bigg]\\
& \quad+\frac{1}{6N}\int\mathrm{d}x\mathrm{d}y\mathrm{d}z\,\overline{V}(x-y,x-z)\chi(\mathcal{N}\leq M)\Bigg[6\varphi_{t}(x)\overline{\varphi_{t}(y)}a_{x}^{*}a_{z}^{*}a_{z}a_{y}+3|\varphi_{t}(x)|^{2}a_{y}^{*}a_{z}^{*}a_{z}a_{y}\Bigg]\\
& \quad+\frac{1}{2N\sqrt{N}}\int\mathrm{d}x\mathrm{d}y\mathrm{d}z\,\overline{V}(x-y,x-z)\chi(\mathcal{N}\leq M)a_{x}^{*}a_{y}^{*}\left(\varphi_{t}(z)a_{z}^{*}+\overline{\varphi_{t}(z)}a_{z}\right)a_{y}a_{x}\\
& \quad+\frac{1}{6N^{2}}\int\mathrm{d}x\mathrm{d}y\mathrm{d}z\,\overline{V}(x-y,x-z)a_{x}^{*}a_{y}^{*}a_{z}^{*}a_{z}a_{y}a_{x}.
\end{align*}
We remark that $M$ will be chosen to be $M=N^{1/3}$ later in the
proof of Lemma \ref{lem:NjU}. Define a unitary operator $\mathcal{U}^{(M)}$
by 
\begin{equation}
\mathrm{i}\partial_{t}\mathcal{U}^{(M)}\left(t;s\right)=\mathcal{L}_{N}^{(M)}(t)\mathcal{U}^{(M)}(t;s)\quad\text{and}\quad\mathcal{U}^{(M)}\left(s;s\right)=1.\label{eq:def_mathcalUM}
\end{equation}
We use a three-step strategy.
\vspace{2em}

\noindent \emph{Step 1. Truncation with respect to $\mathcal{N}$
	with $M>0$.} 
\begin{lem}
	\label{lem:truncation} Suppose that the assumptions in Theorem \ref{thm:main}
	hold and let $\mathcal{U}^{(M)}$ be the unitary operator defined
	in \eqref{eq:def_mathcalUM}. Then, there exist constants $C$ and
	$K$ such that, for all $N\in\mathbb{N}$ and $M>0$, $\psi\in\mathcal{F}$,
	and $t,s,\in\mathbb{R}$, 
	\begin{align*}
	&\left\langle \mathcal{U}^{(M)}(t;s)\psi,(\mathcal{N}+1)^{j}\mathcal{U}^{(M)}(t;s)\psi\right\rangle \\
	&\leq\left\langle \psi,(\mathcal{N}+1)^{j}\psi\right\rangle C\exp\left(4^{j}K|t-s|\left(1+\sqrt{\frac{M}{N}}+\frac{M}{N}+\alpha_{N}^{-1}\left(\frac{M}{N}\right)^{3/2}\right)\right).
	\end{align*}
\end{lem}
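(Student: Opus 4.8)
The strategy is the standard Gr\"onwall argument for the growth of moments of $\mathcal{N}$ under the truncated evolution $\mathcal{U}^{(M)}(t;s)$, following the scheme of \cite{Rodnianski2009,ChenLeeSchlein2011,Lee2013}. Write $\psi_t = \mathcal{U}^{(M)}(t;s)\psi$ and differentiate:
\[
\frac{\mathrm{d}}{\mathrm{d}t}\left\langle \psi_t,(\mathcal{N}+1)^j\psi_t\right\rangle = \mathrm{i}\left\langle \psi_t,\big[\mathcal{L}_N^{(M)}(t),(\mathcal{N}+1)^j\big]\psi_t\right\rangle.
\]
The commutator kills every term in $\mathcal{L}_N^{(M)}(t)$ that commutes with $\mathcal{N}$: namely the kinetic term $\int a_x^*(-\Delta)a_x$, the number-preserving parts of $\mathcal{L}_2$ (the $a_y^*a_z$ pieces), the diagonal part of $\mathcal{L}_4^r$, and $\mathcal{L}_6$. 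What remains are the terms that change the particle number by $\pm 2$ (from $\mathcal{L}_2$), by $\pm 3$ and $\pm 1$ (from the cutoff $\mathcal{L}_3$), by $\pm 2$ (from $\mathcal{L}_4^c$), and by $\pm 1$ (from the cutoff $\mathcal{L}_5$); all of these now carry $\chi(\mathcal{N}\le M)$ except the $\mathcal{L}_2$ ones.

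First I would reduce each contribution to a bound of the form $C_k(t)\langle \psi_t,(\mathcal{N}+1)^j\psi_t\rangle$ by a Cauchy--Schwarz / pull-through argument: for a monomial in creation/annihilation operators that shifts $\mathcal{N}$ by $\ell$, one splits $(\mathcal{N}+1)^j$ symmetrically, commutes factors of $\mathcal{N}$ through, and estimates the resulting kernels. The $\mathcal{L}_2$ terms are controlled purely by $\|\varphi_t\|_{H^1}$ and the operator inequality $\overline{v}\le v\le C(1-\Delta)$ (Hardy), giving a time-dependent constant $K$ bounded by a constant depending on $\sup_t\|\varphi_t\|_{H^1}$, which is finite by Lemma~\ref{lem:H1bdd}; this produces the bare $K|t-s|$ in the exponent. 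The cubic term $\mathcal{L}_3$, carrying $N^{-1/2}$ and $\chi(\mathcal{N}\le M)$, contributes a factor $\sqrt{M/N}$ after using $\chi(\mathcal{N}\le M)\mathcal{N}^{1/2}\le \sqrt{M}$ on the extra annihilation/creation operator; $\mathcal{L}_4^c$ carries $N^{-1}$ and one extra $a_z^*a_z$ bounded by $M$ on the cutoff, giving $M/N$; and $\mathcal{L}_5$, carrying $N^{-3/2}$ with three ``genuine'' operators plus $\chi(\mathcal{N}\le M)$, gives $N^{-3/2}M^{3/2}=(M/N)^{3/2}$, but here the singular factor $\overline{V}$ forces one to pay $\|\overline{V}(x-y,x-\cdot)\overline{\varphi_t}\|_{L^2}\le C\alpha_N^{-1}$ (this is exactly where regularization of $v$ is used), producing the $\alpha_N^{-1}(M/N)^{3/2}$ term. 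Assembling, $\frac{\mathrm{d}}{\mathrm{d}t}\langle\psi_t,(\mathcal{N}+1)^j\psi_t\rangle \le 4^j K\big(1+\sqrt{M/N}+M/N+\alpha_N^{-1}(M/N)^{3/2}\big)\langle\psi_t,(\mathcal{N}+1)^j\psi_t\rangle$, where the $4^j$ tracks the combinatorial cost of the pull-through for the $j$-th moment, and Gr\"onwall gives the claim.

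The main obstacle is the estimate on $\mathcal{L}_5$: unlike the two-body case, $\overline{V}(x-y,x-z)$ is a product of two singular factors, so $\|\overline{V}(x-y,x-\cdot)\overline{\varphi_t(\cdot)}\|_{L^2_z}$ is singular in \emph{both} $x$ and $y$ even after one integration, and one cannot absorb it into $H^1$-norms of $\varphi_t$ alone. The regularization $\overline{v}\le \alpha_N^{-1}$ is precisely what makes $\|\overline{V}(x-y,x-\cdot)\overline{\varphi_t(\cdot)}\|_{L^2}\le C\lambda\alpha_N^{-1}$ uniformly, at the cost of the $\alpha_N^{-1}$ factor in the exponent; balancing this against $M=N^{1/3}$ and $\alpha_N=N^{-\eta}$ with $\eta<3/2$ keeps the bracket bounded. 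I would handle this term by isolating $a_x^*a_y^*(\cdots)a_ya_x$, writing the middle $\varphi_t(z)a_z^*+\overline{\varphi_t(z)}a_z$ against $(\mathcal{N}+1)^{1/2}$, and estimating the two-particle kernel $\overline{V}(x-y,x-z)\varphi_t(z)$ in the Hilbert--Schmidt sense after the $z$-integration, using the regularized bound; the $\chi(\mathcal{N}\le M)$ then controls the remaining $\mathcal{N}^{3/2}$ from the four outer operators.
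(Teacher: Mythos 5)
Your plan follows essentially the same route as the paper's proof: differentiate $\langle\psi_t,(\mathcal{N}+1)^j\psi_t\rangle$, note that only the pair-creation part of $\mathcal{L}_2$ and the cutoff terms $\mathcal{L}_3$, $\mathcal{L}_4^c$, $\mathcal{L}_5$ survive the commutator (indeed all of $\mathcal{L}_4^r$, not just its diagonal part, is number-preserving and drops), use pull-through identities and Cauchy--Schwarz to get the respective factors $1$, $\sqrt{M/N}$, $M/N$, and $\alpha_N^{-1}(M/N)^{3/2}$ --- with the regularization $\overline{v}\le\alpha_N^{-1}$ entering exactly in the $\mathcal{L}_5$ term as you indicate --- and close with Gr\"onwall using the $H^1$ bound on $\varphi_t$. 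This matches the paper's argument, so the proposal is correct in approach.
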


\begin{proof}
	Following the proof of Lemma 3.5 in \cite{Rodnianski2009}, we get
	\begin{align*}
	& \frac{\mathrm{d}}{\mathrm{d}t}\left\langle \mathcal{U}^{(M)}(t;0)\psi,(\mathcal{N}+1)^{j}\mathcal{U}^{(M)}(t;0)\psi\right\rangle \\
	& =\left\langle \mathcal{U}^{(M)}(t;0)\psi,[\mathrm{i}\mathcal{L}_{N}^{(M)}(t),(\mathcal{N}+1)^{j}]\,\mathcal{U}^{(M)}(t;0)\psi\right\rangle \\
	& =\operatorname{Im}\int\mathrm{d}x\mathrm{d}y\mathrm{d}z\,\overline{V}(x-y,x-z)|\varphi_{t}(x)|^{2}\varphi_{t}(y)\varphi_{t}(z)\left\langle \mathcal{U}^{(M)}(t;0)\psi,[a_{y}^{*}a_{z}^{*},(\mathcal{N}+1)^{j}]\,\mathcal{U}^{(M)}(t;0)\psi\right\rangle \\
	& \quad+\frac{1}{3\sqrt{N}}\operatorname{Im}\int\mathrm{d}x\mathrm{d}y\mathrm{d}z\,\overline{V}(x-y,x-z)\varphi_{t}(x)\varphi_{t}(y)\varphi_{t}(z)\left\langle \mathcal{U}^{(M)}(t;0)\psi,[a_{x}^{*}a_{y}^{*}a_{z}^{*}\chi(\mathcal{N}\leq M),(\mathcal{N}+1)^{j}]\,\mathcal{U}^{(M)}(t;0)\psi\right\rangle \\
	& \quad+\frac{1}{\sqrt{N}}\operatorname{Im}\int\mathrm{d}x\mathrm{d}y\mathrm{d}z\,\overline{V}(x-y,x-z)\varphi_{t}(x)\varphi_{t}(y)\overline{\varphi_{t}(z)}\left\langle \mathcal{U}^{(M)}(t;0)\psi,[a_{x}^{*}a_{y}^{*}a_{z}\chi(\mathcal{N}\leq M),(\mathcal{N}+1)^{j}]\,\mathcal{U}^{(M)}(t;0)\psi\right\rangle \\
	& \quad+\frac{1}{N}\operatorname{Im}\int\mathrm{d}x\mathrm{d}y\mathrm{d}z\,\overline{V}(x-y,x-z)\varphi_{t}(x)\varphi_{t}(y)\left\langle \mathcal{U}^{(M)}(t;0)\psi,[a_{x}^{*}a_{y}^{*}a_{z}^{*}a_{z}\chi(\mathcal{N}\leq M),(\mathcal{N}+1)^{j}]\,\mathcal{U}^{(M)}(t;0)\psi\right\rangle \\
	& \quad+\frac{1}{N\sqrt{N}}\operatorname{Im}\int\mathrm{d}x\mathrm{d}y\mathrm{d}z\,\overline{V}(x-y,x-z)\varphi_{t}(z)\left\langle \mathcal{U}^{(M)}(t;0)\psi,[a_{x}^{*}a_{y}^{*}\chi(\mathcal{N}\leq M)a_{z}^{*}a_{y}a_{x},(\mathcal{N}+1)^{j}]\,\mathcal{U}^{(M)}(t;0)\psi\right\rangle .
	\end{align*}
	Using the pull-through formulae $a_{x}\mathcal{N}=(\mathcal{N}+1)a_{x}$
	and $a_{x}^{*}\mathcal{N}=(\mathcal{N}-1)a_{x}^{*}$, we find
	\[
	[a_{x}^{*},(\mathcal{N}+1)^{j}]=\sum_{k=0}^{j-1}{j \choose k}(-1)^{k}(\mathcal{N}+1)^{k}a_{x}^{*},\quad[a_{x},(\mathcal{N}+1)^{j}]=\sum_{k=0}^{j-1}{j \choose k}(-1)^{k}(\mathcal{N}+1)^{k}a_{x}.
	\]
	As a consequence,
	\begin{align*}
	&[a_{x}^{*}a_{y}^{*},(\mathcal{N}+1)^{j}]\\ & =\sum_{k=0}^{j-1}{j \choose k}(-1)^{k}\left(a_{x}^{*}(\mathcal{N}+1)^{k}a_{y}^{*}+(\mathcal{N}+1)^{k}a_{x}^{*}a_{y}^{*}\right)\\
	& =\sum_{k=0}^{j-1}{j \choose k}(-1)^{k}\left(\mathcal{N}^{k/2}a_{x}^{*}a_{y}^{*}(\mathcal{N}+1)^{k/2}+(\mathcal{N}+1)^{k/2}a_{x}^{*}a_{y}^{*}(\mathcal{N}+3)^{k/2}\right)\\{}
	&[a_{x},(\mathcal{N}+1)^{j}]\\ & =\sum_{k=0}^{j-1}{j \choose k}(-1)^{k}(\mathcal{N}+1)^{k}a_{x}=\sum_{k=0}^{j-1}{j \choose k}(-1)^{k}(\mathcal{N}+1)^{k/2}a_{x}\mathcal{N}^{k/2}.
	\end{align*}
	Moreover,
	\begin{align*}
	&[a_{x}^{*}a_{y}^{*}a_{z}^{*},(\mathcal{N}+1)^{j}]\\
	 & =\sum_{k=0}^{j-1}{j \choose k}(-1)^{k}\Big((\mathcal{N}-1)^{k/2}a_{x}^{*}a_{y}^{*}a_{z}^{*}(\mathcal{N}+2)^{k/2}+\mathcal{N}{}^{k/2}a_{x}^{*}a_{y}^{*}a_{z}^{*}(\mathcal{N}+3)^{k/2}\\
	& \qquad+(\mathcal{N}+1)^{k/2}a_{x}^{*}a_{y}^{*}a_{z}^{*}(\mathcal{N}+4)^{k/2}\Big)\\{}
	&[a_{x}^{*}a_{y}^{*}a_{z},(\mathcal{N}+1)^{j}] \\
	& =\sum_{k=0}^{j-1}{j \choose k}(-1)^{k}\Big((\mathcal{N}-1)^{k/2}a_{x}^{*}a_{y}^{*}a_{z}\mathcal{N}{}^{k/2}+\mathcal{N}{}^{k/2}a_{x}^{*}a_{y}^{*}a_{z}(\mathcal{N}+1)^{k/2}\\
	& \qquad+(\mathcal{N}+1)^{k/2}a_{x}^{*}a_{y}^{*}a_{z}(\mathcal{N}+2)^{k/2}\Big)\\{}
	&[a_{x}^{*}a_{y}^{*}a_{z}^{*}a_{z},(\mathcal{N}+1)^{j}]\\
	& =\sum_{k=0}^{j-1}{j \choose k}(-1)^{k}\Big((\mathcal{N}-2)^{k/2}a_{x}^{*}a_{y}^{*}a_{z}^{*}a_{z}\mathcal{N}{}^{k/2}+(\mathcal{N}-1){}^{k/2}a_{x}^{*}a_{y}^{*}a_{z}^{*}a_{z}(\mathcal{N}+1)^{k/2}\\
	& \qquad+\mathcal{N}{}^{k/2}a_{x}^{*}a_{y}^{*}a_{z}^{*}a_{z}(\mathcal{N}+2)^{k/2}+(\mathcal{N}+1)^{k/2}a_{x}^{*}a_{y}^{*}a_{z}^{*}a_{z}(\mathcal{N}+3)^{k/2}\Big)\\{}
	&[a_{x}^{*}a_{y}^{*}a_{z}^{*}a_{y}a_{x},(\mathcal{N}+1)^{j}]\\
	& =\sum_{k=0}^{j-1}{j \choose k}(-1)^{k}\Big((\mathcal{N}-3)^{k/2}a_{x}^{*}a_{y}^{*}a_{z}^{*}a_{y}a_{x}(\mathcal{N}-2){}^{k/2}+(\mathcal{N}-2){}^{k/2}a_{x}^{*}a_{y}^{*}a_{z}^{*}a_{y}a_{x}(\mathcal{N}-1)^{k/2}\\
	& \qquad+(\mathcal{N}-1){}^{k/2}a_{x}^{*}a_{y}^{*}a_{z}^{*}a_{y}a_{x}\mathcal{N}^{k/2}+\mathcal{N}{}^{k/2}a_{x}^{*}a_{y}^{*}a_{z}^{*}a_{y}a_{x}(\mathcal{N}+1)^{k/2}\\
	& \qquad+(\mathcal{N}+1)^{k/2}a_{x}^{*}a_{y}^{*}a_{z}^{*}a_{y}a_{x}(\mathcal{N}+2)^{k/2}\Big)
	\end{align*}
	Therefore,
	\begin{align}
	& \frac{\mathrm{d}}{\mathrm{d}t}\left\langle \mathcal{U}^{(M)}(t;0)\psi,(\mathcal{N}+1)^{j}\mathcal{U}^{(M)}(t;0)\psi\right\rangle \nonumber \\
	& =\sum_{k=0}^{j-1}{j \choose k}(-1)^{k}\operatorname{Im}\int\mathrm{d}x\mathrm{d}y\mathrm{d}z\,\overline{V}(x-y,x-z)|\varphi_{t}(x)|^{2}\varphi_{t}(y)\varphi_{t}(z)\nonumber \\
	& \times\left\langle \mathcal{U}^{(M)}(t;0)\psi,\left(\mathcal{N}^{k/2}a_{y}^{*}a_{z}^{*}(\mathcal{N}+2)^{k/2}+(\mathcal{N}+1)^{k/2}a_{y}^{*}a_{z}^{*}(\mathcal{N}+3)^{k/2}\right)\mathcal{U}^{(M)}(t;0)\psi\right\rangle \nonumber \\
	& +\frac{1}{3\sqrt{N}}\sum_{k=0}^{j-1}{j \choose k}\operatorname{Im}\int\mathrm{d}x\mathrm{d}y\mathrm{d}z\,\overline{V}(x-y,x-z)\varphi_{t}(x)\varphi_{t}(y)\varphi_{t}(z)\nonumber \\
	& \times\Bigg\langle\mathcal{U}^{(M)}(t;0)\psi,\sum_{k=0}^{j-1}{j \choose k}(-1)^{k}\chi(\mathcal{N}\leq M)\Big((\mathcal{N}-1)^{k/2}a_{x}^{*}a_{y}^{*}a_{z}^{*}(\mathcal{N}+2)^{k/2}\nonumber \\
	& \qquad\qquad+\mathcal{N}{}^{k/2}a_{x}^{*}a_{y}^{*}a_{z}^{*}(\mathcal{N}+3)^{k/2}+(\mathcal{N}+1)^{k/2}a_{x}^{*}a_{y}^{*}a_{z}^{*}(\mathcal{N}+4)^{k/2}\Big)\mathcal{U}^{(M)}(t;0)\psi\Bigg\rangle\nonumber \\
	& +\frac{1}{\sqrt{N}}\sum_{k=0}^{j-1}{j \choose k}\operatorname{Im}\int\mathrm{d}x\mathrm{d}y\mathrm{d}z\,\overline{V}(x-y,x-z)\varphi_{t}(x)\varphi_{t}(y)\varphi_{t}(z)\nonumber \\
	& \qquad\times\Bigg\langle\mathcal{U}^{(M)}(t;0)\psi,\sum_{k=0}^{j-1}{j \choose k}(-1)^{k}\chi(\mathcal{N}\leq M)\Big((\mathcal{N}-1)^{k/2}a_{x}^{*}a_{y}^{*}a_{z}\mathcal{N}{}^{k/2}\nonumber \\
	& \qquad\qquad+\mathcal{N}{}^{k/2}a_{x}^{*}a_{y}^{*}a_{z}(\mathcal{N}+1)^{k/2}+(\mathcal{N}+1)^{k/2}a_{x}^{*}a_{y}^{*}a_{z}(\mathcal{N}+2)^{k/2}\Big)\mathcal{U}^{(M)}(t;0)\psi\Bigg\rangle\nonumber \\
	& +\frac{1}{N}\sum_{k=0}^{j-1}{j \choose k}\operatorname{Im}\int\mathrm{d}x\mathrm{d}y\mathrm{d}z\,\overline{V}(x-y,x-z)\varphi_{t}(x)\varphi_{t}(y)\nonumber \\
	& \qquad\times\Bigg\langle\mathcal{U}^{(M)}(t;0)\psi,\chi(\mathcal{N}\leq M)\Big((\mathcal{N}-2)^{k/2}a_{x}^{*}a_{y}^{*}a_{z}^{*}a_{z}\mathcal{N}{}^{k/2}\nonumber \\
	& \qquad\qquad+(\mathcal{N}-1){}^{k/2}a_{x}^{*}a_{y}^{*}a_{z}^{*}a_{z}(\mathcal{N}+1)^{k/2}+\mathcal{N}{}^{k/2}a_{x}^{*}a_{y}^{*}a_{z}^{*}a_{z}(\mathcal{N}+2)^{k/2}\nonumber\\
	& \qquad\qquad+(\mathcal{N}+1)^{k/2}a_{x}^{*}a_{y}^{*}a_{z}^{*}a_{z}(\mathcal{N}+3)^{k/2}\Big)\mathcal{U}^{(M)}(t;0)\psi\Bigg\rangle\nonumber \\
	& +\frac{1}{N\sqrt{N}}\sum_{k=0}^{j-1}{j \choose k}\operatorname{Im}\int\mathrm{d}x\mathrm{d}y\mathrm{d}z\,\overline{V}(x-y,x-z)\varphi_{t}(z)\nonumber \\
	& \qquad\times\Bigg\langle\mathcal{U}^{(M)}(t;0)\psi,\chi(\mathcal{N}\leq M)\Big((\mathcal{N}-3)^{k/2}a_{x}^{*}a_{y}^{*}a_{z}^{*}a_{y}a_{x}(\mathcal{N}-2){}^{k/2}\nonumber \\
	&\qquad\qquad+(\mathcal{N}-2){}^{k/2}a_{x}^{*}a_{y}^{*}a_{z}^{*}a_{y}a_{x}(\mathcal{N}-1)^{k/2} \nonumber\\
	& \qquad\qquad+(\mathcal{N}-1){}^{k/2}a_{x}^{*}a_{y}^{*}a_{z}^{*}a_{y}a_{x}\mathcal{N}^{k/2}+\mathcal{N}{}^{k/2}a_{x}^{*}a_{y}^{*}a_{z}^{*}a_{y}a_{x}(\mathcal{N}+1)^{k/2}\nonumber \\
	& \qquad\qquad+(\mathcal{N}+1)^{k/2}a_{x}^{*}a_{y}^{*}a_{z}^{*}a_{y}a_{x}(\mathcal{N}+2)^{k/2}\Big)\mathcal{U}^{(M)}(t;0)\psi\Bigg\rangle\nonumber \\
	& =:I_{1}+I_{2}+I_{3}+I_{4}+I_{5}.\label{eq:ddtUM}
	\end{align}
	To control the contribution from the terms in the right-hand side
	of \eqref{eq:ddtUM}, we use the bounds of the form
	\begin{align*}
	I_{1}= & \left|\int\mathrm{d}x\mathrm{d}y\mathrm{d}z\,\overline{V}(x-y,x-z)|\varphi_{t}(x)|^{2}\varphi_{t}(y)\varphi_{t}(z)\langle\mathcal{U}^{(M)}(t;0)\psi,(\mathcal{N}+1)^{\frac{k}{2}}a_{y}^{*}a_{z}^{*}(\mathcal{N}+3)^{\frac{k}{2}}\mathcal{U}^{(M)}(t;0)\psi\rangle\right|\\
	& \quad\leq\int\mathrm{d}x\mathrm{d}y\,\overline{V}(x-y)|\varphi_{t}(x)|^{2}|\varphi_{t}(y)|\|a_{y}(\mathcal{N}+1)^{\frac{k}{2}}\mathcal{U}^{(M)}(t;0)\psi\|\|a^{*}(\overline{v}(x-\cdot)\varphi_{t})(\mathcal{N}+3)^{\frac{k}{2}}\mathcal{U}^{(M)}(t;0)\psi\|\\
	& \quad\leq K\|(\mathcal{N}+3)^{\frac{k+1}{2}}\mathcal{U}^{(M)}(t;0)\psi\|^{2},
	\end{align*}
	\begin{align*}
	&N^{1/2}I_{2}=\\
	& \left|\int\mathrm{d}x\mathrm{d}y\mathrm{d}z\,\overline{V}(x-y,x-z)\varphi_{t}(x)\varphi_{t}(y)\varphi_{t}(z)\langle\mathcal{U}^{(M)}(t;0)\psi,(\mathcal{N}+1)^{\frac{k}{2}}a_{x}^{*}a_{y}^{*}a_{z}^{*}(\mathcal{N}+3)^{\frac{k}{2}}\mathcal{U}^{(M)}(t;0)\psi\rangle\right|\\
	& \quad\leq\int\mathrm{d}x\mathrm{d}y\,\overline{v}(x-y)|\varphi_{t}(x)||\varphi_{t}(y)|\|a_{x}a_{y}\chi(\mathcal{N}\leq M)(\mathcal{N}+1)^{\frac{k}{2}}\mathcal{U}^{(M)}(t;0)\psi\|\\
	& \qquad\qquad\qquad\times\|a^{*}(\overline{v}(x-\cdot)\varphi_{t})\chi(\mathcal{N}\leq M)(\mathcal{N}+3)^{\frac{k}{2}}\mathcal{U}^{(M)}(t;0)\psi\|\\
	& \quad\leq KM^{1/2}\|(\mathcal{N}+3)^{\frac{k+1}{2}}\mathcal{U}^{(M)}(t;0)\psi\|^{2},
	\end{align*}
	\begin{align*}
	&N^{1/2}I_{3}= \\
	& \left|\int\mathrm{d}x\mathrm{d}y\mathrm{d}z\,\overline{V}(x-y,x-z)\varphi_{t}(x)\varphi_{t}(y)\varphi_{t}(z)\langle\mathcal{U}^{(M)}(t;0)\psi,(\mathcal{N}+1)^{\frac{k}{2}}a_{x}^{*}a_{y}^{*}a_{z}(\mathcal{N}+3)^{\frac{k}{2}}\mathcal{U}^{(M)}(t;0)\psi\rangle\right|\\
	& \quad\leq\int\mathrm{d}x\mathrm{d}y\,\overline{v}(x-y)|\varphi_{t}(x)||\varphi_{t}(y)|\|a_{x}a_{y}\chi(\mathcal{N}\leq M)(\mathcal{N}+1)^{\frac{k}{2}}\mathcal{U}^{(M)}(t;0)\psi\|\\
	& \qquad\qquad\qquad\times\|a(v(x-\cdot)\varphi_{t})\chi(\mathcal{N}\leq M)(\mathcal{N}+3)^{\frac{k}{2}}\mathcal{U}^{(M)}(t;0)\psi\|\\
	& \quad\leq KM^{1/2}\|(\mathcal{N}+3)^{\frac{k+1}{2}}\mathcal{U}^{(M)}(t;0)\psi\|^{2},
	\end{align*}
	and
	\begin{align*}
	&NI_{4}=\\
	&\left|\int\mathrm{d}x\mathrm{d}y\mathrm{d}z\,\overline{V}(x-y,x-z)\varphi_{t}(x)\varphi_{t}(y)\langle\mathcal{U}^{(M)}(t;0)\psi,(\mathcal{N}+1)^{\frac{k}{2}}a_{x}^{*}a_{y}^{*}a_{z}^{*}a_{z}(\mathcal{N}+3)^{\frac{k}{2}}\mathcal{U}^{(M)}(t;0)\psi\rangle\right|\\
	& \quad\leq\int\mathrm{d}x\mathrm{d}y\mathrm{d}z\,\overline{V}(x-y,x-z)|\varphi_{t}(x)||\varphi_{t}(y)|\\
	&\qquad\qquad\qquad\times\|a_{z}a_{y}a_{x}\chi(\mathcal{N}\leq M)(\mathcal{N}+1)^{\frac{k}{2}}\mathcal{U}^{(M)}(t;0)\psi\|\|a_{z}(\mathcal{N}+3)^{\frac{k}{2}}\mathcal{U}^{(M)}(t;0)\psi\|\\
	& \quad\leq KM\|(\mathcal{N}+3)^{\frac{k+1}{2}}\mathcal{U}^{(M)}(t;0)\psi\|^{2}.
	\end{align*}
	On the other hand, to control contribution arising from $I_{5}$ in
	the right-hand side of \eqref{eq:ddtUM}, we use that 
	\begin{align*}
	&N^{3/2}I_{5}= \\& \left|\int\mathrm{d}x\mathrm{d}y\,\overline{V}(x-y,x-z)\varphi_{t}(z)\langle\mathcal{U}^{(M)}(t;0)\psi,(\mathcal{N}+1)^{k/2}a_{x}^{*}a_{y}^{*}a_{z}^{*}a_{y}a_{x}\chi(\mathcal{N}\leq M)(\mathcal{N}+2)^{k/2}\mathcal{U}^{(M)}(t;0)\psi\rangle\right|\\
	& \quad\leq\int\mathrm{d}x\,\alpha_{N}^{-1}\left\Vert a_{x}a_{y}a_{z}\chi(\mathcal{N}\leq M)(\mathcal{N}+1)^{\frac{k}{2}}\mathcal{U}^{(M)}(t;0)\psi\right\Vert\\
	&\qquad\qquad\times \left\Vert a(v(x-\cdot)\varphi_{t})\chi(\mathcal{N}\leq M)\right\Vert \left\Vert a_{x}\mathcal{N}^{\frac{k}{2}}\mathcal{U}^{(M)}(t;0)\psi\right\Vert \\
	& \quad\leq K\alpha_{N}^{-1}M^{3/2}\sup_{x}\|v(x-\cdot)\varphi_{t}\|\|(\mathcal{N}+1)^{\frac{k+1}{2}}\mathcal{U}^{(M)}(t;0)\psi\|^{2}.
	\end{align*}
	This implies 
	\begin{align*}
	& \frac{\mathrm{d}}{\mathrm{d}t}\left\langle \mathcal{U}^{(M)}(t;0)\psi,(\mathcal{N}+1)^{j}\mathcal{U}^{(M)}(t;0)\psi\right\rangle \\
	& \quad\leq K\left(1+\sqrt{\frac{M}{N}}+\frac{M}{N}+\alpha_{N}^{-1}\left(\frac{M}{N}\right)^{3/2}\right)\sum_{k=0}^{j}{j \choose k}\left\langle \mathcal{U}^{(M)}(t;0)\psi,(\mathcal{N}+3)^{j}\mathcal{U}^{(M)}(t;0)\psi\right\rangle \\
	& \quad\leq4^{j}K\left(1+\sqrt{\frac{M}{N}}+\frac{M}{N}+\alpha_{N}^{-1}\left(\frac{M}{N}\right)^{3/2}\right)\left\langle \mathcal{U}^{(M)}(t;0)\psi,(\mathcal{N}+1)^{j}\mathcal{U}^{(M)}(t;0)\psi\right\rangle .
	\end{align*}
	Applying the Grönwall Lemma with Lemma \ref{lem:V2phiphi}, we get
	the desired result.
\end{proof}
\noindent \emph{Step 2: Weak bounds on the $\mathcal{U}$ dynamics.} 
\begin{lem}
	\label{lem:UNU} For arbitrary $t,s\in\mathbb{R}$ and $\psi\in\mathcal{F}$,
	we have 
	\[
	\left\langle \psi,\mathcal{U}(t;s)\mathcal{N}\mathcal{U}(t;s)^{*}\psi\right\rangle \leq6\left\langle \psi,(\mathcal{N}+N+1)\psi\right\rangle .
	\]
	Moreover, for every $\ell\in\mathbb{N}$, there exists a constant
	$C(\ell)$ such that 
	\[
	\left\langle \psi,\mathcal{U}(t;s)\mathcal{N}^{2\ell}\mathcal{U}(t;s)^{*}\psi\right\rangle \leq C(\ell)\left\langle \psi,(\mathcal{N}+N)^{2\ell}\psi\right\rangle ,
	\]
	\[
	\left\langle \psi,\mathcal{U}(t;s)\mathcal{N}^{2\ell+1}\mathcal{U}(t;s)^{*}\psi\right\rangle \leq C(\ell)\left\langle \psi,(\mathcal{N}+N)^{2\ell+1}(\mathcal{N}+1)\psi\right\rangle 
	\]
	for all $t,s\in\mathbb{R}$ and $\psi\in\mathcal{F}$. 
\end{lem}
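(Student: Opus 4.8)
The plan is to undo the definition of $\mathcal{U}(t;s)$ back to Weyl operators. From \eqref{eq:derivative_decomposition} and the uniqueness of solutions of the linear equation \eqref{eq:def_mathcalU} (both sides solve $\mathrm{i}\partial_t(\cdot)=\mathcal{L}(\cdot)$ and coincide at $t=s$), one has the factorization
\[
\mathcal{U}(t;s)=W^{*}(\sqrt{N}\varphi_{t})\,e^{-\mathrm{i}\overline{\mathcal{H}}_{N}(t-s)}\,W(\sqrt{N}\varphi_{s}),\qquad
\mathcal{U}^{*}(t;s)=W^{*}(\sqrt{N}\varphi_{s})\,e^{\mathrm{i}\overline{\mathcal{H}}_{N}(t-s)}\,W(\sqrt{N}\varphi_{t}).
\]
Because $\overline{\mathcal{H}}_{N}$ preserves the particle number, $e^{\mp\mathrm{i}\overline{\mathcal{H}}_{N}(t-s)}$ commutes with $\mathcal{N}$ and with every function of $\mathcal{N}$; hence, once $W(\sqrt{N}\varphi_{s})\mathcal{N}^{m}W^{*}(\sqrt{N}\varphi_{s})$ has been dominated by a function of $\mathcal{N}$ alone, the two middle unitaries drop out. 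Everything then reduces to estimating the conjugation of powers of $\mathcal{N}$ by the coherent shifts $W^{\#}(\sqrt{N}\varphi_{s})$, $W^{\#}(\sqrt{N}\varphi_{t})$, whose shift vectors have $L^{2}$-norm $1$ since the quintic Hartree flow conserves mass.

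For the first moment I would use the Weyl relation $W^{*}(f)a_{x}W(f)=a_{x}+f(x)$, which gives
\[
W(\sqrt{N}\varphi_{s})\,\mathcal{N}\,W^{*}(\sqrt{N}\varphi_{s})=\mathcal{N}-\sqrt{N}\,\phi(\varphi_{s})+N,\qquad
W^{*}(\sqrt{N}\varphi_{t})\,\mathcal{N}\,W(\sqrt{N}\varphi_{t})=\mathcal{N}+\sqrt{N}\,\phi(\varphi_{t})+N,
\]
with $\phi(f):=a^{*}(f)+a(f)$. Then $\pm\sqrt{N}\phi(\varphi)\le\mathcal{N}+N+1$ follows from $\|\phi(f)\psi\|\le 2\|f\|\,\|(\mathcal{N}+1)^{1/2}\psi\|$ together with $2(\sqrt{N}\|\psi\|)\,\|(\mathcal{N}+1)^{1/2}\psi\|\le N\|\psi\|^{2}+\|(\mathcal{N}+1)^{1/2}\psi\|^{2}$, so that $W^{\#}(\sqrt{N}\varphi)\,\mathcal{N}\,W^{\#}(\sqrt{N}\varphi)\le 2(\mathcal{N}+N+1)$. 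Inserting this into the factorization, commuting $e^{\mp\mathrm{i}\overline{\mathcal{H}}_{N}(t-s)}$ through $(\mathcal{N}+N+1)$, and applying the same bound once more for the outer Weyl operator yields, after collecting constants, $\langle\psi,\mathcal{U}(t;s)\mathcal{N}\mathcal{U}(t;s)^{*}\psi\rangle\le 6\langle\psi,(\mathcal{N}+N+1)\psi\rangle$, which is the first assertion.

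For the higher moments I would iterate the same identity, $W^{\#}(\sqrt{N}\varphi)\mathcal{N}^{m}W^{\#}(\sqrt{N}\varphi)=\bigl(\mathcal{N}\pm\sqrt{N}\phi(\varphi)+N\bigr)^{m}$, expand the $m$-th power into monomials in $\mathcal{N}$, $\sqrt{N}\phi(\varphi)$ and $N$, and move all the $\phi(\varphi)$-factors of each monomial next to one another using $[\mathcal{N},a_{x}^{*}]=a_{x}^{*}$ and $[\mathcal{N},a_{x}]=-a_{x}$ (each commutator produces one fewer $\phi$-factor and is absorbed inductively). A monomial with $k$ factors $\sqrt{N}\phi(\varphi)$ is bounded, via $\|\phi(\varphi)^{k}\psi\|\le C_{k}\|(\mathcal{N}+1)^{k/2}\psi\|$ and Cauchy--Schwarz, by $C\,N^{k/2}(\mathcal{N}+1)^{k/2}(\mathcal{N}+N+1)^{m-k}$; pairing each $\sqrt{N}$ with half of one factor $\mathcal{N}+N$, i.e.\ $\sqrt{N}(\mathcal{N}+1)^{1/2}\le\mathcal{N}+N+1$, this is $\le C(\mathcal{N}+N+1)^{m}$ when $m=2\ell$. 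When $m=2\ell+1$ the worst monomial has an odd number of $\phi$-factors, so after pairing one field operator is left unmatched, the corresponding matrix element can no longer be split symmetrically, and absorbing that leftover operator costs one extra $(\mathcal{N}+1)^{1/2}$ on each side, giving $C(\mathcal{N}+N+1)^{2\ell+1}(\mathcal{N}+1)$. Undoing the outer Weyl conjugation in the same way, and using $\mathcal{N}+N+1\le 2(\mathcal{N}+N)$ (recall $N\ge 1$), produces the two stated inequalities after adjusting $C(\ell)$.

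The main obstacle is precisely this bookkeeping in the expansion of $\bigl(\mathcal{N}\pm\sqrt{N}\phi(\varphi)+N\bigr)^{m}$: one must check that no monomial produces a surplus power of $N$ — each of the $k$ factors $\sqrt{N}$ has to be matched with exactly one half of an $\mathcal{N}+N$ and never with a bare $N$ — and one must correctly track the unmatched field operator in the odd case, which is the source of the asymmetric extra factor $\mathcal{N}+1$ there. These are routine though lengthy operator estimates of the kind carried out in \cite{Rodnianski2009,ChenLeeSchlein2011,Lee2013}, and I would follow those references for the remaining details.
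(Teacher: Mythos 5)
Your proposal is correct and follows essentially the same route as the paper: the paper proves this lemma simply by invoking Lemma 3.6 of \cite{Rodnianski2009} ``without any change'', and that proof is exactly the argument you give — factorize $\mathcal{U}(t;s)=W^{*}(\sqrt{N}\varphi_{t})e^{-\mathrm{i}\overline{\mathcal{H}}_{N}(t-s)}W(\sqrt{N}\varphi_{s})$, use that $\overline{\mathcal{H}}_{N}$ commutes with $\mathcal{N}$, and control the Weyl conjugations $\mathcal{N}\mapsto\mathcal{N}\pm\sqrt{N}\phi(\varphi)+N$ via $\|\phi(f)\psi\|\leq 2\|f\|\|(\mathcal{N}+1)^{1/2}\psi\|$, with the parity of the moment explaining the extra factor $(\mathcal{N}+1)$ in the odd case. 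You in fact supply more detail than the paper does.
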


\begin{proof}
	We may follow the proof of Lemma 3.6 in \cite{Rodnianski2009} without
	any change. 
\end{proof}
\noindent \emph{Step 3: Comparison between the $\mathcal{U}$ and
	$\mathcal{U}^{(M)}$ dynamics.} 
\begin{lem}
	\label{lem:comparison} Suppose that the assumptions in Theorem \ref{thm:main}
	hold. Then, for every $j\in\mathbb{N}$, there exist constants $C\equiv C(j)$
	and $K\equiv K(j)$ such that 
	\begin{align*}
	& \left|\left\langle \mathcal{U}(t;s)\psi,\mathcal{N}^{j}\left(\mathcal{U}(t;s)-\mathcal{U}^{(M)}(t;s)\right)\psi\right\rangle \right|\\
	& \quad\leq C(j)\frac{\left((N/M)^{j}+(N/M)^{j-1/2}+(N/M)^{j-1}\right)\|(\mathcal{N}+1)^{j+3/2}\psi\|^{2}}{\left(1+\sqrt{\frac{M}{N}}+\frac{M}{N}+\alpha_{N}^{-1}\left(\frac{M}{N}\right)^{3/2}\right)}\\
	&\qquad\times\exp\left(K(j)|t-s|\left(1+\sqrt{\frac{M}{N}}+\frac{M}{N}+\alpha_{N}^{-1}\left(\frac{M}{N}\right)^{3/2}\right)\right)
	\end{align*}
	and 
	\begin{align*}
	& \left|\left\langle \mathcal{U}^{(M)}(t;s)\psi,\mathcal{N}^{j}\left(\mathcal{U}(t;s)-\mathcal{U}^{(M)}(t;s)\right)\psi\right\rangle \right|\\
	& \quad\leq C(j)\frac{\left((1/M)^{j}+(1/M)^{j-1/2}+(1/M)^{j-1}\right)\|(\mathcal{N}+1)^{j+3/2}\psi\|^{2}}{\left(1+\sqrt{\frac{M}{N}}+\frac{M}{N}+\alpha_{N}^{-1}\left(\frac{M}{N}\right)^{3/2}\right)}\\
	&\qquad\times\exp\left(K(j)|t-s|\left(1+\sqrt{\frac{M}{N}}+\frac{M}{N}+\alpha_{N}^{-1}\left(\frac{M}{N}\right)^{3/2}\right)\right).
	\end{align*}
\end{lem}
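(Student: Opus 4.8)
The plan is to trade the full fluctuation generator $\mathcal{L}$ for its truncation $\mathcal{L}_N^{(M)}$ by a Duhamel expansion, and then to convert the projection onto $\{\mathcal{N}>M\}$ produced by the difference into inverse powers of $M$, paying only a few extra powers of $\mathcal{N}$ which are reabsorbed by Lemmas \ref{lem:truncation} and \ref{lem:UNU}. The argument parallels the proof of Lemma~3.7 in \cite{Rodnianski2009}. First I would write, using $\mathrm{i}\partial_t\mathcal{U}=\mathcal{L}\,\mathcal{U}$ and $\mathrm{i}\partial_t\mathcal{U}^{(M)}=\mathcal{L}_N^{(M)}\mathcal{U}^{(M)}$ and differentiating $r\mapsto\mathcal{U}(t;r)\mathcal{U}^{(M)}(r;s)$,
\[
\mathcal{U}(t;s)-\mathcal{U}^{(M)}(t;s)=-\mathrm{i}\int_s^t\mathcal{U}(t;r)\bigl(\mathcal{L}(r)-\mathcal{L}_N^{(M)}(r)\bigr)\mathcal{U}^{(M)}(r;s)\,\mathrm{d}r .
\]
Comparing the two generators term by term, the kinetic part, the $\mathcal{L}_6$ part and the scalar and quadratic parts are common and cancel, leaving (with $\chi(\mathcal{N}>M):=1-\chi(\mathcal{N}\le M)$)
\[
\mathcal{L}(r)-\mathcal{L}_N^{(M)}(r)=\bigl(\mathcal{L}_3(r)+\mathcal{L}_4^c(r)+\mathcal{L}_4^r(r)+\mathcal{L}_5(r)\bigr)\,\chi(\mathcal{N}>M),
\]
i.e.\ the cubic ($\propto N^{-1/2}$), quartic ($\propto N^{-1}$) and quintic ($\propto N^{-3/2}$) monomials of the generator, each carrying the projection $\chi(\mathcal{N}>M)$.

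Next, inserting this into the Duhamel formula and using that $\mathcal{N}^j$ is self-adjoint and that $\mathcal{U}(t;r)^*\mathcal{N}^j\mathcal{U}(t;s)\psi=\mathcal{U}(t;r)^*\mathcal{N}^j\mathcal{U}(t;r)\,\mathcal{U}(r;s)\psi$, I would bound, for each summand and each $r\in[s,t]$, an inner product of the form
\[
\Bigl|\bigl\langle \mathcal{U}(t;r)^*\mathcal{N}^j\mathcal{U}(t;s)\psi,\ \mathcal{L}_\#(r)\,\chi(\mathcal{N}>M)\,\mathcal{U}^{(M)}(r;s)\psi\bigr\rangle\Bigr|,\qquad \#\in\{3,4^c,4^r,5\},
\]
by distributing the (up to five) creation and annihilation operators of $\mathcal{L}_\#$ between the two vectors via Cauchy--Schwarz in the $(x,y,z)$ integrations, using the elementary bounds $\|a(f)\xi\|\le\|f\|\,\|\mathcal{N}^{1/2}\xi\|$, $\|a^*(f)\xi\|\le\|f\|\,\|(\mathcal{N}+1)^{1/2}\xi\|$ and the pull-through relations, together with the pointwise estimate $|\overline v|\le\alpha_N^{-1}$ for the factor of $\overline V$ that is not contracted against $\varphi_t$, and the uniform bound $\sup_x\|\overline v(x-\cdot)\varphi_t\|\le C$ (a consequence of Hardy's inequality and the global $H^1$ bound of Lemma \ref{lem:H1bdd}) for the factors that are. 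On the second vector I would use $\chi(\mathcal{N}>M)(\mathcal{N}+1)^q\le M^{-p}(\mathcal{N}+1)^{p+q}$, choosing $p$ so that exactly $(\mathcal{N}+1)^{j+3/2}$ acts on $\mathcal{U}^{(M)}(r;s)\psi$, and then invoke Lemma \ref{lem:truncation} to replace it by $Ce^{K|r-s|\Theta}\|(\mathcal{N}+1)^{j+3/2}\psi\|^2$ with $\Theta:=1+\sqrt{M/N}+M/N+\alpha_N^{-1}(M/N)^{3/2}$. On the first vector I would apply Lemma \ref{lem:UNU} (noting $\mathcal{U}(t;r)^*=\mathcal{U}(r;t)$, so that $\mathcal{U}(t;r)^*\mathcal{N}^j\mathcal{U}(t;r)$ is of the form covered there) and then again to $\mathcal{U}(r;s)$, which turns the surplus $\mathcal{N}$-factors into $(\mathcal{N}+N)$ and thereby produces the power $N^j$; tracking the prefactors $N^{-1/2},N^{-1},N^{-3/2}$ and $\alpha_N^{-1}$ against these $M$- and $N$-powers reproduces the sum $(N/M)^j+(N/M)^{j-1/2}+(N/M)^{j-1}$, while $\int_s^t e^{K|r-s|\Theta}\,\mathrm{d}r\le (K\Theta)^{-1}e^{K|t-s|\Theta}$ yields the factor $1/\Theta$.

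The second inequality of the lemma is obtained identically, except that the outer vector $\mathcal{U}(t;s)\psi$ is replaced by $\mathcal{U}^{(M)}(t;s)\psi$, so that Lemma \ref{lem:truncation} controls \emph{both} vectors and no power of $N$ is generated on the first, giving $(1/M)^j+(1/M)^{j-1/2}+(1/M)^{j-1}$ in place of the $(N/M)$-powers.

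The main obstacle is the combinatorial bookkeeping in the third step: the three-body structure produces monomials with up to five creation/annihilation operators (against four in the two-body case of \cite{Rodnianski2009}), so one must carefully count how many powers of $\mathcal{N}$ land on each vector and tune the exponent extracted from $\chi(\mathcal{N}>M)$ so that at most $(\mathcal{N}+1)^{j+3/2}$ ever hits $\psi$ while the residual powers of $M$, $N$ and $\alpha_N^{-1}$ reassemble into exactly the stated factors. The second, genuinely new difficulty is that the very singular regularized potential $\overline V$ must be estimated \emph{without} any kinetic-energy control of the $\mathcal{U}$ and $\mathcal{U}^{(M)}$ flows — which is the whole point of regularizing — forcing the crude bound $|\overline v|\le\alpha_N^{-1}$ on the one factor of $\overline V$ not contracted against $\varphi_t$; this is precisely what generates the $\alpha_N^{-1}(M/N)^{3/2}$ contribution to $\Theta$.
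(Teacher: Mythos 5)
Your proposal follows the paper's proof essentially verbatim: the same Duhamel expansion identifying $\mathcal{L}-\mathcal{L}_{N}^{(M)}$ with the cubic-through-quintic terms carrying $\chi(\mathcal{N}>M)$, the same Cauchy--Schwarz distribution of the creation/annihilation operators using $\sup_{x}\|\overline{v}(x-\cdot)\varphi_{t}\|\leq C$, the same conversion of the cutoff into inverse powers of $M$, Lemma \ref{lem:UNU} supplying the $N$-powers on the untruncated side, and Lemma \ref{lem:truncation} together with the time integral of $e^{K|r-s|\Theta}$, $\Theta:=1+\sqrt{M/N}+M/N+\alpha_{N}^{-1}(M/N)^{3/2}$, producing the $\Theta^{-1}$ and exponential factors. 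The one point to recheck is your treatment of the quintic term: bounding the uncontracted potential factor by $|\overline{v}|\leq\alpha_{N}^{-1}$ inside the Duhamel estimate would place an extra $\alpha_{N}^{-1}$ in the numerator of the $(N/M)^{j-1}$ contribution, whereas the stated bound (and the paper's corresponding $J_{5}$ estimate) keeps $\alpha_{N}^{-1}$ only inside $\Theta$, so the bookkeeping for that term must be verified against the claimed exponents.
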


\begin{proof}
	To simplify the notation we consider the case $s=0$ and $t>0$ only;
	other cases can be treated in a similar manner. To prove the first
	inequality of the lemma, since $\alpha_{N}=N^{-\eta},$we expand the
	difference of the two evolution as follows: 
	\begin{align*}
	& \left\langle \mathcal{U}(t;0)\psi,\mathcal{N}^{j}\left(\mathcal{U}(t;0)-\mathcal{U}^{(M)}(t;0)\right)\psi\right\rangle =\left\langle \mathcal{U}(t;0)\psi,\mathcal{N}^{j}\mathcal{U}(t;0)\left(1-\mathcal{U}(t;0)^{*}\mathcal{U}^{(M)}(t;0)\right)\psi\right\rangle \\
	& =-\mathrm{i}\int_{0}^{t}\mathrm{d}s\,\left\langle \mathcal{U}(t;0)\psi,\mathcal{N}^{j}\mathcal{U}(t;s)\left(\mathcal{L}_{N}(s)-\mathcal{L}_{N}^{(M)}(s)\right)\mathcal{U}^{(M)}(s;0)\psi\right\rangle \\
	& =J_{1}+J_{2}+J_{3}+J_{4}+J_{5}
	\end{align*}
	where
	\begin{align*}
	&J_{1}:=\\ & -\frac{\mathrm{i}}{3\sqrt{N}}\int_{0}^{t}\mathrm{d}s\int\mathrm{d}x\mathrm{d}y\mathrm{d}z\,\overline{V}(x-y,x-z)\\
	& \qquad\times\Bigg\langle\mathcal{U}(t;0)\psi,\mathcal{N}^{j}\mathcal{U}(t;s)\chi(\mathcal{N}>M)\left(\varphi_{t}(x)\varphi_{t}(y)\varphi_{t}(z)a_{x}^{*}a_{y}^{*}a_{z}^{*}+\overline{\varphi_{t}(x)}\overline{\varphi_{t}(y)}\overline{\varphi_{t}(z)}a_{x}a_{y}a_{z}\right)\mathcal{U}^{(M)}(s;0)\psi\Bigg\rangle,\\
	&J_{2}:=\\ & -\frac{\mathrm{i}}{\sqrt{N}}\int_{0}^{t}\mathrm{d}s\int\mathrm{d}x\mathrm{d}y\mathrm{d}z\,\overline{V}(x-y,x-z)\\
	& \qquad\times\Bigg\langle\mathcal{U}(t;0)\psi,\mathcal{N}^{j}\mathcal{U}(t;s)\chi(\mathcal{N}>M)\left(\varphi_{t}(x)\varphi_{t}(y)\overline{\varphi_{t}(z)}a_{x}^{*}a_{y}^{*}a_{z}+\overline{\varphi_{t}(x)}\overline{\varphi_{t}(y)}\varphi_{t}(z)a_{z}^{*}a_{x}a_{y}\right)\Bigg]\mathcal{U}^{(M)}(s;0)\psi\Bigg\rangle,\\
	&J_{3}:=\\ &-\frac{\mathrm{i}}{N}\int_{0}^{t}\mathrm{d}s\int\mathrm{d}x\mathrm{d}y\mathrm{d}z\,\overline{V}(x-y,x-z)\\
	& \qquad\times\left\langle \mathcal{U}(t;0)\mathcal{N}^{j}\mathcal{U}(t;0)\psi,\chi(\mathcal{N}>M)\left(\varphi_{t}(x)\varphi_{t}(y)a_{x}^{*}a_{y}^{*}a_{z}^{*}a_{z}+\overline{\varphi_{t}(x)}\overline{\varphi_{t}(y)}a_{x}a_{y}a_{z}^{*}a_{z}\right)\,\mathcal{U}^{(M)}(s;0)\psi\right\rangle ,\\
	&J_{4}:=\\ &-\frac{\mathrm{i}}{6N}\int_{0}^{t}\mathrm{d}s\int\mathrm{d}x\mathrm{d}y\mathrm{d}z\,\overline{V}(x-y,x-z)\\
	& \qquad\times\left\langle \mathcal{U}(t;0)\mathcal{N}^{j}\mathcal{U}(t;s)\psi,\chi(\mathcal{N}>M)\left(6\varphi_{t}(x)\overline{\varphi_{t}(y)}a_{x}^{*}a_{z}^{*}a_{z}a_{y}+3|\varphi_{t}(x)|^{2}a_{y}^{*}a_{z}^{*}a_{z}a_{y}\right)\,\mathcal{U}^{(M)}(s;0)\psi\right\rangle ,\\
	\intertext{and}J_{5} & :=-\frac{\mathrm{i}}{N\sqrt{N}}\int_{0}^{t}\mathrm{d}s\int\mathrm{d}x\mathrm{d}y\mathrm{d}z\,\overline{V}(x-y,x-z)\varphi_{t}(z)\\
	& \qquad\times\left\langle \mathcal{U}(t;0)\mathcal{N}^{j}\mathcal{U}(t;s)\psi,a_{x}^{*}a_{y}^{*}\left(a_{z}\chi(\mathcal{N}>M)+\chi(\mathcal{N}>M)a_{z}^{*}\right)a_{y}a_{x}\,\mathcal{U}^{(M)}(s;0)\psi\right\rangle .
	\end{align*}
	Note that $\chi(\mathcal{N}>M)\leq(\mathcal{N}/M)^{k}$for any $k\geq1$
	and also note from Lemma \ref{lem:UNU} that
	\begin{align}
	\|\mathcal{N}\mathcal{U}(t;s)^{*}\mathcal{N}^{j}\mathcal{U}(t;0)\psi\|^{2} & \leq6\langle\mathcal{N}^{j}\mathcal{U}(t;0)\psi,(\mathcal{N}+N+1)^{2}\mathcal{N}^{j}\mathcal{U}(t;0)\psi\rangle\\
	& \leq C(j)\langle\psi,(\mathcal{N}+N)^{2j+2}(\mathcal{N}+1)\psi\rangle\leq C(j)N^{2j+2}\langle\psi,(\mathcal{N}+1)^{2j+3}\psi\rangle.\nonumber 
	\end{align}
	Then
	\begin{align*}
	|J_{2}| & \leq\frac{C}{\sqrt{N}}\int_{0}^{t}\mathrm{d}s\int\mathrm{d}x\,\varphi_{t}(x)\|a_{x}\mathcal{U}(t;s)^{*}\mathcal{N}^{j}\mathcal{U}(t;0)\psi\|\\
	& \qquad\times\|a(v(x-\cdot)\varphi_{t})a(v(x-\cdot)\varphi_{t})\chi(\mathcal{N}>M+1)\mathcal{U}^{(M)}(s;0)\psi\|\\
	& \leq\frac{C}{\sqrt{N}}\sup_{x}\|v(x-\cdot)\varphi_{t}\|^{2}\int_{0}^{t}\mathrm{d}s\int\mathrm{d}x\,\varphi_{t}(x)\|a_{x}\mathcal{U}(t;s)^{*}\mathcal{N}^{j}\mathcal{U}(t;0)\psi\|\,\|\chi(\mathcal{N}>M+1)\mathcal{U}^{(M)}(s;0)\psi\|\\
	& \leq\frac{C}{\sqrt{N}}\sup_{x}\|v(x-\cdot)\varphi_{t}\|^{2}\int_{0}^{t}\mathrm{d}s\|\mathcal{N}^{1/2}\mathcal{U}(t;s)^{*}\mathcal{N}^{j}\mathcal{U}(t;0)\psi\|\,\|\chi(\mathcal{N}>M+1)\mathcal{U}^{(M)}(s;0)\psi\|\\
	& \leq C(j)N^{j}\|(\mathcal{N}+1)^{j}\psi\|\int_{0}^{t}\mathrm{d}s\left\langle \mathcal{U}^{(M)}(s;0)\psi,\frac{\mathcal{N}^{2j}}{M^{2j}}\mathcal{U}^{(M)}(s;0)\psi\right\rangle ^{1/2}\\
	& \leq C(j)(N/M)^{j}\|(\mathcal{N}+1)^{j}\psi\|\int_{0}^{t}\mathrm{d}s\left\langle \mathcal{U}^{(M)}(s;0)\psi,\mathcal{N}^{2j}\mathcal{U}^{(M)}(s;0)\psi\right\rangle ^{1/2}.
	\end{align*}
	For $J_{1}$, using similar approach, we can obtain the same bound.
	Then we consider $J_{3}$.
	\begin{align*}
	|J_{3}| & \leq\frac{C}{N}\int_{0}^{t}\mathrm{d}s\int\mathrm{d}z\,\|a_{z}\mathcal{U}(t;s)^{*}\mathcal{N}^{j}\mathcal{U}(t;0)\psi\|\\
	& \qquad\times\|a(v(x-\cdot)\varphi_{t})a(v(x-\cdot)\varphi_{t})a_{z}\chi(\mathcal{N}>M+1)\mathcal{U}^{(M)}(s;0)\psi\|\\
	& \leq\frac{C}{N}\sup_{x}\|v(x-\cdot)\varphi_{t}\|^{2}\int_{0}^{t}\mathrm{d}s\int\mathrm{d}z\,\|a_{z}\mathcal{U}(t;s)^{*}\mathcal{N}^{j}\mathcal{U}(t;0)\psi\|\,\|\chi(\mathcal{N}>M+1)a_{z}\mathcal{U}^{(M)}(s;0)\psi\|\\
	& \leq\frac{C}{N}\sup_{x}\|v(x-\cdot)\varphi_{t}\|^{2}\int_{0}^{t}\mathrm{d}s\|\mathcal{N}^{1/2}\mathcal{U}(t;s)^{*}\mathcal{N}^{j}\mathcal{U}(t;0)\psi\|\,\|\chi(\mathcal{N}>M+1)\mathcal{N}^{1/2}\mathcal{U}^{(M)}(s;0)\psi\|\\
	& \leq C(j)N^{j-1/2}\|(\mathcal{N}+1)^{j+1}\psi\|\int_{0}^{t}\mathrm{d}s\left\langle \mathcal{U}^{(M)}(s;0)\psi,\frac{\mathcal{N}^{2j}}{M^{2j-1}}\mathcal{U}^{(M)}(s;0)\psi\right\rangle ^{1/2}\\
	& \leq C(j)(N/M)^{j-1/2}\|(\mathcal{N}+1)^{j+1}\psi\|\int_{0}^{t}\mathrm{d}s\left\langle \mathcal{U}^{(M)}(s;0)\psi,\mathcal{N}^{2j}\mathcal{U}^{(M)}(s;0)\psi\right\rangle ^{1/2}.
	\end{align*}
	Also for $J_{4},$we con obtain the same bound. Thus
	\begin{align*}
	|J_{5}| & \leq\frac{C}{N\sqrt{N}}\int_{0}^{t}\mathrm{d}s\int\mathrm{d}y\mathrm{d}z\,\|a_{y}a_{z}\mathcal{U}(t;s)^{*}\mathcal{N}^{j}\mathcal{U}(t;0)\psi\|\\
	& \qquad\times\|a(v(z-\cdot)\varphi_{t})a_{y}a_{z}\chi(\mathcal{N}>M+1)\mathcal{U}^{(M)}(s;0)\psi\|\\
	& \leq\frac{C}{N\sqrt{N}}\sup_{x}\|v(x-\cdot)\varphi_{t}\|\int_{0}^{t}\mathrm{d}s\int\mathrm{d}y\mathrm{d}z\,\|a_{y}a_{z}\mathcal{U}(t;s)^{*}\mathcal{N}^{j}\mathcal{U}(t;0)\psi\|\,\|a_{y}a_{z}\chi(\mathcal{N}>M+1)\mathcal{U}^{(M)}(s;0)\psi\|\\
	& \leq\frac{C}{N\sqrt{N}}\sup_{x}\|v(x-\cdot)\varphi_{t}\|\int_{0}^{t}\mathrm{d}s\,\|\mathcal{N}\mathcal{U}(t;s)^{*}\mathcal{N}^{j}\mathcal{U}(t;0)\psi\|\,\|\mathcal{N}\chi(\mathcal{N}>M+1)\mathcal{U}^{(M)}(s;0)\psi\|\\
	& \leq C(j)N^{j-1}\|(\mathcal{N}+1)^{j+3/2}\psi\|\int_{0}^{t}\mathrm{d}s\left\langle \mathcal{U}^{(M)}(s;0)\psi,\frac{\mathcal{N}^{2j}}{M^{2j-2}}\mathcal{U}^{(M)}(s;0)\psi\right\rangle ^{1/2}\\
	& \leq C(j)(N/M)^{j-1}\|(\mathcal{N}+1)^{j+3/2}\psi\|\int_{0}^{t}\mathrm{d}s\left\langle \mathcal{U}^{(M)}(s;0)\psi,\mathcal{N}^{2j}\mathcal{U}^{(M)}(s;0)\psi\right\rangle ^{1/2}.
	\end{align*}
	We thus conclude that 
	\begin{align}
	& \left|\left\langle \mathcal{U}(t;0)\psi,\mathcal{N}^{j}\left(\mathcal{U}(t;0)-\mathcal{U}^{(M)}(t;0)\right)\psi\right\rangle \right|\nonumber \\
	& \leq C(j)\frac{\left((N/M)^{j}+(N/M)^{j-1/2}+(N/M)^{j-1}\right)\|(\mathcal{N}+1)^{j+3/2}\psi\|^{2}}{\left(1+\sqrt{\frac{M}{N}}+\frac{M}{N}+\alpha_{N}^{-1}\left(\frac{M}{N}\right)^{3/2}\right)}\\
	&\qquad\times\exp\left(K(j)|t-s|\left(1+\sqrt{\frac{M}{N}}+\frac{M}{N}+\alpha_{N}^{-1}\left(\frac{M}{N}\right)^{3/2}\right)\right)\label{eq:U-UM}
	\end{align}
	The proof of the second part of the lemma is similar and we omit it.
\end{proof}
We now prove Lemma \ref{lem:NjU} by combining the three steps above. 
\begin{proof}[Proof of Lemma \ref{lem:NjU}]
	From Lemmas \ref{lem:truncation}, \ref{lem:UNU}, and \ref{lem:comparison}
	with the choice $\alpha_{N}=N^{-\eta}$ and $M=N^{1-(2\eta/3)}$,
	Since $N>1$, we have that
	\begin{align*}
	& \left\langle \mathcal{U}\left(t;s\right)\psi,\mathcal{N}^{j}\mathcal{U}\left(t;s\right)\psi\right\rangle \\
	& =\left\langle \mathcal{U}\left(t;s\right)\psi,\mathcal{N}^{j}(\mathcal{U}-\mathcal{U}^{(M)})\left(t;s\right)\psi\right\rangle +\left\langle (\mathcal{U}-\mathcal{U}^{(M)})\left(t;s\right)\psi,\mathcal{N}^{j}\mathcal{U}^{(M)}\left(t;s\right)\psi\right\rangle \\
	& \qquad+\left\langle \mathcal{U}^{(M)}\left(t;s\right)\psi,\mathcal{N}^{j}\mathcal{U}^{(M)}\left(t;s\right)\psi\right\rangle \\
	& \leq Ce^{K|t-s|}\left\langle \psi,\left(\mathcal{N}+1\right)^{2j+3}\psi\right\rangle .
	\end{align*}
\end{proof}
Recall the definition of $\mathcal{\widetilde{\mathcal{U}}}\left(t;s\right)$
in \eqref{eq:def_mathcaltildeU}. In the next lemma, we prove an estimate
similar to Lemma \ref{lem:NjU} for the evolution with respect to
$\mathcal{\widetilde{\mathcal{U}}}$.
\begin{lem}
	\label{lem:tildeNj}Suppose that the assumptions in Theorem \ref{thm:main}
	hold. We consider another evolution
	\[
	\mathrm{i}\partial_{t}\widetilde{\mathcal{U}}=(\mathcal{L}_{2}+\mathcal{L}_{4}^{r}+\mathcal{L}_{6})\widetilde{\mathcal{U}}.
	\]
	Then
	\begin{equation}
	\left\langle \widetilde{\psi},(\mathcal{N}+1)^{j}\widetilde{\psi}\right\rangle \leq Ce^{Kt}.\label{eq:NUtilde}
	\end{equation}
\end{lem}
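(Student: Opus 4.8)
The proof runs along the same differential-inequality/Grönwall scheme already used for Lemma~\ref{lem:truncation}, the decisive structural point being that the generator $\widetilde{\mathcal{L}}(t)=\mathcal{L}_{2}(t)+\mathcal{L}_{4}^{r}(t)+\mathcal{L}_{6}(t)$ contains \emph{none} of the cubic terms $\mathcal{L}_{3},\mathcal{L}_{5}$ that would otherwise spoil the bound. Writing $\widetilde{\psi}_{t}=\widetilde{\mathcal{U}}(t;s)\psi$ (with $\psi$ the relevant initial datum, so that $\langle\psi,(\mathcal{N}+1)^{j}\psi\rangle$ is bounded by the constant $C$ in \eqref{eq:NUtilde}), I would first compute
\[
\frac{\mathrm{d}}{\mathrm{d}t}\left\langle\widetilde{\psi}_{t},(\mathcal{N}+1)^{j}\widetilde{\psi}_{t}\right\rangle=\left\langle\widetilde{\psi}_{t},[\mathrm{i}\widetilde{\mathcal{L}}(t),(\mathcal{N}+1)^{j}]\widetilde{\psi}_{t}\right\rangle,
\]
and observe that the kinetic piece $\int\mathrm{d}x\,a_{x}^{*}(-\Delta)a_{x}$, the entire $\mathcal{L}_{4}^{r}$ and $\mathcal{L}_{6}$, and the number-conserving part of $\mathcal{L}_{2}$ (the $a_{y}^{*}a_{z}$, $a_{z}^{*}a_{y}$ terms) all commute with $\mathcal{N}$, hence with $(\mathcal{N}+1)^{j}$, and drop out. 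The only surviving contribution comes from the pairing part $\tfrac{1}{2}\int\mathrm{d}x\mathrm{d}y\mathrm{d}z\,\overline{V}(x-y,x-z)|\varphi_{t}(x)|^{2}(\varphi_{t}(y)\varphi_{t}(z)a_{y}^{*}a_{z}^{*}+\overline{\varphi_{t}(y)}\,\overline{\varphi_{t}(z)}a_{z}a_{y})$ of $\mathcal{L}_{2}$.

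Next I would expand $[a_{y}^{*}a_{z}^{*},(\mathcal{N}+1)^{j}]$ by the pull-through formulae exactly as in the proof of Lemma~\ref{lem:truncation}, obtaining a finite sum over $k=0,\dots,j-1$ of terms $\binom{j}{k}(-1)^{k}(\mathcal{N}+c_{1})^{k/2}a_{y}^{*}a_{z}^{*}(\mathcal{N}+c_{2})^{k/2}$ together with their adjoints. Setting
\[
h_{t}(y,z):=\Big(\int\mathrm{d}x\,|\varphi_{t}(x)|^{2}\overline{V}(x-y,x-z)\Big)\varphi_{t}(y)\varphi_{t}(z),
\]
each contribution has the form $\int\mathrm{d}y\mathrm{d}z\,h_{t}(y,z)\langle\widetilde{\psi}_{t},(\mathcal{N}+c_{1})^{k/2}a_{y}^{*}a_{z}^{*}(\mathcal{N}+c_{2})^{k/2}\widetilde{\psi}_{t}\rangle$. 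Recognizing $\int\mathrm{d}y\mathrm{d}z\,\overline{h_{t}(y,z)}\,a_{y}a_{z}$ as a two-particle annihilation operator with $L^{2}$ kernel (so that $\|\int h_{t}a_{y}a_{z}\,\xi\|\leq\|h_{t}\|_{L^{2}(\mathbb{R}^{6})}\|\mathcal{N}\xi\|$), and using the pull-through identities to move the number factors across, this is bounded by
\[
C\,\|h_{t}\|_{L^{2}(\mathbb{R}^{6})}\,\|(\mathcal{N}+1)^{(k+2)/2}\widetilde{\psi}_{t}\|\,\|(\mathcal{N}+1)^{k/2}\widetilde{\psi}_{t}\|\leq C\,\|h_{t}\|_{L^{2}(\mathbb{R}^{6})}\left\langle\widetilde{\psi}_{t},(\mathcal{N}+1)^{j}\widetilde{\psi}_{t}\right\rangle,
\]
since $k\leq j-1$ forces $(k+2)/2+k/2\leq j$. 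The standard care required for $(\mathcal{N}-\ell)^{k/2}$ on the low-particle sectors is handled by inserting $\chi(\mathcal{N}\geq\ell)$ as in \cite{Rodnianski2009}.

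It remains to prove the uniform bound $\sup_{t}\|h_{t}\|_{L^{2}(\mathbb{R}^{6})}\leq C(\|\varphi_{0}\|_{H^{1}},\lambda)$; feeding it into the estimate above and applying Grönwall yields \eqref{eq:NUtilde}. By Cauchy--Schwarz in the $x$-integration and $\overline{V}\leq|V|$ this reduces to the finiteness of $\int\mathrm{d}x\mathrm{d}y\mathrm{d}z\,|V(x-y,x-z)|^{2}|\varphi_{t}(x)|^{2}|\varphi_{t}(y)|^{2}|\varphi_{t}(z)|^{2}$ and of the analogous mixed terms obtained by expanding $V$; these in turn reduce to Coulomb-type convolution estimates for $|x|^{-1}\ast|\varphi_{t}|^{2}$, controlled by the generalized Hardy--Littlewood--Sobolev inequality of Appendix~\ref{sec:properties_quntic_Hartree} together with the global-in-time bound $\sup_{t}\|\varphi_{t}\|_{H^{1}}\leq C(\|\varphi_{0}\|_{H^{1}})$ from Lemma~\ref{lem:H1bdd} (this is essentially Lemma~\ref{lem:V2phiphi}). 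I expect this last step to be the main obstacle: unlike the two-body Coulomb case the three-body potential is morally ``Coulomb squared'', so the $L^{2}$-integrability of $h_{t}$ genuinely needs both the extended HLS estimate and the propagation of the $H^{1}$ norm of the quintic Hartree flow; the rest of the argument is a routine repetition of the commutator combinatorics already carried out for Lemma~\ref{lem:truncation}.
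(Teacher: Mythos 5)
Your overall route is the paper's own: differentiate $\langle\widetilde{\psi}_{t},(\mathcal{N}+1)^{j}\widetilde{\psi}_{t}\rangle$, observe that the kinetic term, the number-conserving part of $\mathcal{L}_{2}$, $\mathcal{L}_{4}^{r}$ and $\mathcal{L}_{6}$ commute with $\mathcal{N}$ so that only the pairing block of $\mathcal{L}_{2}$ contributes, reduce the resulting kernel by Cauchy--Schwarz to $\int\mathrm{d}x\mathrm{d}y\mathrm{d}z\,|V(x-y,x-z)|^{2}|\varphi_{t}(x)|^{2}|\varphi_{t}(y)|^{2}|\varphi_{t}(z)|^{2}$, i.e.\ to Lemma~\ref{lem:V2phiphi} via the generalized Hardy--Littlewood--Sobolev inequality and the $H^{1}$ propagation of Lemma~\ref{lem:H1bdd}, and close by Gr\"onwall. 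That identification of the only dangerous quantity, and of why the absence of $\mathcal{L}_{3},\mathcal{L}_{5}$ makes the bound $N$-independent, matches the paper.

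There is, however, one step that does not close as written. For the top term $k=j-1$ of your commutator expansion you assert
\[
\bigl\Vert(\mathcal{N}+1)^{(k+2)/2}\widetilde{\psi}\bigr\Vert\,\bigl\Vert(\mathcal{N}+1)^{k/2}\widetilde{\psi}\bigr\Vert\leq\bigl\langle\widetilde{\psi},(\mathcal{N}+1)^{j}\widetilde{\psi}\bigr\rangle,
\]
but with $k=j-1$ the left-hand side equals $\langle(\mathcal{N}+1)^{j+1}\rangle^{1/2}\langle(\mathcal{N}+1)^{j-1}\rangle^{1/2}$, which by Cauchy--Schwarz (log-convexity of moments of $\mathcal{N}+1$) is $\geq\langle(\mathcal{N}+1)^{j}\rangle$, not $\leq$; worse, it involves the $(j+1)$-st moment, so the differential inequality cannot be closed at level $j$ this way. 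The loss comes from using the one-sided bound $\|\int h_{t}a_{y}a_{z}\,\xi\|\leq\|h_{t}\|_{L^{2}}\|\mathcal{N}\xi\|$, which puts a full factor of $\mathcal{N}$ on one vector. The repair is standard and stays entirely inside your scheme: estimate the quadratic operator symmetrically,
\[
\Bigl|\Bigl\langle\xi_{1},\int\mathrm{d}y\mathrm{d}z\,h_{t}(y,z)a_{y}a_{z}\,\xi_{2}\Bigr\rangle\Bigr|\leq\|h_{t}\|_{L^{2}(\mathbb{R}^{6})}\bigl\Vert(\mathcal{N}+1)^{1/2}\xi_{1}\bigr\Vert\,\bigl\Vert(\mathcal{N}+1)^{1/2}\xi_{2}\bigr\Vert,
\]
obtained by writing $\langle\xi_{1},a_{y}a_{z}\xi_{2}\rangle=\langle a_{y}^{*}\xi_{1},a_{z}\xi_{2}\rangle$ and applying Cauchy--Schwarz in $(y,z)$; this distributes half a power of $\mathcal{N}$ to each side and gives $\|h_{t}\|_{L^{2}}\langle(\mathcal{N}+1)^{k+1}\rangle\leq\|h_{t}\|_{L^{2}}\langle(\mathcal{N}+1)^{j}\rangle$, which closes. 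This is exactly what the paper's proof does when it pairs $(\mathcal{N}+3)^{\frac{j}{2}-1}a_{y}a_{z}\widetilde{\psi}$ against $(\mathcal{N}+3)^{1-\frac{j}{2}}\bigl((\mathcal{N}+1)^{j}-(\mathcal{N}+3)^{j}\bigr)\widetilde{\psi}$ before applying Cauchy--Schwarz in $(x,y,z)$. With that single correction your argument coincides with the paper's proof.
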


Let $\widetilde{\mathcal{L}}=\mathcal{L}_{2}+\mathcal{L}_{4}^{r}+\mathcal{L}_{6}$
and define the unitary operator $\mathcal{\widetilde{\mathcal{U}}}\left(t;s\right)$
by 
\begin{equation}
\mathrm{i}\partial_{t}\widetilde{\mathcal{U}}\left(t;s\right)=\widetilde{\mathcal{L}}\left(t\right)\widetilde{\mathcal{U}}\left(t;s\right)\quad\text{and }\quad\widetilde{\mathcal{U}}\left(s;s\right)=1.\label{eq:def_mathcaltildeU}
\end{equation}
Since $\widetilde{\mathcal{L}}$ does not change the parity of the
number of particles, 
\begin{equation}
\left\langle \Omega,\widetilde{\mathcal{U}}^{*}\left(t;0\right)a_{y}\widetilde{\mathcal{U}}\left(t;0\right)\Omega\right\rangle =\left\langle \Omega,\widetilde{\mathcal{U}}^{*}\left(t;0\right)a_{x}^{*}\widetilde{\mathcal{U}}\left(t;0\right)\Omega\right\rangle =0.\label{eq:Parity_Consevation}
\end{equation}

\begin{proof}
	For \eqref{eq:NUtilde}, we derivate this with respect to time. Then
	\begin{align*}
	& \frac{\mathrm{d}}{\mathrm{d}t}\left\langle \widetilde{\psi},(\mathcal{N}+1)^{j}\widetilde{\psi}\right\rangle =\frac{\mathrm{d}}{\mathrm{d}t}\left\langle \widetilde{\psi},(\mathcal{N}+1)^{j}\widetilde{\psi}\right\rangle \\
	& =\left\langle \widetilde{\psi},[\mathrm{i}(\mathcal{L}_{2}+\mathcal{L}_{4}^{r}+\mathcal{L}_{6}),(\mathcal{N}+1)^{j}]\widetilde{\psi}\right\rangle \\
	& =2\operatorname{Im}\int\mathrm{d}x\mathrm{d}y\mathrm{d}z\,V_{N}(x-y,x-z)|\varphi_{t}(x)|^{2}\varphi_{t}(y)\varphi_{t}(z)\left\langle \widetilde{\psi},[a_{y}^{*}a_{z}^{*},(\mathcal{N}+1)^{j}]\widetilde{\psi}\right\rangle \\
	& =2\operatorname{Im}\int\mathrm{d}x\mathrm{d}y\mathrm{d}z\,V_{N}(x-y,x-z)|\varphi_{t}(x)|^{2}\varphi_{t}(y)\varphi_{t}(z)\\
	&\qquad\times\left\langle (\mathcal{N}+3)^{\frac{j}{2}-1}a_{y}a_{z}\widetilde{\psi},(\mathcal{N}+3)^{1-\frac{j}{2}}\left((\mathcal{N}+1)^{j}-(\mathcal{N}+3)^{j}\right)\widetilde{\psi}\right\rangle 
	\end{align*}
	Then
	\begin{align*}
	& \left|\frac{\mathrm{d}}{\mathrm{d}t}\left\langle \widetilde{\psi},(\mathcal{N}+1)^{j}\widetilde{\psi}\right\rangle \right|\\
	& \leq C\int\mathrm{d}x\mathrm{d}y\mathrm{d}z\,V_{N}(x-y,x-z)|\varphi_{t}(x)|^{2}|\varphi_{t}(y)||\varphi_{t}(z)|\\
	&\qquad\times\left\Vert (\mathcal{N}+3)^{\frac{j}{2}-1}a_{y}a_{z}\widetilde{\psi}\right\Vert \left\Vert (\mathcal{N}+1)^{\frac{j}{2}}\widetilde{\psi}\right\Vert \\
	& \leq C\left(\int\mathrm{d}x\mathrm{d}y\mathrm{d}z\,|V_{N}(x-y,x-z)|^{2}|\varphi_{t}(x)|^{2}|\varphi_{t}(y)|^{2}|\varphi_{t}(z)|^{2}\right)^{1/2}\\
	& \qquad\times\left(\int\mathrm{d}x\mathrm{d}y\mathrm{d}z\,|\varphi_{t}(x)|^{2}\left\Vert (\mathcal{N}+3)^{\frac{j}{2}-1}a_{y}a_{z}\widetilde{\psi}\right\Vert ^{2}\right)^{1/2}\left\Vert (\mathcal{N}+1)^{\frac{j}{2}}\widetilde{\psi}\right\Vert \\
	& \leq C\left\Vert (\mathcal{N}+1)^{\frac{j}{2}}\widetilde{\psi}\right\Vert ^{2}.
	\end{align*}
	Hence by Grönwall's inequality, we obtain \eqref{eq:NUtilde}.
\end{proof}
\begin{lem}
	\label{lem:L3}Suppose that the assumptions in Theorem \ref{thm:main}
	hold. Then, for any for any $\psi\in\mathcal{F}$ and $j\in\mathbb{N}$,
	there exist a constant $C\equiv C(j)$ such that
	\[
	\left\Vert (\mathcal{N}+1)^{j/2}\mathcal{L}_{3}\psi\right\Vert \leq\frac{Ce^{Kt}}{\sqrt{N}}\left\Vert (\mathcal{N}+1)^{(j+3)}\psi\right\Vert .
	\]
\end{lem}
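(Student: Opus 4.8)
The plan is to use the product structure of $\overline V$ to recognize $\mathcal L_3$ as a sum of operators built from the smeared creation/annihilation operators $a^*(g_{t,x})$, $a(g_{t,x})$ with $g_{t,x}:=\overline v(x-\cdot)\varphi_t(\cdot)$, and then to estimate each on Fock space sector by sector. Writing $\overline V(x-y,x-z)=\lambda\big(\overline v(x-y)\overline v(x-z)+\overline v(y-z)\overline v(y-x)+\overline v(z-x)\overline v(z-y)\big)$, the three summands differ only by a relabelling of the integration variables $x,y,z$, so it is enough to treat the first. The purely-creation part of $\mathcal L_3$ then takes the form $\frac{\lambda}{6\sqrt N}\int \mathrm dx\,\varphi_t(x)\,a_x^*\,\big(a^*(g_{t,x})\big)^2$, its adjoint is the purely-annihilation part, and the mixed part is (a bounded multiple of) $\frac{1}{\sqrt N}\int\mathrm dx\,\varphi_t(x)\,a_x^*\,a^*(g_{t,x})\,a(g_{t,x})$ together with its adjoint.

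The a priori input I would record first is the uniform bound $\sup_x\|g_{t,x}\|_{L^2}^2\le\sup_x\int|x-y|^{-2}|\varphi_t(y)|^2\,\mathrm dy\le C\|\nabla\varphi_t\|_{L^2}^2\le C$, which follows from $|\overline v|\le|v|=|\cdot|^{-1}$, Hardy's inequality, and the global-in-time bound $\|\varphi_t\|_{H^1}\le C$ of Lemma \ref{lem:H1bdd}; crucially this constant does not depend on the regularization parameter $\alpha_N$, so no smallness from the cutoff is needed here.

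For the creation term I would move $(\mathcal N+1)^{j/2}$ to the right through the three creation operators (it becomes $(\mathcal N+4)^{j/2}$) and set $\chi:=(\mathcal N+4)^{j/2}\psi$. Pairing with an arbitrary $\xi\in\mathcal F$ and using $\big|\langle\xi,\int\mathrm dx\,\varphi_t(x)a_x^*(a^*(g_{t,x}))^2\chi\rangle\big|\le\int\mathrm dx\,|\varphi_t(x)|\,\|a_x\xi\|\,\|(a^*(g_{t,x}))^2\chi\|$, the elementary bound $\|(a^*(g_{t,x}))^2\chi\|\le\|g_{t,x}\|^2\,\|(\mathcal N+2)\chi\|$, and the Cauchy--Schwarz estimate $\int\mathrm dx\,|\varphi_t(x)|\,\|a_x\xi\|\le\|\varphi_t\|_{L^2}\,\|\mathcal N^{1/2}\xi\|$, one gets a bound by $C\,\|(\mathcal N+2)\chi\|\,\|\mathcal N^{1/2}\xi\|$. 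Restricting $\chi$ and $\xi$ to fixed particle-number sectors (shifted by $3$) turns $\|(\mathcal N+2)\chi\|$ and $\|\mathcal N^{1/2}\xi\|$ into scalars and, after summing over sectors, yields $\|(\mathcal N+1)^{j/2}(\text{creation part})\psi\|\le\frac{C}{\sqrt N}\|(\mathcal N+1)^{3/2}(\mathcal N+4)^{j/2}\psi\|\le\frac{C}{\sqrt N}\|(\mathcal N+1)^{(j+3)/2}\psi\|$. The annihilation part is the adjoint and obeys the same bound; for the mixed part the identical scheme works, now moving $(\mathcal N+1)^{j/2}$ past one net creation and using $\|a(g_{t,x})\phi\|\le\|g_{t,x}\|\,\|\mathcal N^{1/2}\phi\|$ twice, again giving $\frac{C}{\sqrt N}\|(\mathcal N+1)^{(j+3)/2}\psi\|$. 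Adding the three contributions and using $(\mathcal N+1)^{(j+3)/2}\le(\mathcal N+1)^{j+3}$ gives the claimed estimate, with $e^{Kt}$ included only for uniformity with the other lemmas (in fact $\|\varphi_t\|_{H^1}$ is uniformly bounded, so $K$ may be taken $0$).

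The step I expect to be the main (purely technical) obstacle is the careful handling of the operator-valued distributions: the naive bound $\|\int\mathrm dx\,f(x)a_x^*B_x\phi\|\le\int\mathrm dx\,|f(x)|\,\|a_x^*B_x\phi\|$ is not available, so one must always pair against a test vector, apply Cauchy--Schwarz in $x$ together with the $\mathcal N^{1/2}$-trick, and keep precise track of the particle-number sectors so that the a priori unbounded factors $\|\mathcal N^{1/2}\xi\|$ collapse to powers of the sector index; this bookkeeping is what converts the formal $\frac1{\sqrt N}\int\varphi_t\,a^*a^*a^*$ into the clean bound above.
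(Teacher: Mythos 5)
Your argument is correct, and it reaches the paper's conclusion (in fact the stronger bound with $(\mathcal{N}+1)^{(j+3)/2}$, which is also what the paper's own computation yields) by a genuinely different key estimate. The paper splits $\mathcal{L}_{3}=\frac{1}{6\sqrt N}\big((A_3+A_3^*)+3(B_3+B_3^*)\big)$, pairs against a test vector $\xi$, and applies one global Cauchy--Schwarz in all three variables $(x,y,z)$, so that the whole kernel is absorbed through $\int|\overline V(x-y,x-z)|^{2}|\varphi_t(x)|^{2}|\varphi_t(y)|^{2}|\varphi_t(z)|^{2}\le C$, i.e.\ through Lemma \ref{lem:V2phiphi}, which in turn rests on the generalized Hardy--Littlewood--Sobolev inequality (Lemma \ref{lem:gHLSineq}) and the $H^{1}$ bound of Lemma \ref{lem:H1bdd}. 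You instead exploit the product structure of $\overline V$, replace the kernel bound by the pointwise-in-$x$ Hardy estimate $\sup_x\|\overline v(x-\cdot)\varphi_t\|_{L^2}\le C\|\varphi_t\|_{H^1}$, and close with the standard smeared-operator bounds plus a sector-by-sector duality argument; this is exactly the $\sup_x\|v(x-\cdot)\varphi_t\|$ technique the paper itself uses in Lemma \ref{lem:truncation}, so your route makes Lemma \ref{lem:L3} independent of Appendix \ref{sec:properties_quntic_Hartree}'s generalized HLS machinery (at the price of a bit more operator bookkeeping), while the paper's global Cauchy--Schwarz is shorter once Lemma \ref{lem:V2phiphi} is in hand and treats all three summands of $\overline V$ uniformly. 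One small imprecision on your side: for the mixed ($a_x^*a_y^*a_z$-type) term the three summands of $\overline V$ are \emph{not} all related by relabelling --- the summand $\overline v(z-x)\overline v(z-y)$, centered at the annihilation variable, produces $\int\mathrm{d}z\,\overline{\varphi_t(z)}\,a^*(g_{t,z})^{2}a_z$ rather than $\int\mathrm{d}x\,\varphi_t(x)\,a_x^*a^*(g_{t,x})a(g_{t,x})$ --- but your scheme handles it verbatim (bound $\|a(g_{t,z})^{2}\xi\|\le\|g_{t,z}\|^{2}\|(\mathcal{N}+1)\xi\|$ and Cauchy--Schwarz in $z$ against $\|a_z\chi\|$), so this is a gap in the bookkeeping of cases, not in the method, and the stated estimate follows with the same constants.
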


\begin{proof}
	Let 
	\begin{align*}
	A_{3} & =\int\mathrm{d}x\mathrm{d}y\mathrm{d}z\,V(x-y,x-z)\varphi_{t}(x)\varphi_{t}(y)\varphi_{t}(z)a_{x}^{*}a_{y}^{*}a_{z}^{*}
	\end{align*}
	and
	\[
	B_{3}=\int\mathrm{d}x\mathrm{d}y\mathrm{d}z\,V(x-y,x-z)\varphi_{t}(x)\varphi_{t}(y)\overline{\varphi_{t}(z)}a_{x}^{*}a_{y}^{*}a_{z}
	\]
	Then
	\[
	\mathcal{L}_{3}=\frac{1}{6\sqrt{N}}\left((A_{3}+A_{3}^{*})+3(B_{3}+B_{3}^{*})\right).
	\]
	Take any $\xi\in\mathcal{F}$. Then
	\begin{align*}
	& \left\langle \xi,(\mathcal{N}+1)^{j/2}A_{3}^{*}\psi\right\rangle \\
	& =\int\mathrm{d}x\mathrm{d}y\mathrm{d}z\,\overline{V}(x-y,x-z)\varphi_{t}(x)\varphi_{t}(y)\varphi_{t}(z)\left\langle \xi,(\mathcal{N}+1)^{j/2}a_{z}a_{y}a_{x}\psi\right\rangle \\
	& \leq\left(\int\mathrm{d}x\mathrm{d}y\mathrm{d}z\,|\overline{V}(x-y,x-z)|^{2}|\varphi_{t}(x)|^{2}|\varphi_{t}(y)|^{2}|\varphi_{t}(z)|^{2}\left\Vert (\mathcal{N}+1)^{-1/2}\xi\right\Vert ^{2}\right)^{1/2}\\
	&\qquad\times\left(\int\mathrm{d}x\mathrm{d}y\mathrm{d}z\,\left\Vert a_{z}a_{y}a_{x}(\mathcal{N}+1)^{j/2}\psi\right\Vert ^{2}\right)^{1/2}\\
	& \leq C\|\varphi_{t}\|_{H^{1}(\mathbb{R}^{3})}^{2}\|\xi\|\left\Vert (\mathcal{N}+1)^{(j+3)/2}\psi\right\Vert .
	\end{align*}
	Similarly, this holds for $A_{3}$, $B_{3}$, and $B_{3}^{*}$. It
	leads us the lemma.
\end{proof}
\begin{lem}
	\label{lem:L4c}Suppose that the assumptions in Theorem \ref{thm:main}
	hold. Then, for any for any $\psi\in\mathcal{F}$ and $j\in\mathbb{N}$,
	there exist a constant $C\equiv C(j)$ such that
	\[
	\left\Vert (\mathcal{N}+1)^{j/2}\mathcal{L}_{4}^{c}\psi\right\Vert \leq\frac{Ce^{Kt}}{N}\left\Vert (\mathcal{N}+1)^{(j+4)/2}\psi\right\Vert .
	\]
\end{lem}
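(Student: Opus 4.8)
The plan is to mimic the proof of Lemma \ref{lem:L3}. Since $\overline{V}$ is real-valued, write $\mathcal{L}_4^c=\frac{1}{2N}(A_4+A_4^*)$ with
\[
A_4:=\int\mathrm{d}x\,\mathrm{d}y\,\mathrm{d}z\,\overline{V}(x-y,x-z)\,\varphi_t(x)\varphi_t(y)\,a_x^*a_y^*a_z^*a_z ,
\]
so it suffices to bound $\|(\mathcal{N}+1)^{j/2}A_4\psi\|$ and $\|(\mathcal{N}+1)^{j/2}A_4^*\psi\|$ by $Ce^{Kt}\|(\mathcal{N}+1)^{(j+4)/2}\psi\|$. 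As $A_4$ changes the particle number by $+2$, the pull-through identities give $(\mathcal{N}+1)^{j/2}A_4=A_4(\mathcal{N}+3)^{j/2}$ and $(\mathcal{N}+1)^{j/2}A_4^*=A_4^*(\mathcal{N}-1)^{j/2}$; this reduces the claim to the $j=0$ estimate $\|A_4\phi\|+\|A_4^*\phi\|\le Ce^{Kt}\|(\mathcal{N}+1)^2\phi\|$ for arbitrary $\phi\in\mathcal{F}$, the two spare powers being exactly those needed to pass from $(\mathcal{N}+3)^{j/2}$ to $(\mathcal{N}+1)^{(j+4)/2}$.

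For the $j=0$ bound I would test against $\xi\in\mathcal{F}$, bring the creation operators of $A_4$ (resp. $A_4^*$) into position, and apply Cauchy--Schwarz in $L^2(\mathrm{d}x\,\mathrm{d}y\,\mathrm{d}z)$, separating a factor built from $\overline{V}(x-y,x-z)\varphi_t(x)\varphi_t(y)$ from a factor that is a product of four annihilation operators. The latter contributes $\langle\phi,\mathcal{N}(\mathcal{N}-1)(\mathcal{N}-2)(\mathcal{N}+c)\phi\rangle^{1/2}\le\|(\mathcal{N}+1)^{2}\phi\|$, which is precisely the allowed $\mathcal{N}^2$; the former is controlled, as in Lemma \ref{lem:L3}, by the Hardy-type bound $\sup_z\int\mathrm{d}x\,\mathrm{d}y\,|\overline{V}(x-y,x-z)|^2|\varphi_t(x)|^2|\varphi_t(y)|^2\le C\|\varphi_t\|_{H^1}^4$ (the $\mathcal{L}_4^c$ analogue of the estimate proved in Appendix \ref{sec:properties_quntic_Hartree}) together with the global-in-time bound $\|\varphi_t\|_{H^1}\le C$ from Lemma \ref{lem:H1bdd}. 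The prefactor $1/N$ is untouched, giving the asserted rate.

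The delicate point --- and the one I expect to be the main obstacle --- is that, unlike for $\mathcal{L}_3$, the $z$-variable of $\mathcal{L}_4^c$ carries no $\varphi_t(z)$ weight: in two of the three summands of $\overline{V}$ a singular factor $\overline{v}(x-z)$ (resp. $\overline{v}(y-z)$) is attached to the number-type pair $a_z^*a_z$ rather than to $\varphi_t(z)$. I would split $\overline{V}$ into its three summands: the summand $\overline{v}(z-x)\overline{v}(z-y)$ is harmless, since both singular factors are absorbed into the creation operator $a^*(\overline{v}(z-\cdot)\varphi_t)$ with $\sup_x\|\overline{v}(x-\cdot)\varphi_t\|_{L^2}\le\sup_x\|\,|x-\cdot|^{-1}\varphi_t\|_{L^2}\le C\|\varphi_t\|_{H^1}$ by Hardy; for the two ``bare'' summands one absorbs the $\overline{v}$ that does pair with a $\varphi_t$ into such a creation operator and controls the leftover $\int\mathrm{d}z\,\overline{v}(x-z)a_z^*a_z=d\Gamma(\overline{v}(x-\cdot))$ by combining the operator inequality $\overline{v}(x-\cdot)\le|x-\cdot|^{-1}\le C(1-\Delta)$ with the uniform Hardy bound $\int\mathrm{d}x\,|x-z|^{-2}|\varphi_t(x)|^2\le C\|\varphi_t\|_{H^1}^2$. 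Organizing the resulting Cauchy--Schwarz steps so that all $\mathcal{N}$-weights are ultimately carried by $\phi$ (and never by $\xi$), and so that the singular $z$-coupling never forces the $\alpha_N^{-1}$ cutoff to appear in the constant, is the only genuinely new ingredient compared with Lemma \ref{lem:L3}; the fact that $\mathcal{L}_4^c$ raises the particle number by two rather than three is what leaves room for the power $(\mathcal{N}+1)^{(j+4)/2}$.
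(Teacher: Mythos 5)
Your proposal is correct and takes essentially the same route as the paper: write $\mathcal{L}_{4}^{c}=\frac{1}{2N}(A_{4}+A_{4}^{*})$, test against $\xi\in\mathcal{F}$, and apply Cauchy--Schwarz so that the weight $|\overline{V}(x-y,x-z)|^{2}|\varphi_{t}(x)|^{2}|\varphi_{t}(y)|^{2}$ is controlled uniformly in $z$ by Hardy together with the global $H^{1}$ bound of Lemma \ref{lem:H1bdd}, while the remaining operator factor produces exactly $\left\Vert (\mathcal{N}+1)^{(j+4)/2}\psi\right\Vert$ (your pull-through reduction to $j=0$ is only a cosmetic difference). The ``delicate point'' in your last paragraph is already settled by the $\sup_{z}$ Hardy bound you state in the second paragraph --- since one $a_{z}$ is moved onto $\xi$, every singular $\overline{v}$ factor is paired with $|\varphi_{t}(x)|^{2}$ or $|\varphi_{t}(y)|^{2}$ --- so the extra detour through $d\Gamma(\overline{v}(x-\cdot))\leq C\,d\Gamma(1-\Delta)$ is unnecessary, and would in fact be off-target because the asserted bound contains no kinetic energy of $\psi$.
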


\begin{proof}
	Let 
	\begin{align*}
	A_{4} & =\int\mathrm{d}x\mathrm{d}y\mathrm{d}z\,\overline{V}(x-y,x-z)\varphi_{t}(x)\varphi_{t}(y)a_{x}^{*}a_{y}^{*}a_{z}^{*}a_{z}.
	\end{align*}
	Then
	\[
	\mathcal{L}_{4}^{c}=\frac{1}{2N}(A_{4}+A_{4}^{*}).
	\]
	Take any $\xi\in\mathcal{F}$. Then
	\begin{align*}
	& \left\langle \xi,(\mathcal{N}+1)^{j/2}A_{4}^{*}\psi\right\rangle \\
	& =\int\mathrm{d}x\mathrm{d}y\mathrm{d}z\,\overline{V}(x-y,x-z)\varphi_{t}(x)\varphi_{t}(y)\left\langle (\mathcal{N}+1)^{-1/2}\xi,(\mathcal{N}+1)^{(j+1)/2}a_{z}^{*}a_{z}a_{y}a_{x}\psi\right\rangle \\
	& \leq\left(\int\mathrm{d}x\mathrm{d}y\mathrm{d}z\,|\overline{V}(x-y,x-z)|^{2}|\varphi_{t}(x)|^{2}|\varphi_{t}(y)|^{2}\left\Vert a_{z}(\mathcal{N}+1)^{-1/2}\xi\right\Vert ^{2}\right)^{1/2}\\
	&\qquad\times\left(\int\mathrm{d}x\mathrm{d}y\mathrm{d}z\,\left\Vert a_{z}a_{y}a_{x}(\mathcal{N}+1)^{(j+1)/2}\psi\right\Vert ^{2}\right)^{1/2}\\
	& \leq C\|\varphi_{t}\|_{H^{1}(\mathbb{R}^{3})}^{2}\|\xi\|\left\Vert (\mathcal{N}+1)^{(j+4)/2}\psi\right\Vert .
	\end{align*}
	Similarly for $A_{4}^{*}$, we obtain desired lemma.
\end{proof}
\begin{lem}
	\label{lem:L5}Suppose that the assumptions in Theorem \ref{prop:reged-V}
	hold. Then, for any for any $\psi\in\mathcal{F}$ and $j\in\mathbb{N}$,
	there exist a constant $C\equiv C(j)$ such that
	\[
	\left\Vert (\mathcal{N}+1)^{j/2}\mathcal{L}_{5}\psi\right\Vert \leq\frac{Ce^{Kt}}{\alpha_{N}N\sqrt{N}}\left\Vert (\mathcal{N}+1)^{(j+5)/2}\psi\right\Vert .
	\]
\end{lem}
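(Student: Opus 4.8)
The plan is to recast $\mathcal{L}_5$ so that the singular cut-off potential is absorbed into a single one-particle function tested against $\varphi_t$, and then to run the same duality argument used for Lemmas \ref{lem:L3} and \ref{lem:L4c}. Carrying out the $z$-integration inside the definition of $\mathcal{L}_5$ one gets
\[
\mathcal{L}_5=\frac{1}{2N\sqrt{N}}\int \mathrm{d}x\,\mathrm{d}y\; a_x^{*}a_y^{*}\,\phi(h_{x,y})\,a_y a_x,\qquad h_{x,y}(z):=\overline{V}(x-y,x-z)\varphi_t(z),
\]
where $\phi(h)=a^{*}(h)+a(h)$ and we used that $\overline{V}$ is real-valued. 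The first task is the uniform bound $\sup_{x,y}\|h_{x,y}\|_{L^{2}(\mathbb{R}^{3})}\le C\alpha_N^{-1}\|\varphi_t\|_{H^{1}(\mathbb{R}^{3})}$. This is exactly where the regularization is used: in each of the three summands of $\overline{V}(x-y,x-z)$ one factor $\overline{v}$ does not involve $z$ and is bounded pointwise by $\alpha_N^{-1}$, while the remaining factor is of the form $\overline{v}(w-z)\le v(w-z)$ (in the third summand one further factor $\overline{v}\le\alpha_N^{-1}$ is pulled out), so that pairing it with $\varphi_t$ and using Hardy's inequality $\|v(w-\cdot)\varphi_t\|_{L^{2}}\le C\|\varphi_t\|_{H^{1}}$ gives the claim. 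Invoking the global bound $\|\varphi_t\|_{H^{1}}\le Ce^{Kt}$ of Lemma \ref{lem:H1bdd} turns this into $\sup_{x,y}\|h_{x,y}\|_{L^{2}}\le C\alpha_N^{-1}e^{Kt}$.

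With this in hand I would estimate $\|(\mathcal{N}+1)^{j/2}\mathcal{L}_5\psi\|$ by duality. Fix a unit vector $\xi\in\mathcal{F}$; using $(\mathcal{N}+1)^{j/2}a_x^{*}a_y^{*}=a_x^{*}a_y^{*}(\mathcal{N}+3)^{j/2}$ and the pull-through $a_y a_x(\mathcal{N}+1)^{-1}=(\mathcal{N}+3)^{-1}a_y a_x$ one rewrites
\[
\langle\xi,(\mathcal{N}+1)^{j/2}\,a_x^{*}a_y^{*}\phi(h_{x,y})a_y a_x\,\psi\rangle
=\langle a_y a_x(\mathcal{N}+1)^{-1}\xi,\;(\mathcal{N}+3)^{(j+2)/2}\phi(h_{x,y})\,a_y a_x\,\psi\rangle,
\]
that is, one relocates a full factor $(\mathcal{N}+1)^{-1}$ onto the test vector at the price of one extra power of $\mathcal{N}+3$ on the other side. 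Applying $\|\phi(h)\zeta\|\le 2\|h\|_{L^{2}}\|(\mathcal{N}+1)^{1/2}\zeta\|$, integrating in $(x,y)$ and using the Cauchy--Schwarz inequality, the two Fock-space factors that appear are
\[
\Big(\int\mathrm{d}x\,\mathrm{d}y\,\|a_y a_x(\mathcal{N}+1)^{-1}\xi\|^{2}\Big)^{1/2}\le\|\xi\|=1,\qquad
\Big(\int\mathrm{d}x\,\mathrm{d}y\,\|(\mathcal{N}+1)^{(j+3)/2}a_y a_x\psi\|^{2}\Big)^{1/2}\le C\|(\mathcal{N}+1)^{(j+5)/2}\psi\|,
\]
both following from the operator identity $\int\mathrm{d}x\,\mathrm{d}y\,a_x^{*}a_y^{*}a_y a_x=\mathcal{N}(\mathcal{N}-1)\le(\mathcal{N}+1)^{2}$ (the second one after a further pull-through). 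Combining these with $\sup_{x,y}\|h_{x,y}\|_{L^{2}}\le C\alpha_N^{-1}e^{Kt}$ and the prefactor $\frac{1}{2N\sqrt{N}}$ gives $|\langle\xi,(\mathcal{N}+1)^{j/2}\mathcal{L}_5\psi\rangle|\le \frac{Ce^{Kt}}{\alpha_N N\sqrt{N}}\|(\mathcal{N}+1)^{(j+5)/2}\psi\|$, and taking the supremum over $\|\xi\|=1$ concludes the proof.

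I expect the only genuinely delicate point to be the bookkeeping of the powers of $\mathcal{N}$. Because $\mathcal{L}_5$ contains three creation and two annihilation operators, one cannot, as in Lemma \ref{lem:L4c}, move only a single $(\mathcal{N}+1)^{-1/2}$ onto $\xi$; one must absorb a whole $(\mathcal{N}+1)^{-1}$ in order to control the outer pair $a_y a_x$ acting on the test vector, and this --- together with the $(\mathcal{N}+1)^{1/2}$ produced by $\phi(h_{x,y})$ and the $(\mathcal{N}+1)$ produced by $\int\|a_y a_x\psi\|^{2}$ --- is precisely what forces the loss of $5/2$ powers of $\mathcal{N}$ in the statement. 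Extracting the factor $\alpha_N^{-1}$, by contrast, is routine once one notes that every summand of $\overline{V}$ splits into a factor bounded by $\alpha_N^{-1}$ and a factor that is $H^{1}$-subordinate when tested against $\varphi_t$.
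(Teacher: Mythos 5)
Your proposal is correct and follows essentially the same route as the paper: a duality argument with a test vector $\xi$, Cauchy--Schwarz, extraction of the factor $\alpha_N^{-1}$ from one cut-off factor $\overline{v}$ in each summand of $\overline{V}$, and Hardy's inequality together with the global $H^{1}$ bound on $\varphi_t$ to absorb the remaining singular factor, losing exactly $(j+5)/2$ powers of $\mathcal{N}+1$. The only (cosmetic) difference is that you integrate out $z$ first, packaging the interaction as $\phi(h_{x,y})$ with $\sup_{x,y}\|h_{x,y}\|_{L^2}\le C\alpha_N^{-1}\|\varphi_t\|_{H^1}$ and doing Cauchy--Schwarz in $(x,y)$ only, whereas the paper keeps $z$ inside the Cauchy--Schwarz with the weight $|\overline{V}(x-y,x-z)|^{2}|\varphi_t(z)|^{2}$ on the $\psi$ side; the underlying estimates are identical.
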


\begin{proof}
	Let 
	\begin{align*}
	A_{5} & =\int\mathrm{d}x\mathrm{d}y\mathrm{d}z\,\overline{V}(x-y,x-z)\varphi_{t}(z)a_{x}^{*}a_{y}^{*}a_{z}^{*}a_{y}a_{x}.
	\end{align*}
	Then
	\[
	\mathcal{L}_{5}=\frac{1}{2N\sqrt{N}}(A_{5}+A_{5}^{*}).
	\]
	Take any $\xi\in\mathcal{F}$. Then
	\begin{align*}
	& \left\langle \xi,(\mathcal{N}+1)^{j/2}A_{5}^{*}\psi\right\rangle \\
	& =\int\mathrm{d}x\mathrm{d}y\mathrm{d}z\,\overline{V}(x-y,x-z)\varphi_{t}(z)\left\langle (\mathcal{N}+1)^{-1/2}\xi,(\mathcal{N}+1)^{(j+1)/2}a_{x}^{*}a_{y}^{*}a_{z}^{*}a_{y}a_{x}\psi\right\rangle \\
	& \leq\left(\int\mathrm{d}x\mathrm{d}y\mathrm{d}z\,\left\Vert a_{x}a_{y}a_{z}(\mathcal{N}+1)^{-3/2}\xi\right\Vert ^{2}\right)^{1/2}\\
	&\qquad\times\left(\int\mathrm{d}x\mathrm{d}y\mathrm{d}z\,|\overline{V}(x-y,x-z)|^{2}|\varphi_{t}(z)|^{2}\left\Vert a_{y}a_{x}(\mathcal{N}+1)^{(j+3)/2}\psi\right\Vert ^{2}\right)^{1/2}\\
	& \leq C\alpha_{N}^{-1}\left(\int\mathrm{d}x\mathrm{d}y\mathrm{d}z\,\left\Vert a_{x}a_{y}a_{z}(\mathcal{N}+1)^{-3/2}\xi\right\Vert ^{2}\right)^{1/2}\\
	&\qquad\times\left(\int\mathrm{d}x\mathrm{d}y\mathrm{d}z\,\left(|\overline{v}(x-z)|^{2}+|\overline{v}(y-z)|^{2}\right)|\varphi_{t}(z)|^{2}\left\Vert a_{y}a_{x}(\mathcal{N}+1)^{(j+3)/2}\psi\right\Vert ^{2}\right)^{1/2}\\
	& \leq C\alpha_{N}^{-1}\|\varphi_{t}\|_{H^{1}(\mathbb{R}^{3})}\|\xi\|\left\Vert (\mathcal{N}+1)^{(j+5)/2}\psi\right\Vert .
	\end{align*}
\end{proof}
\begin{lem}
	\label{lem:NjUphiUtildeUphitildeU} Suppose that the assumptions in
	Theorem \ref{thm:main} hold. Let $\alpha_{N}=N^{-\eta}$. Then, for
	all $j\in\mathbb{N}$, there exist constants $C\equiv C(j)$ and $K\equiv K(j)$
	such that, for any $f\in L^{2}(\mathbb{R}^{3})$, 
	\[
	\left\Vert \left(\mathcal{N}+1\right)^{j/2}\left(\mathcal{U}^{*}\left(t\right)\phi(f)\mathcal{U}\left(t\right)-\mathcal{\widetilde{U}}^{*}\left(t\right)\phi(f)\mathcal{\widetilde{U}}\left(t\right)\right)\Omega\right\Vert \leq\frac{Ce^{Kt}}{N^{2-\eta}}\|f\|_{L^{2}(\mathbb{R}^{3})}.
	\]
\end{lem}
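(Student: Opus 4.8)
The plan is to adapt the scheme of the corresponding estimates in \cite{Rodnianski2009,ChenLeeSchlein2011,Lee2013} to the present cubic generators $\mathcal{L}_{0},\dots,\mathcal{L}_{6}$. First I would discard $\mathcal{L}_{0}$: it is a real multiple of the identity, so it produces only a phase which cancels in any conjugation $\mathcal{U}^{*}(t)(\,\cdot\,)\mathcal{U}(t)$, and one may therefore replace $\mathcal{U}$ by the evolution $\mathcal{U}'$ generated by $\mathcal{L}':=\mathcal{L}-\mathcal{L}_{0}=\sum_{k\geq 2}\mathcal{L}_{k}$ without altering $\mathcal{U}^{*}(t)\phi(f)\mathcal{U}(t)$. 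Next I would introduce $\mathcal{V}(t):=\widetilde{\mathcal{U}}^{*}(t)\mathcal{U}'(t)$, which solves $\mathcal{V}(0)=1$, $\mathrm{i}\partial_{t}\mathcal{V}(t)=\widetilde{\mathcal{U}}^{*}(t)\mathcal{R}(t)\widetilde{\mathcal{U}}(t)\mathcal{V}(t)$ with $\mathcal{R}:=\mathcal{L}'-\widetilde{\mathcal{L}}=\mathcal{L}_{3}+\mathcal{L}_{4}^{c}+\mathcal{L}_{5}$ (recall $\mathcal{L}_{1}=0$), and which yields $\mathcal{U}^{*}(t)\phi(f)\mathcal{U}(t)=\mathcal{V}^{*}(t)\widetilde{\phi}_{t}\mathcal{V}(t)$ with $\widetilde{\phi}_{t}:=\widetilde{\mathcal{U}}^{*}(t)\phi(f)\widetilde{\mathcal{U}}(t)$. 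Hence
\[
\big(\mathcal{U}^{*}(t)\phi(f)\mathcal{U}(t)-\widetilde{\mathcal{U}}^{*}(t)\phi(f)\widetilde{\mathcal{U}}(t)\big)\Omega=\big(\mathcal{V}^{*}(t)-1\big)\widetilde{\phi}_{t}\mathcal{V}(t)\Omega+\widetilde{\phi}_{t}\big(\mathcal{V}(t)-1\big)\Omega,
\]
and it remains to bound the two summands in $(\mathcal{N}+1)^{j/2}$–norm, using the Duhamel identities $\big(\mathcal{V}(t)-1\big)\Omega=-\mathrm{i}\int_{0}^{t}\widetilde{\mathcal{U}}^{*}(s)\mathcal{R}(s)\mathcal{U}'(s)\Omega\,\mathrm{d}s$ and $\big(\mathcal{V}^{*}(t)-1\big)\chi=\mathrm{i}\int_{0}^{t}(\mathcal{U}'(s))^{*}\mathcal{R}(s)\widetilde{\mathcal{U}}(s)\chi\,\mathrm{d}s$.

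The working tools are Lemma \ref{lem:tildeNj}, used in the form $\langle\widetilde{\mathcal{U}}(t;s)\psi,(\mathcal{N}+1)^{j}\widetilde{\mathcal{U}}(t;s)\psi\rangle\leq Ce^{K|t-s|}\langle\psi,(\mathcal{N}+1)^{j}\psi\rangle$ (which follows from its proof), to move powers of $\mathcal{N}$ across $\widetilde{\mathcal{U}}^{*}$ and to get $\|(\mathcal{N}+1)^{j/2}\widetilde{\phi}_{t}\psi\|\leq Ce^{Kt}\|f\|\,\|(\mathcal{N}+1)^{(j+1)/2}\psi\|$; Lemma \ref{lem:NjU}, which yields $\|(\mathcal{N}+1)^{m/2}\mathcal{U}'(s)\Omega\|\leq Ce^{Ks}$ for all $m$ and controls $(\mathcal{U}'(s))^{*}$ at the cost of a fixed number of extra powers of $\mathcal{N}$; and Lemmas \ref{lem:L3}, \ref{lem:L4c}, \ref{lem:L5}, with prefactors $N^{-1/2}$, $N^{-1}$ and $\alpha_{N}^{-1}N^{-3/2}=N^{\eta-3/2}$. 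With these, the $\mathcal{L}_{4}^{c}$ contribution is immediately $O(N^{-1})\leq O(N^{-(2-\eta)})$ since $\eta\geq 1$. For the $\mathcal{L}_{3}$ and $\mathcal{L}_{5}$ contributions I would expand once more, writing $\mathcal{U}'(s)\Omega=\widetilde{\mathcal{U}}(s)\Omega+\widetilde{\mathcal{U}}(s)\big(\mathcal{V}(s)-1\big)\Omega$ and, symmetrically, $(\mathcal{U}'(s))^{*}=\big(1+(\mathcal{V}^{*}(s)-1)\big)\widetilde{\mathcal{U}}^{*}(s)$: every piece carrying a factor $\mathcal{V}(s)-1$ or $\mathcal{V}^{*}(s)-1$ is then closed by a Grönwall/bootstrap argument for $h_{m}(t):=\sup_{0\leq s\leq t}e^{-Ks}\|(\mathcal{N}+1)^{m/2}(\mathcal{V}(s)-1)\Omega\|$, the small multiplicative factors $N^{-1/2}$ and $N^{\eta-3/2}$ (both $\to 0$) being absorbed; seeded by the crude bound $h_{m}(t)\leq C(m)N^{\eta-3/2}$ read off from the Duhamel formula, this bootstrap improves to $h_{m}(t)\leq C(m)N^{-(2-\eta)}$ once the leading terms below are under control.

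The delicate point, and the main obstacle, is therefore the pair of ``leading'' terms $\int_{0}^{t}\widetilde{\mathcal{U}}^{*}(s)\mathcal{L}_{3}(s)\widetilde{\mathcal{U}}(s)\Omega\,\mathrm{d}s$ and $\int_{0}^{t}\widetilde{\mathcal{U}}^{*}(s)\mathcal{L}_{5}(s)\widetilde{\mathcal{U}}(s)\Omega\,\mathrm{d}s$, together with their analogues in which $\Omega$ is replaced by $\widetilde{\mathcal{U}}(s;t)\phi(f)\widetilde{\mathcal{U}}(t)\Omega$ (coming from the $\mathcal{V}^{*}(t)-1$ summand): estimating the integrands pointwise in $s$ by Lemmas \ref{lem:L3} and \ref{lem:L5} only gives $O(N^{-1/2})$ and $O(N^{\eta-3/2})$, which falls short of the required $O(N^{-(2-\eta)})$ by a factor $N^{1/2}$. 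To recover this factor I would use that $\widetilde{\mathcal{L}}=\mathcal{L}_{2}+\mathcal{L}_{4}^{r}+\mathcal{L}_{6}$ preserves the parity of the particle number, so that $\widetilde{\mathcal{U}}(s)\Omega$ lies in the even sector while $\mathcal{L}_{3}\widetilde{\mathcal{U}}(s)\Omega$ and $\mathcal{L}_{5}\widetilde{\mathcal{U}}(s)\Omega$ lie in the odd one, and then perform an additional time–Duhamel expansion of the inner factors — of $\widetilde{\mathcal{U}}(s)\Omega$ around the free evolution $e^{-\mathrm{i}s\, d\Gamma(-\Delta)}$, which fixes $\Omega$, and one further step of $\mathcal{V}$ — so that the contributions surviving the parity restriction carry one more power of the regularised interaction, hence one more $N^{-1/2}$; the goal is to turn the $\mathcal{L}_{3}$ term into $O(N^{-1})$ and the $\mathcal{L}_{5}$ term into $O(N^{\eta-2})$, as in the analogous arguments of \cite{Rodnianski2009,ChenLeeSchlein2011}. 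Collecting the bounds for all summands and tracking the constants through Lemmas \ref{lem:tildeNj} and \ref{lem:NjU} then gives the asserted estimate $Ce^{Kt}N^{-(2-\eta)}\|f\|$, with $C,K$ depending on $j$.
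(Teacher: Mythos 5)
Your overall skeleton (drop $\mathcal{L}_{0}$ as a phase, Duhamel with the difference generator $\mathcal{L}_{3}+\mathcal{L}_{4}^{c}+\mathcal{L}_{5}$, then Lemmas \ref{lem:NjU}, \ref{lem:tildeNj}, \ref{lem:L3}, \ref{lem:L4c}, \ref{lem:L5}) is exactly the route the paper takes, and up to that point your estimates agree with the paper's: the three contributions are $O(N^{-1/2})$, $O(N^{-1})$ and $O(\alpha_{N}^{-1}N^{-3/2})=O(N^{\eta-3/2})$. The genuine gap is the extra mechanism you invoke to upgrade this to the stated $N^{-(2-\eta)}$. Parity conservation of $\widetilde{\mathcal{L}}$ only tells you in which sectors a vector lives; it cannot shrink the $(\mathcal{N}+1)^{j/2}$-norm you are bounding. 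Likewise, expanding $\widetilde{\mathcal{U}}(s)\Omega$ around the free evolution gains no power of $N$: the difference $\widetilde{\mathcal{L}}-d\Gamma(-\Delta)$ contains the pair creation/annihilation part of $\mathcal{L}_{2}$, which carries no factor $N^{-1/2}$, so the correction terms are $O(1)$ in $N$. In fact no argument can reach $N^{-(2-\eta)}$ here: differentiating at $t=0$ gives $\partial_{t}\big(\mathcal{U}^{*}\phi(f)\mathcal{U}-\widetilde{\mathcal{U}}^{*}\phi(f)\widetilde{\mathcal{U}}\big)\big|_{t=0}\Omega=\mathrm{i}[\mathcal{L}_{3}+\mathcal{L}_{4}^{c}+\mathcal{L}_{5},\phi(f)]\Omega$, and the commutator $[\mathcal{L}_{3},\phi(f)]\Omega$ (e.g.\ the contractions of $a^{*}a^{*}a^{*}$ with $a(f)$ and of $a^{*}a^{*}a$ with $a^{*}(f)$) is a generically nonzero two-particle vector of size $N^{-1/2}$, so the left-hand side is genuinely of order $tN^{-1/2}$ for small $t$, which exceeds $N^{-(2-\eta)}$ whenever $\eta<3/2$.

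What the paper actually does is stop at the straightforward estimate: its proof of this lemma ends with the bound $C\|f\|e^{Kt}N^{-(3/2-\eta)}$ (the exponent $2-\eta$ in the statement is inconsistent with its own proof), and this weaker bound is all that is used downstream. The missing factor $N^{-1/2}$ in Proposition \ref{prop:Et2} is not extracted from $\mathcal{R}(f)$ at all; it comes from the prefactor $d_{N}/\sqrt{N}$ combined with Lemma \ref{lem:coherent_all}, while the parity argument (via Lemma \ref{lem:coherent_even_odd}) is applied only to the main term $\widetilde{\mathcal{U}}^{*}\phi(J\varphi_{t})\widetilde{\mathcal{U}}\Omega$, paired against the odd-sector projections of $W^{*}(\sqrt{N}\varphi)\frac{(a^{*}(\varphi))^{N}}{\sqrt{N!}}\Omega$. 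So you should drop the bootstrap/parity upgrade, record the honest bound $N^{-(3/2-\eta)}$ for this lemma, and import the remaining $N^{-1/2}$ at the level of Proposition \ref{prop:Et2} as the paper does.
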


\begin{proof}
	Let 
	\[
	\mathcal{R}_{1}(f):=\left(\mathcal{U}^{*}(t)-\mathcal{\widetilde{U}}^{*}(t)\right)\phi(f)\mathcal{\widetilde{U}}(t)
	\]
	and 
	\[
	\mathcal{R}_{2}(f):=\mathcal{U}^{*}(t)\phi(f)\left(\mathcal{U}(t)-\mathcal{\widetilde{U}}(t)\right)
	\]
	so that 
	\begin{equation}
	\mathcal{U}^{*}(t)\phi(f)\mathcal{U}(t)-\mathcal{\widetilde{U}}^{*}(t)\phi(f)\mathcal{\widetilde{U}}(t)=\mathcal{R}_{1}(f)+\mathcal{R}_{2}(f).\label{eq:R_1}
	\end{equation}
	We begin by estimating the first term in the right-hand side of \eqref{eq:R_1}.
	From Lemma \ref{lem:NjU}, 
	\begin{align*}
	&\left\Vert (\mathcal{N}+1)^{j/2}\mathcal{R}_{1}(f)\Omega\right\Vert \\ & =\left\Vert \int_{0}^{t}\mathrm{d}s(\mathcal{N}+1)^{j/2}\mathcal{U}^{*}(s;0)(\mathcal{L}_{3}+\mathcal{L}_{4}^{c}+\mathcal{L}_{5})\mathcal{\widetilde{U}}^{*}(t)\phi(f)\mathcal{\widetilde{U}}(t)\Omega\right\Vert \\
	& \leq\int_{0}^{t}\mathrm{d}s\left\Vert (\mathcal{N}+1)^{j/2}\mathcal{U}^{*}(s;0)(\mathcal{L}_{3}+\mathcal{L}_{4}^{c}+\mathcal{L}_{5})\mathcal{\widetilde{U}}^{*}(t)\phi(f)\mathcal{\widetilde{U}}(t)\Omega\right\Vert \\
	& \leq Ce^{Kt}\int_{0}^{t}\mathrm{d}s\left\Vert (\mathcal{N}+1)^{j+1}(\mathcal{L}_{3}+\mathcal{L}_{4}^{c}+\mathcal{L}_{5})\mathcal{\widetilde{U}}^{*}(t)\phi(f)\mathcal{\widetilde{U}}(t)\Omega\right\Vert .
	\end{align*}
	From Lemmata \ref{lem:L3}, \ref{lem:L4c}, and \ref{lem:L5}, we
	have
	\begin{align*}
	& \int_{0}^{t}\mathrm{d}s\left\Vert (\mathcal{N}+1)^{j+1}\mathcal{L}_{3}\mathcal{\widetilde{U}}^{*}(t)\phi(f)\mathcal{\widetilde{U}}(t)\Omega\right\Vert \leq\frac{Ce^{Kt}}{\sqrt{N}}\int_{0}^{t}\mathrm{d}s\,\left\Vert (\mathcal{N}+1)^{j+(5/2)}\mathcal{\widetilde{U}}^{*}(t)\phi(f)\mathcal{\widetilde{U}}(t)\Omega\right\Vert \\
	& \leq\frac{Ce^{Kt}}{\sqrt{N}}\int_{0}^{t}\mathrm{d}s\,\left\Vert (\mathcal{N}+1)^{j+4}\phi(f)\mathcal{\widetilde{U}}(t)\Omega\right\Vert \leq\frac{C\|f\|_{L^{2}(\mathbb{R}^{3})}e^{Kt}}{\sqrt{N}}\left(\int_{0}^{t}\mathrm{d}s\,\left\Vert (\mathcal{N}+1)^{j+5}\mathcal{\widetilde{U}}(t)\Omega\right\Vert ^{2}\right)^{1/2}\\
	& \leq\frac{C\|f\|_{L^{2}(\mathbb{R}^{3})}e^{Kt}}{\sqrt{N}}\left(\int_{0}^{t}\mathrm{d}s\,\left\Vert (\mathcal{N}+1)^{j+13/2}\Omega\right\Vert ^{2}\right)^{1/2}\leq\frac{Ce^{Kt}}{\sqrt{N}}\|f\|_{L^{2}(\mathbb{R}^{3})},
	\end{align*}
	\begin{align*}
	& \int_{0}^{t}\mathrm{d}s\left\Vert (\mathcal{N}+1)^{j+1}\mathcal{L}_{4}\mathcal{\widetilde{U}}^{*}(t)\phi(f)\mathcal{\widetilde{U}}(t)\Omega\right\Vert \leq\frac{Ce^{Kt}}{N}\int_{0}^{t}\mathrm{d}s\,\left\Vert (\mathcal{N}+1)^{j+3}\mathcal{\widetilde{U}}^{*}(t)\phi(f)\mathcal{\widetilde{U}}(t)\Omega\right\Vert \\
	& \leq\frac{Ce^{Kt}}{N}\int_{0}^{t}\mathrm{d}s\,\left\Vert (\mathcal{N}+1)^{j+9/2}\phi(f)\mathcal{\widetilde{U}}(t)\Omega\right\Vert \leq\frac{C\|f\|_{L^{2}(\mathbb{R}^{3})}e^{Kt}}{N}\left(\int_{0}^{t}\mathrm{d}s\,\left\Vert (\mathcal{N}+1)^{j+(11/2)}\mathcal{\widetilde{U}}(t)\Omega\right\Vert ^{2}\right)^{1/2}\\
	& \leq\frac{C\|f\|_{L^{2}(\mathbb{R}^{3})}e^{Kt}}{\sqrt{N}}\left(\int_{0}^{t}\mathrm{d}s\,\left\Vert (\mathcal{N}+1)^{j+7}\mathcal{\widetilde{U}}(t)\Omega\right\Vert ^{2}\right)^{1/2}\leq\frac{Ce^{Kt}}{N}\|f\|_{L^{2}(\mathbb{R}^{3})}
	\end{align*}
	and
	\begin{align*}
	& \int_{0}^{t}\mathrm{d}s\left\Vert (\mathcal{N}+1)^{j+1}\mathcal{L}_{5}\mathcal{\widetilde{U}}^{*}(t)\phi(f)\mathcal{\widetilde{U}}(t)\Omega\right\Vert \leq\frac{Ce^{Kt}}{\alpha_{N}N\sqrt{N}}\int_{0}^{t}\mathrm{d}s\,\left\Vert (\mathcal{N}+1)^{j+(7/2)}\mathcal{\widetilde{U}}^{*}(t)\phi(f)\mathcal{\widetilde{U}}(t)\Omega\right\Vert \\
	& \leq\frac{Ce^{Kt}}{\alpha_{N}N\sqrt{N}}\int_{0}^{t}\mathrm{d}s\,\left\Vert (\mathcal{N}+1)^{j+5}\phi(f)\mathcal{\widetilde{U}}(t)\Omega\right\Vert \leq\frac{C\|f\|_{L^{2}(\mathbb{R}^{3})}e^{Kt}}{\alpha_{N}N\sqrt{N}}\left(\int_{0}^{t}\mathrm{d}s\,\left\Vert (\mathcal{N}+1)^{j+6}\mathcal{\widetilde{U}}(t)\Omega\right\Vert ^{2}\right)^{1/2}\\
	& \leq\frac{C\|f\|_{L^{2}(\mathbb{R}^{3})}e^{Kt}}{\sqrt{N}}\left(\int_{0}^{t}\mathrm{d}s\,\left\Vert (\mathcal{N}+1)^{j+15/2}\mathcal{\widetilde{U}}(t)\Omega\right\Vert ^{2}\right)^{1/2}\leq\frac{Ce^{Kt}}{\alpha_{N}N\sqrt{N}}\|f\|_{L^{2}(\mathbb{R}^{3})}.
	\end{align*}
	Then, since the integrand in the right-hand side does not depend on
	$s$, we get 
	\begin{align*}
	\left\Vert (\mathcal{N}+1)^{j/2}\mathcal{R}_{1}(f)\Omega\right\Vert  & \leq Ce^{Kt}\left(\frac{1}{\sqrt{N}}+\frac{1}{N}+\frac{1}{\alpha_{N}N\sqrt{N}}\right)\left\Vert f\right\Vert _{L^{2}(\mathbb{R}^{3})}.
	\end{align*}
	Thus, from Lemma \ref{lem:tildeNj}, we obtain for $\mathcal{R}_{1}(f)$
	that 
	\[
	\left\Vert (\mathcal{N}+1)^{j/2}\mathcal{R}_{1}(f)\Omega\right\Vert \leq C\|f\|e^{Kt}\frac{1}{N^{(3/2)-\eta}}.
	\]
	The study of $\mathcal{R}_{2}(f)$ is similar and we can again obtain
	that 
	\[
	\|(\mathcal{N}+1)^{j/2}\mathcal{R}_{2}(f)\Omega\|\leq C\|f\|e^{Kt}\frac{1}{N^{(3/2)-\eta}}.
	\]
	This completes the proof of the desired lemma.
\end{proof}

\section{Proof of Propositions \ref{prop:Et1} and \ref{prop:Et2}}

\label{sec:Pf-of-Props}

In this section, we prove Propositions \ref{prop:Et1} and \ref{prop:Et2}
by applying the lemmas proved in Section \ref{sec:comparison}. Even
though the proofs are almost the same as previous works, we include
the following proves since underlying lemmas and logic are different.
The structure of the proof is given in \cite{Lee2013}. We, however,
provide this section since the exponents of $(\mathcal{N}+1)$ are
a bit different.
\begin{proof}[Proof of Proposition \ref{prop:Et1}]
	Recall that Comparison dynamics
	\[
	E_{t}^{1}(J)=\frac{d_{N}}{N}\left\langle W^{*}(\sqrt{N}\varphi)\frac{(a^{*}(\varphi))^{N}}{\sqrt{N!}}\Omega,\mathcal{U}^{*}(t)d\Gamma(J)\mathcal{U}(t)\Omega\right\rangle 
	\]
	We begin by 
	\begin{align}
	\left|E_{t}^{1}(J)\right| & =\left|\frac{d_{N}}{N}\left\langle W^{*}(\sqrt{N}\varphi)\frac{(a^{*}(\varphi))^{N}}{\sqrt{N!}}\Omega,\mathcal{U}^{*}(t)d\Gamma(J)\mathcal{U}(t)\Omega\right\rangle \right|\label{eq:E_t^1 1}\\
	& \leq\frac{d_{N}}{N}\left\Vert (\mathcal{N}+1)^{-\frac{1}{2}}W^{*}(\sqrt{N}\varphi)\frac{(a^{*}(\varphi))^{N}}{\sqrt{N!}}\Omega\right\Vert \left\Vert (\mathcal{N}+1)^{\frac{1}{2}}\mathcal{U}^{*}(t)d\Gamma(J)\mathcal{U}(t)\Omega\right\Vert \nonumber 
	\end{align}
	From Lemma \ref{lem:coherent_all}, 
	\begin{equation}
	\left\Vert (\mathcal{N}+1)^{-\frac{1}{2}}W^{*}(\sqrt{N}\varphi)\frac{(a^{*}(\varphi))^{N}}{\sqrt{N!}}\Omega\right\Vert \leq\frac{C}{d_{N}}.\label{eq:E_t^1 2}
	\end{equation}
	By successively applying Lemma \ref{lem:NjU} (and also using the
	inequality \eqref{eq:J-bd}), we also get 
	\begin{align}
	\left\Vert (\mathcal{N}+1)^{\frac{1}{2}}\mathcal{U}^{*}(t)d\Gamma(J)\mathcal{U}(t)\Omega\right\Vert  & \leq Ce^{Kt}\left\Vert (\mathcal{N}+1)^{2}d\Gamma(J)\mathcal{U}(t)\Omega\right\Vert \leq C\left\Vert J\right\Vert e^{Kt}\left\Vert (\mathcal{N}+1)^{3}\mathcal{U}(t)\Omega\right\Vert \nonumber \\
	& \leq C\left\Vert J\right\Vert e^{Kt}\left\Vert (\mathcal{N}+1)^{9/2}\Omega\right\Vert .\label{eq:E_t^1 3}
	\end{align}
	Thus, from \eqref{eq:E_t^1 1}, \eqref{eq:E_t^1 2}, and \eqref{eq:E_t^1 3},
	\[
	\left|E_{t}^{1}(J)\right|\leq\frac{C\|J\|e^{Kt}}{N^{2-\eta}},
	\]
	which proves the desired result. 
\end{proof}
For the proof of Proposition \ref{prop:Et2}, we take almost verbatim
copy of the proof of Lemma 4.2 in \cite{Lee2013}. To make the paper
self-contained, we write it in detail below.
\begin{proof}[Proof of Proposition \ref{prop:Et2}]
	
	Recall the definitions of $\mathcal{R}_{1}$ and $\mathcal{R}_{2}$
	in the pr2oof of Lemma \ref{lem:NjUphiUtildeUphitildeU}. Let $\mathcal{R}=\mathcal{R}_{1}+\mathcal{R}_{2}$
	so that 
	\[
	\mathcal{R}(f)=\mathcal{U}^{*}(t)\phi(f)\mathcal{U}(t)-\mathcal{\widetilde{U}}^{*}(t)\phi(f)\mathcal{\widetilde{U}}(t).
	\]
	From the parity conservation \eqref{eq:Parity_Consevation}, 
	\[
	P_{2k}\mathcal{\widetilde{U}}^{*}(t)\phi(J\varphi_{t})\mathcal{\widetilde{U}}(t)\Omega=0
	\]
	for all $k=0,1,\dots$. (See Lemma 8.2 in \cite{Lee2013} for more
	detail.) Thus, 
	\begin{align}
	&\left|E_{t}^{2}(J)\right| \\& =\frac{d_{N}}{\sqrt{N}}\left\langle \frac{(a^{*}(\varphi))^{N}}{\sqrt{N!}}\Omega,W^{*}(\sqrt{N}\varphi)\mathcal{\widetilde{U}}^{*}(t)\phi(J\varphi_{t})\mathcal{\mathcal{\widetilde{U}}}(t)\Omega\right\rangle \nonumber \\
	& \qquad+\frac{d_{N}}{\sqrt{N}}\left\langle \frac{(a^{*}(\varphi))^{N}}{\sqrt{N!}}\Omega,W^{*}(\sqrt{N}\varphi)\mathcal{R}(J\varphi_{t})\Omega\right\rangle \nonumber \\
	& \leq\frac{d_{N}}{\sqrt{N}}\left\Vert \sum_{k=1}^{\infty}(\mathcal{N}+1)^{-\frac{5}{2}}P_{2k-1}W^{*}(\sqrt{N}\varphi)\frac{(a^{*}(\varphi))^{N}}{\sqrt{N!}}\Omega\right\Vert \left\Vert (\mathcal{N}+1)^{\frac{5}{2}}\mathcal{\widetilde{U}}^{*}(t)\phi(J\varphi_{t})\mathcal{\widetilde{U}}(t)\Omega\right\Vert \nonumber \\
	& \qquad+\frac{d_{N}}{\sqrt{N}}\left\Vert (\mathcal{N}+1)^{-\frac{1}{2}}W^{*}(\sqrt{N}\varphi)\frac{(a^{*}(\varphi))^{N}}{\sqrt{N!}}\Omega\right\Vert \left\Vert (\mathcal{N}+1)^{\frac{1}{2}}\mathcal{R}(J\varphi_{t})\Omega\right\Vert \label{eq:e_t^2 1}
	\end{align}
	Let $K=\frac{1}{2}N^{1/3}$ so that Lemmas \ref{lem:coherent_all}
	and \ref{lem:coherent_even_odd} show that 
	\begin{align*}
	& \left\Vert \sum_{k=1}^{\infty}(\mathcal{N}+1)^{-\frac{5}{2}}P_{2k-1}W^{*}(\sqrt{N}\varphi)\frac{(a^{*}(\varphi))^{N}}{\sqrt{N!}}\Omega\right\Vert ^{2}\\
	& \qquad\leq\sum_{k=1}^{K}\left\Vert (\mathcal{N}+1)^{-\frac{5}{2}}P_{2k-1}W^{*}(\sqrt{N}\varphi)\frac{(a^{*}(\varphi))^{N}}{\sqrt{N!}}\Omega\right\Vert ^{2}\\
	& \qquad\qquad+\frac{1}{K^{4}}\sum_{k=K}^{\infty}\left\Vert (\mathcal{N}+1)^{-1/2}P_{2k-1}W^{*}(\sqrt{N}\varphi)\frac{(a^{*}(\varphi))^{N}}{\sqrt{N!}}\Omega\right\Vert ^{2}\\
	& \qquad\leq\left(\sum_{k=1}^{K}\frac{C}{k^{2}d_{N}^{2}N}\right)+\frac{C}{N^{4/3}}\left\Vert (\mathcal{N}+1)^{-1/2}W^{*}(\sqrt{N}\varphi)\frac{(a^{*}(\varphi))^{N}}{\sqrt{N!}}\Omega\right\Vert \leq\frac{C}{d_{N}^{2}N}.
	\end{align*}
	Using Lemma \ref{lem:tildeNj}, 
	\begin{alignat*}{1}
	& \left\Vert (\mathcal{N}+1)^{\frac{5}{2}}\mathcal{\widetilde{U}}^{*}(t)\phi(J\varphi_{t})\mathcal{\widetilde{U}}(t)\Omega\right\Vert \leq Ce^{Kt}\left\Vert (\mathcal{N}+1)^{4}\phi(J\varphi_{t})\mathcal{\widetilde{U}}(t)\Omega\right\Vert \\
	& \quad\leq C\|J\varphi_{t}\|e^{Kt}\left\Vert (\mathcal{N}+1)^{5}\mathcal{\mathcal{\widetilde{U}}}(t)\Omega\right\Vert \leq C\|J\|e^{Kt}\left\Vert (\mathcal{N}+1)^{13/2}\Omega\right\Vert \leq C\|J\|e^{Kt}.
	\end{alignat*}
	For the second term in the right-hand side of \eqref{eq:e_t^2 1},
	we use Lemmas \ref{lem:coherent_all} and \ref{lem:NjUphiUtildeUphitildeU},
	where we put $J\varphi_{t}$ in the place of $f$ for the latter.
	Altogether, we have 
	\[
	\left\Vert (\mathcal{N}+1)^{j/2}\mathcal{R}(f)\Omega\right\Vert \leq\frac{C\|f\|e^{Kt}}{N^{2-\eta}},
	\]
	which is the desired conclusion.
\end{proof}

\newpage
\appendix

\section{Standard Fock space formalism\label{sec:Fock_space}}

This section is devoted to introduce the standard Fock space formalism.
One can see more details in many articles, for example, \cite{Benedikter2016,ChenLeeSchlein2011,Rodnianski2009}.

To consider the system of $N$-bosons, we want to embed our the system
into a larger space so-called bosonic Fock space over $L^{2}(\mathbb{R}^{3})$
which is defined as
\[
\mathcal{F}:=\bigoplus_{n\geq0}L^{2}(\mathbb{R}^{3},\mathrm{d}x)^{\otimes_{s}n}=\mathbb{C}\oplus\bigoplus_{n\geq1}L_{s}^{2}(\mathbb{R}^{3n},\mathrm{d}x_{1},\dots,\mathrm{d}x_{n}),
\]
where $L_{s}^{2}=L_{s}^{2}(\mathbb{R}^{3n},\mathrm{d}x_{1},\dots,\mathrm{d}x_{n})$
is a symmetric subspace of $L^{2}(\mathbb{R}^{3n},\mathrm{d}x_{1},\dots,\mathrm{d}x_{n})$
where we let $L_{s}^{2}(\mathbb{R}^{3})^{\otimes0}=\mathbb{C}$. An
element $\psi\in\mathcal{F}$ is called a state, and it can be understood
as a sequence $\psi=\{\psi^{\left(n\right)}\}_{n\geq0}$ of $n$-particle
wave functions $\psi^{\left(n\right)}\in L_{s}^{2}(\mathbb{R}^{3n})$.
The inner product on $\mathcal{F}$ is defined by 
\[
\begin{aligned}\langle\psi_{1},\psi_{2}\rangle & =\sum_{n\geq0}\langle\psi_{1}^{\left(n\right)},\psi_{2}^{\left(n\right)}\rangle_{L^{2}(\mathbb{R}^{3n})}\\
& =\overline{\psi_{1}^{(0)}}\psi_{2}^{(0)}+\sum_{n\geq0}\int\mathrm{d}x_{1}\dots\mathrm{d}x_{n}\overline{\psi_{1}^{\left(n\right)}(x_{1},\dots,x_{n})}\psi_{2}^{\left(n\right)}(x_{1},\dots,x_{n}).
\end{aligned}
\]
A vacuum state is defined as $\Omega:=\left\{ 1,0,0,\dots\right\} \in\mathcal{F}$.
Since a state in a Fock space can have a different number of particles,
we define the creation operator $a^{*}(f)$ and the annihilation operator
$a(f)$ on $\mathcal{F}$ by 
\begin{equation}
\left(a^{*}\left(f\right)\psi\right)^{\left(n\right)}(x_{1},\dots,x_{n})=\frac{1}{\sqrt{n}}\sum_{j=1}^{n}f(x_{j})\psi^{\left(n-1\right)}(x_{1},\dots,x_{j-1},x_{j+1},\dots,x_{n})\label{eq:creation}
\end{equation}
and 
\begin{equation}
\left(a\left(f\right)\psi\right)^{\left(n\right)}(x_{1},\dots,x_{n})=\sqrt{n+1}\int\mathrm{d}x\overline{f\left(x\right)}\psi^{\left(n+1\right)}(x,x_{1},\dots,x_{n})\label{eq:annihilation}
\end{equation}
which creates a particle $f$ to the system and annihilates $f$ from
the system (respectively). Note that both $a^{*}(f)$ and $a(f)$
are not self-adjoint. We define the self-adjoint operator $\phi(f)$
such as 
\[
\phi(f)=a^{*}(f)+a(f).
\]
We also use operator-valued distributions $a_{x}^{*}$ and $a_{x}$
satisfying 
\[
a^{*}(f)=\int\mathrm{d}x\,f\left(x\right)a_{x}^{*},\qquad a(f)=\int\mathrm{d}x\,\overline{f\left(x\right)}a_{x}
\]
for any $f\in L^{2}(\mathbb{R}^{3})$. The canonical commutation relation
between creation and annihilation operators is 
\[
[a(f),a^{*}(g)]=\langle f,g\rangle_{L^{2}(\mathbb{R}^{3})},\quad[a(f),a(g)]=[a^{*}(f),a^{*}(g)]=0,
\]
which also assumes the form 
\[
[a_{x},a_{y}^{*}]=\delta\left(x-y\right),\quad[a_{x},a_{y}]=[a_{x}^{*},a_{y}^{*}]=0.
\]
Moreover, the number operator $\mathcal{N}$, which gives us the expected
number of the state in Fock space, is defined by 
\begin{equation}
\mathcal{N}:=\int\mathrm{d}x\,a_{x}^{*}a_{x},\label{eq:number operator}
\end{equation}
For each non-negative integer $n$, we introduce the projection operator
onto the $n$-particle sector of the Fock space, 
\[
P_{n}(\psi):=(0,0,\dots,0,\psi^{(n)},0,\dots)
\]
for $\psi=(\psi^{(0)},\psi^{(1)},\dots)\in\mathcal{F}$. For simplicity,
with slight abuse of notation, we will use $\psi^{(n)}$ to denote
$P_{n}\psi$. and it satisfies that $(\mathcal{N}\psi)^{(n)}=n\psi^{(n)}$.
For an operator $J$ on the one-particle sector $L^{2}(\mathbb{R}^{3},\mathrm{d}x)$,
we define its second quantization $d\Gamma(J)$ by
\[
\left(d\Gamma\left(J\right)\psi\right)^{(n)}=\sum_{j=1}^{n}J_{j}\psi^{(n)}
\]
where $J_{j}=1\otimes\dots\otimes J\otimes\dots\otimes1$ is the operator
$J$ acting on the $j$-th variable only. The number operator defined
above can also be understood as the second quantization of the identity,
i.e., $\mathcal{N}=d\Gamma(1)$. With a kernel $J(x;y)$ of the operator
$J$, the second quantization $d\Gamma(J)$ can be also be written
as 
\[
d\Gamma(J)=\int\mathrm{d}x\mathrm{d}y\,J(x;y)a_{x}^{*}a_{y},
\]
which is consistent with \eqref{eq:number operator}.

Since the annihilation operator and the creation operator forms the
number operator, it is natural to control the operators by the number
operator. To control the operators and second quantization, we provide
the following lemma.
\begin{lem}
	For $\alpha>0$, let $D(\mathcal{N}^{\alpha})=\{\psi\in\mathcal{F}:\sum_{n\geq1}n^{2\alpha}\|\psi^{(n)}\|^{2}<\infty\}$
	denote the domain of the operator $\mathcal{N}^{\alpha}$. For any
	$f\in L^{2}(\mathbb{R}^{3},dx)$ and any $\psi\in D(\mathcal{N}^{1/2})$,
	we have 
	\begin{equation}
	\begin{split}\|a(f)\psi\| & \leq\|f\|\,\|\mathcal{N}^{1/2}\psi\|,\\
	\|a^{*}(f)\psi\| & \leq\|f\|\,\|(\mathcal{N}+1)^{1/2}\psi\|,\\
	\|\phi(f)\psi\| & \leq2\|f\|\|\left(\mathcal{N}+1\right)^{1/2}\psi\|\,.
	\end{split}
	\label{eq:bd-a}
	\end{equation}
	Moreover, for any bounded one-particle operator $J$ on $L^{2}(\mathbb{R}^{3},dx)$
	and for every $\psi\in D(\mathcal{N})$, we find 
	\begin{equation}
	\|d\Gamma(J)\psi\|\leq\|J\|\|\mathcal{N}\psi\|\,.\label{eq:J-bd}
	\end{equation}
\end{lem}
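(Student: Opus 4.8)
The plan is to prove each of the four inequalities by working sector by sector in the Fock space decomposition $\psi=\{\psi^{(n)}\}_{n\geq0}$, since $a(f)$ and $a^{*}(f)$ shift the particle number by one while $d\Gamma(J)$ preserves it. All manipulations can first be carried out on the dense subspace of vectors with finitely many nonzero sectors and then extended by continuity once the stated bounds are established there, so domain issues are not a genuine obstruction.

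First I would treat the annihilation bound. From the definition \eqref{eq:annihilation}, $(a(f)\psi)^{(n)}(x_{1},\dots,x_{n})=\sqrt{n+1}\int\mathrm{d}x\,\overline{f(x)}\psi^{(n+1)}(x,x_{1},\dots,x_{n})$, so the Cauchy–Schwarz inequality in the variable $x$ gives
\[
\bigl|(a(f)\psi)^{(n)}(x_{1},\dots,x_{n})\bigr|^{2}\leq(n+1)\|f\|^{2}\int\mathrm{d}x\,\bigl|\psi^{(n+1)}(x,x_{1},\dots,x_{n})\bigr|^{2}.
\]
Integrating over $x_{1},\dots,x_{n}$ and summing over $n\geq0$ yields $\|a(f)\psi\|^{2}\leq\|f\|^{2}\sum_{n\geq0}(n+1)\|\psi^{(n+1)}\|^{2}=\|f\|^{2}\sum_{m\geq1}m\|\psi^{(m)}\|^{2}=\|f\|^{2}\|\mathcal{N}^{1/2}\psi\|^{2}$. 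For the creation bound I would use the canonical commutation relation $[a(f),a^{*}(f)]=\|f\|^{2}$, which for $\psi\in D(\mathcal{N}^{1/2})$ gives
\[
\|a^{*}(f)\psi\|^{2}=\langle\psi,a(f)a^{*}(f)\psi\rangle=\langle\psi,a^{*}(f)a(f)\psi\rangle+\|f\|^{2}\|\psi\|^{2}=\|a(f)\psi\|^{2}+\|f\|^{2}\|\psi\|^{2};
\]
combined with the first estimate this gives $\|a^{*}(f)\psi\|^{2}\leq\|f\|^{2}\bigl(\|\mathcal{N}^{1/2}\psi\|^{2}+\|\psi\|^{2}\bigr)=\|f\|^{2}\|(\mathcal{N}+1)^{1/2}\psi\|^{2}$. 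The bound on $\phi(f)=a(f)+a^{*}(f)$ then follows from the triangle inequality together with $\|\mathcal{N}^{1/2}\psi\|\leq\|(\mathcal{N}+1)^{1/2}\psi\|$, giving the factor $2$.

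Finally, for $d\Gamma(J)$ I would use $(d\Gamma(J)\psi)^{(n)}=\sum_{j=1}^{n}J_{j}\psi^{(n)}$ with $\|J_{j}\|=\|J\|$, so the triangle inequality gives $\|(d\Gamma(J)\psi)^{(n)}\|\leq n\|J\|\,\|\psi^{(n)}\|$; squaring, summing over $n$, and recognizing $\sum_{n\geq0}n^{2}\|\psi^{(n)}\|^{2}=\|\mathcal{N}\psi\|^{2}$ produces $\|d\Gamma(J)\psi\|\leq\|J\|\,\|\mathcal{N}\psi\|$. There is no real difficulty here — these are standard second-quantization estimates; the only mildly delicate point is the justification of the commutation-relation computation, which is handled by the density argument mentioned above.
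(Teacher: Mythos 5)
Your proof is correct, and it is essentially the standard argument: sector-wise Cauchy--Schwarz from the definition of $a(f)$, the commutation relation $[a(f),a^{*}(f)]=\|f\|^{2}$ to convert the annihilation bound into the creation bound, the triangle inequality for $\phi(f)$, and the sector-wise triangle inequality with $\|J_{j}\|=\|J\|$ for $d\Gamma(J)$. The paper itself does not prove the lemma but only cites \cite[Lemma 2.1]{Rodnianski2009} for \eqref{eq:bd-a} and \cite[Lemma 3.1]{ChenLeeSchlein2011} for \eqref{eq:J-bd}, and your argument is in substance the proof found there (those references phrase the bound on $a(f)$ via the operator-valued distributions $a_{x}$ and $\int\mathrm{d}x\,\|a_{x}\psi\|^{2}=\|\mathcal{N}^{1/2}\psi\|^{2}$, which is equivalent to your sector-by-sector computation), with your density remark adequately handling the domain issues.
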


\begin{proof}
	See \cite[Lemma 2.1]{Rodnianski2009} for \eqref{eq:bd-a}, and see
	\cite[Lemma 3.1]{ChenLeeSchlein2011} for \eqref{eq:J-bd}.
\end{proof}
Heuristically, there are eigenvectors of $a_{x}$ with the eigenvalue
$\sqrt{N}f$, where $f\in L^{2}(\mathbb{R}^{3})$. It known as the
coherent states, defined by, for $f\in L^{2}(\mathbb{R}^{3})$,
\[
\psi_{\mathrm{coh}}(f):=e^{-\|f\|_{L^{2}}^{2}/2}\sum_{n\geq0}\frac{\left(a^{*}(f)\right)^{n}}{n!}\Omega=e^{-\|f\|_{L^{2}}^{2}/2}\sum_{n\geq0}\frac{1}{\sqrt{n!}}f^{\otimes n}.
\]
Then from \eqref{eq:Kernel_gamma} one obtain $\gamma_{\psi_{\mathrm{coh}}}^{(1)}(x;y)=\varphi_{t}(x)\overline{\varphi_{t}(y)}$,
which is exactly the one-particle marginal density associated with
the factorized wave function $\varphi_{t}^{\otimes N}$. Note that,
unlike our system with $N$-particles, such eigenvectors of the annihilation
operator can have any number of particles. We, however, can utilize
coherent states for our goal.

The coherent state can be generated by acting Weyl operator $W(f)$
on vacuum state $\Omega$. i.e.,
\begin{equation}
\psi_{\mathrm{coh}}(f)=W(f)\Omega=e^{-\left\Vert f\right\Vert ^{2}/2}\exp\left(a^{*}\left(f\right)\right)\Omega=e^{-\left\Vert f\right\Vert ^{2}/2}\sum_{n\geq0}\frac{1}{\sqrt{n!}}f^{\otimes n}.\label{Weyl_f}
\end{equation}
Where the Weyl operator $W\left(f\right)$ is defined by 
\[
W(f):=\exp\left(a^{*}(f)-a(f)\right)
\]
and it also satisfies 
\[
W\left(f\right)=e^{-\left\Vert f\right\Vert ^{2}/2}\exp\left(a^{*}(f)\right)\exp\left(-a(f)\right),
\]
which is known as the Hadamard lemma in Lie algebra. We collect the
useful properties of the Weyl operator and the coherent states in
the following lemma.
\begin{lem}
	\label{lem:Basic_Weyl} Let $f,g\in L^{2}(\mathbb{R}^{3})$.
	\begin{enumerate}
		\item The commutation relation between the Weyl operators is given by 
		\[
		W(f)W(g)=W(g)W(f)e^{-2\mathrm{i}\cdot\mathrm{Im}\langle f,g\rangle}=W(f+g)e^{-\mathrm{i}\cdot\mathrm{Im}\langle f,g\rangle}.
		\]
		\item The Weyl operator is unitary and satisfies that 
		\[
		W(f)^{*}=W(f)^{-1}=W(-f).
		\]
		\item The coherent states are eigenvectors of annihilation operators, i.e.,
		\[
		a_{x}\psi(f)=f(x)\psi(f)\quad\Rightarrow\quad a(g)\psi(f)=\langle g,f\rangle_{L^{2}}\psi(f).
		\]
		The commutation relation between the Weyl operator and the annihilation
		operator (or the creation operator) is thus 
		\[
		W^{*}(f)a_{x}W(f)=a_{x}+f(x)\quad\text{and}\quad W^{*}(f)a_{x}^{*}W(f)=a_{x}^{*}+\overline{f(x)}.
		\]
		\item The distribution of $\mathcal{N}$ with respect to the coherent state
		$\psi\left(f\right)$ is Poisson. In particular, 
		\[
		\langle\psi(f),\mathcal{N}\psi(f)\rangle=\|f\|^{2},\qquad\langle\psi(f),\mathcal{N}^{2}\psi(f)\rangle-\langle\psi(f),\mathcal{N}\psi(f)\rangle^{2}=\|f\|^{2}.
		\]
	\end{enumerate}
\end{lem}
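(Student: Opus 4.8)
The plan is to verify the four assertions in turn; each follows directly from the canonical commutation relations together with elementary manipulations of operator exponentials, so the argument is essentially bookkeeping. Throughout I would work on the dense subspace of vectors with only finitely many nonzero components, on which $a^{*}(f)$ and $a(f)$ act as analytic vectors and all the exponential series below converge. I would begin with assertion (2): writing $A(f):=a^{*}(f)-a(f)$, the relation $a^{*}(f)^{*}=a(f)$ on the finite-particle domain, together with the linearity of $f\mapsto a^{*}(f)$ and the antilinearity of $f\mapsto a(f)$, gives $A(f)^{*}=a(f)-a^{*}(f)=-A(f)=A(-f)$; hence $A(f)$ is anti-self-adjoint, $W(f)=e^{A(f)}$ is unitary, and $W(f)^{*}=e^{-A(f)}=e^{A(-f)}=W(-f)=W(f)^{-1}$.

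For assertion (1), I would compute the commutator of the generators. Using $[a(f),a^{*}(g)]=\langle f,g\rangle$ and $[a^{*}(f),a^{*}(g)]=[a(f),a(g)]=0$ one finds
\[
[A(f),A(g)]=-[a^{*}(f),a(g)]-[a(f),a^{*}(g)]=\overline{\langle f,g\rangle}-\langle f,g\rangle=-2\mathrm{i}\,\mathrm{Im}\langle f,g\rangle,
\]
which is a scalar multiple of the identity, hence central. Because the commutator is central, the Baker--Campbell--Hausdorff identity holds in the exact form $e^{A(f)}e^{A(g)}=e^{A(f)+A(g)}e^{\frac{1}{2}[A(f),A(g)]}$, and since $A(f)+A(g)=A(f+g)$ this yields $W(f)W(g)=W(f+g)\,e^{-\mathrm{i}\,\mathrm{Im}\langle f,g\rangle}$. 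Interchanging $f$ and $g$ and using $\mathrm{Im}\langle g,f\rangle=-\mathrm{Im}\langle f,g\rangle$ then gives $W(f)W(g)=W(g)W(f)\,e^{-2\mathrm{i}\,\mathrm{Im}\langle f,g\rangle}$.

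Assertion (3) I would obtain from the Hadamard (adjoint-action) expansion. Since $[a_{x},A(f)]=[a_{x},a^{*}(f)]=f(x)$ is again a scalar, every iterated commutator beyond the first vanishes, so $W(f)^{*}a_{x}W(f)=e^{-A(f)}a_{x}e^{A(f)}=a_{x}+[a_{x},A(f)]=a_{x}+f(x)$, and taking adjoints gives $W(f)^{*}a_{x}^{*}W(f)=a_{x}^{*}+\overline{f(x)}$. The eigenvector property then follows from $a_{x}\Omega=0$: indeed $a_{x}\psi(f)=a_{x}W(f)\Omega=W(f)\bigl(W(f)^{*}a_{x}W(f)\bigr)\Omega=W(f)\bigl(a_{x}+f(x)\bigr)\Omega=f(x)\psi(f)$, and smearing against $\overline{g}$ gives $a(g)\psi(f)=\langle g,f\rangle\psi(f)$.

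Finally, for assertion (4) I would use (3) and $\mathcal{N}=\int\mathrm{d}x\,a_{x}^{*}a_{x}$ to get $\langle\psi(f),\mathcal{N}\psi(f)\rangle=\int\mathrm{d}x\,\|a_{x}\psi(f)\|^{2}=\int\mathrm{d}x\,|f(x)|^{2}=\|f\|^{2}$, using $\|\psi(f)\|=1$; normal-ordering via $a_{x}a_{y}^{*}=a_{y}^{*}a_{x}+\delta(x-y)$ gives $\mathcal{N}^{2}=\int\mathrm{d}x\,\mathrm{d}y\,a_{x}^{*}a_{y}^{*}a_{x}a_{y}+\mathcal{N}$, whence $\langle\psi(f),\mathcal{N}^{2}\psi(f)\rangle=\|f\|^{4}+\|f\|^{2}$ and the variance equals $\|f\|^{2}$; equivalently, \eqref{Weyl_f} shows $\|P_{n}\psi(f)\|^{2}=e^{-\|f\|^{2}}\|f\|^{2n}/n!$, the Poisson law with parameter $\|f\|^{2}$, from which both identities are read off. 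The whole lemma is standard, and the only subtlety is justifying the operator-exponential identities in (1)--(3) for the unbounded operators $a^{*}(f),a(f)$; this is the mild obstacle, and it is handled precisely because all the commutators that appear are scalars, so one may argue on the finite-particle (analytic-vector) domain and conclude by density and unitarity.
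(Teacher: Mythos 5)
Your proof is correct, and it is exactly the ``elementary calculation from the definition of the Weyl operator'' that the paper invokes when it omits the proof: skew-adjointness of $a^{*}(f)-a(f)$ for (2), the central commutator $[A(f),A(g)]=-2\mathrm{i}\,\mathrm{Im}\langle f,g\rangle$ plus Baker--Campbell--Hausdorff for (1), the Hadamard expansion terminating at the scalar $f(x)$ together with $a_{x}\Omega=0$ for (3), and normal ordering (or the explicit Poisson weights $e^{-\|f\|^{2}}\|f\|^{2n}/n!$) for (4). Your remark that all exponential identities are justified on the finite-particle (analytic-vector) domain because the relevant commutators are scalars is the right way to handle the only subtlety, so nothing further is needed.
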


We omit the proof of the lemma, since it can be derived from the definition
of the Weyl operator and elementary calculation.

For 
\begin{equation}
d_{N}:=\frac{\sqrt{N!}}{N^{N/2}e^{-N/2}},\label{eq:d_N}
\end{equation}
we note that $C^{-1}N^{1/4}\leq d_{N}\leq CN^{1/4}$ for some constant
$C>0$ independent of $N$, which can be easily checked by using Stirling's
formula. Then we have the following lemmas.
\begin{lem}
	\label{lem:coherent_all} There exists a constant $C>0$ independent
	of $N$ such that, for any $f\in L^{2}(\mathbb{R}^{3})$ with $\|f\|_{L^{2}(\mathbb{R}^{3})}=1$,
	we have 
	\[
	\left\Vert (\mathcal{N}+1)^{-1/2}W^{*}(\sqrt{N}f)\frac{(a^{*}(f))^{N}}{\sqrt{N!}}\Omega\right\Vert \leq\frac{C(t)}{d_{N}}.
	\]
\end{lem}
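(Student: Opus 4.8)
The plan is to reproduce the coherent–state estimate of Rodnianski and Schlein (see \cite{Rodnianski2009}, and also \cite{ChenLeeSchlein2011}), which reduces the statement to an explicit one–mode computation. Put $\psi_{N}:=W^{*}(\sqrt{N}f)\frac{(a^{*}(f))^{N}}{\sqrt{N!}}\Omega$; since $\frac{(a^{*}(f))^{N}}{\sqrt{N!}}\Omega=f^{\otimes N}$ and $W^{*}(\sqrt{N}f)$ is unitary, $\|\psi_{N}\|=1$, so the asserted bound is equivalent to $\langle\psi_{N},(\mathcal{N}+1)^{-1}\psi_{N}\rangle\le C\,d_{N}^{-2}$ with $C$ independent of $N$ (and, despite the notation in the statement, of $t$). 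The key structural observation is that $\psi_{N}$ lies in the closed subspace $\mathcal{K}_{f}:=\overline{\operatorname{span}}\{f^{\otimes n}:n\ge 0\}$, because the Weyl operator $W(\sqrt{N}f)$ preserves $\mathcal{K}_{f}$ (on it it acts as a one–mode displacement) and $a^{*}(f)$ maps $f^{\otimes n}$ into a multiple of $f^{\otimes(n+1)}$. On $\mathcal{K}_{f}$ the number operator $\mathcal{N}$ is diagonal in the orthonormal basis $\{f^{\otimes n}\}$, so with $d_{n}:=\langle f^{\otimes n},\psi_{N}\rangle$ it suffices to prove $\sum_{n\ge 0}\frac{|d_{n}|^{2}}{n+1}\le C\,d_{N}^{-2}$.

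To compute $d_{n}$ I would push $W^{*}(\sqrt{N}f)$ through the creation operators by $W^{*}(\sqrt{N}f)\,a^{*}(f)\,W(\sqrt{N}f)=a^{*}(f)+\sqrt{N}$, so that $\psi_{N}=\frac{1}{\sqrt{N!}}\big(a^{*}(f)+\sqrt{N}\big)^{N}W^{*}(\sqrt{N}f)\Omega$, then expand $(a^{*}(f)+\sqrt{N})^{N}$ binomially and $W^{*}(\sqrt{N}f)\Omega=e^{-N/2}\sum_{m\ge 0}\frac{(-\sqrt{N})^{m}}{\sqrt{m!}}f^{\otimes m}$ in the number basis. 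Collecting the coefficient of $f^{\otimes n}$ and using $\frac{e^{-N/2}N^{N/2}}{\sqrt{N!}}=d_{N}^{-1}$ gives, up to a sign,
\[
d_{n}=\frac{1}{d_{N}}\,\frac{N^{n/2}}{\sqrt{n!}}\,Q_{n},\qquad Q_{n}:=\sum_{k=0}^{\min(n,N)}\binom{n}{k}(-1)^{k}\,\frac{N!}{(N-k)!\,N^{k}}.
\]
In particular $|d_{0}|^{2}=d_{N}^{-2}$, $|d_{1}|^{2}=0$, and $d_{N}^{2}|d_{n}|^{2}=\frac{N^{n}}{n!}Q_{n}^{2}=:e_{n}^{2}$; the quantity $Q_{n}$ is the Charlier polynomial $C_{n}(\cdot;N)$ evaluated at $N$, and it obeys the three–term recurrence $Q_{n+1}=\frac{n}{N}(Q_{n}-Q_{n-1})$, equivalently $e_{n+1}=\frac{n}{\sqrt{(n+1)N}}\,e_{n}-\sqrt{\tfrac{n}{n+1}}\,e_{n-1}$ with $e_{0}=1$, $e_{1}=0$.

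With this notation $\sum_{n}\frac{|d_{n}|^{2}}{n+1}=\frac{1}{d_{N}^{2}}\sum_{n}\frac{e_{n}^{2}}{n+1}$, so everything reduces to the uniform bound $\sum_{n}\frac{e_{n}^{2}}{n+1}\le C$. The content of this is a uniform estimate of the shape $e_{n}^{2}=\frac{N^{n}}{n!}Q_{n}^{2}\le C\,(n+1)^{-1/2}$, valid for all $n$ and $N$; granting it, $\sum_{n}\frac{e_{n}^{2}}{n+1}\le C\sum_{n}(n+1)^{-3/2}<\infty$, and then $C^{-1}N^{1/4}\le d_{N}\le CN^{1/4}$ (Stirling) finishes the lemma. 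The heuristic for the decay $e_{n}^{2}\sim n^{-1/2}$ is the unperturbed recursion $e_{n+1}=-\sqrt{n/(n+1)}\,e_{n-1}$, which yields exactly $e_{2\ell}^{2}=\binom{2\ell}{\ell}4^{-\ell}\sim(\pi\ell)^{-1/2}$ and $e_{2\ell+1}=0$ (consistent with the normalization $\sum_{n}e_{n}^{2}=d_{N}^{2}\asymp\sqrt{N}$, which the weight $(n+1)^{-1}$ then brings back down to $O(1)$). For $n$ bounded this perturbation argument is elementary, since $\frac{n}{\sqrt{(n+1)N}}=O(N^{-1/2})$.

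The main obstacle is making the bound $e_{n}^{2}\le C\,(n+1)^{-1/2}$ uniform also for $n$ comparable to $N$, where the "error" term in the recurrence is no longer small: there the naive estimate $\frac{N!}{(N-k)!\,N^{k}}\le 1$ destroys the cancellation $\sum_{k}\binom{n}{k}(-1)^{k}=0$ that produces the smallness of $Q_{n}$, and one genuinely needs uniform (Plancherel–Rotach type) asymptotics for the Charlier polynomials, retaining that cancellation. An equivalent and perhaps cleaner route, closer to \cite{Rodnianski2009}, is to use $(\mathcal{N}+1)^{-1}=\int_{0}^{1}s^{\mathcal{N}}\,ds$ together with the closed form
\[
\langle\psi_{N},s^{\mathcal{N}}\psi_{N}\rangle=s^{N}e^{-N(1-s)}L_{N}\!\big(-N(1-s)^{2}/s\big)
\]
($L_{N}$ the Laguerre polynomial), and to estimate the $s$–integral by splitting at $s=1-1/N$: on $[1-1/N,1]$ the argument of $L_{N}$ is $O(1/N)$ and the contribution is $O(1/N)$, while for $s$ bounded away from $1$ one must invoke the saddle–point asymptotics of $L_{N}$ at a negative argument of order $N$ — the exponential factors $s^{N}e^{-N(1-s)}$ and the $e^{2\sqrt{N^{2}(1-s)^{2}/s}}$ coming from $L_{N}$ cancel, leaving the integrable algebraic factor $\sim d_{N}^{-2}(1-s^{2})^{-1/2}$. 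Either way, this analytic estimate is the heart of the proof, and is exactly the coherent–state computation performed in \cite{Rodnianski2009}.
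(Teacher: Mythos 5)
Your reduction is sound and, as far as it goes, checks out: the observation that everything lives in the one-mode subspace spanned by $\{f^{\otimes n}\}$, the formula $d_{n}=d_{N}^{-1}\frac{N^{n/2}}{\sqrt{n!}}Q_{n}$ with $Q_{n}=\sum_{k}\binom{n}{k}(-1)^{k}\frac{N!}{(N-k)!N^{k}}$, the identification of $Q_{n}$ as the Charlier value $C_{n}(N;N)$ with the recurrence $Q_{n+1}=\frac{n}{N}(Q_{n}-Q_{n-1})$, and the endpoint checks $|d_{0}|^{2}=d_{N}^{-2}$, $d_{1}=0$ are all correct, as is your remark that the $C(t)$ in the statement should just be a constant independent of $t$. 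Note, for comparison, that the paper does not prove this lemma at all: its proof is the single line ``See \cite[Lemma 6.3]{ChenLee2011}'', so your write-up is already more explicit than the source it would replace.

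The genuine gap is that the one estimate carrying all the analytic content is asserted rather than proved, and you say so yourself. The bound $e_{n}^{2}=\frac{N^{n}}{n!}Q_{n}^{2}\leq C(n+1)^{-1/2}$, uniformly in $n$ and $N$, is trivial neither for $n$ of order $N$ (where, as you note, the crude bound $\frac{N!}{(N-k)!N^{k}}\leq1$ kills the alternating cancellation that makes $Q_{n}$ small) nor via the recurrence (whose ``perturbation'' $\frac{n}{\sqrt{(n+1)N}}$ is $O(1)$ precisely in that regime); invoking ``uniform Plancherel--Rotach asymptotics for Charlier polynomials'' names the missing ingredient but does not supply it. The same is true of the alternative route: the closed form $\langle\psi_{N},s^{\mathcal{N}}\psi_{N}\rangle=s^{N}e^{-N(1-s)}L_{N}(-N(1-s)^{2}/s)$ is plausible (it passes the checks at $s=0,1$), but the claim that the exponential growth of $L_{N}$ at a negative argument of size $N$ exactly cancels $s^{N}e^{-N(1-s)}$, leaving an integrable factor of size $d_{N}^{-2}(1-s^{2})^{-1/2}$, is again exactly the uniform saddle-point estimate that would have to be carried out, with control of the error terms uniformly as the argument degenerates near $s=1$. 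As it stands the proposal is a correct reduction plus a pointer to \cite{Rodnianski2009}/\cite{ChenLee2011} for the decisive step; either carry out one of the two asymptotic estimates in detail, or shorten the argument to a citation of \cite[Lemma 6.3]{ChenLee2011} as the paper does, since a halfway version leaves the heart of the lemma unproved.
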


\begin{proof}
	See \cite[Lemma 6.3]{ChenLee2011}. 
\end{proof}
\begin{lem}
	\label{lem:coherent_even_odd} Let $P_{m}$ be the projection onto
	the $m$-particle sector of the Fock space $\mathcal{F}$ for a non-negative
	integer $m$. Then, for any non-negative integers $k\leq(1/2)N^{1/3}$
	and for any $f\in L^{2}(\mathbb{R}^{3})$ with $\|f\|_{L^{2}(\mathbb{R}^{3})}=1$,
	\[
	\left\Vert P_{2k}W^{*}(\sqrt{N}f)\frac{(a^{*}(f))^{N}}{\sqrt{N!}}\Omega\right\Vert \leq\frac{2}{d_{N}}
	\]
	and 
	\[
	\left\Vert P_{2k+1}W^{*}(\sqrt{N}f)\frac{(a^{*}(f))^{N}}{\sqrt{N!}}\Omega\right\Vert \leq\frac{2(k+1)^{3/2}}{d_{N}\sqrt{N}}.
	\]
\end{lem}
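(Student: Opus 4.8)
The plan is to reduce the statement to an explicit one‑mode computation and then estimate a single Taylor coefficient. Set $\xi := W^{*}(\sqrt{N}f)\frac{(a^{*}(f))^{N}}{\sqrt{N!}}\Omega$. Since $W(\sqrt{N}f)$ involves only the mode $f$, the vector $\xi$ lies in the closed span of the orthonormal family $\{\,|m\rangle := (a^{*}(f))^{m}\Omega/\sqrt{m!}\,\}_{m\geq0}$ (each $|m\rangle$ sits in the $m$-particle sector and has norm $\|f\|^{m}=1$), so if $\xi=\sum_{m\geq0}c_{m}|m\rangle$ then $\|P_{m}\xi\|=|c_{m}|$. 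To compute the $c_{m}$ I would apply the Hadamard form $W^{*}(\sqrt{N}f)=e^{-N/2}e^{-\sqrt{N}a^{*}(f)}e^{\sqrt{N}a(f)}$ to the generating vector $e^{\mu a^{*}(f)}\Omega$, using $e^{\sqrt{N}a(f)}e^{\mu a^{*}(f)}=e^{\sqrt{N}\mu}e^{\mu a^{*}(f)}e^{\sqrt{N}a(f)}$ (a consequence of $[a(f),a^{*}(f)]=1$) together with $\frac{(a^{*}(f))^{N}}{\sqrt{N!}}\Omega=\sqrt{N!}\,[\mu^{N}]e^{\mu a^{*}(f)}\Omega$. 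Extracting coefficients gives
\[
c_{m}=\frac{N^{m/2}}{d_{N}\sqrt{m!}}\,S_{m}(N),\qquad \sum_{m\geq0}\frac{S_{m}(N)}{m!}\,t^{m}=e^{-t}\Bigl(1+\frac{t}{N}\Bigr)^{N}=:F(t),
\]
with $d_{N}$ as in \eqref{eq:d_N} (the generating-function identity follows from $\sum_{j}\binom{N}{j}(t/N)^{j}=(1+t/N)^{N}$). Thus $S_{m}(N)=m!\,[t^{m}]F(t)$, and the lemma becomes a coefficient estimate for $F$.

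The analytic core is a comparison of $F$ with a Gaussian. Writing $F(t)=\exp\!\bigl(N[\log(1+t/N)-t/N]\bigr)=e^{-t^{2}/(2N)}G(t)$ with $G(t)=\exp\!\bigl(\sum_{i\geq3}\tfrac{(-1)^{i+1}t^{i}}{iN^{i-1}}\bigr)$ analytic on $\{|t|<N\}$, one has $G(0)=1$, $[t^{1}]G=[t^{2}]G=0$, $[t^{3}]G=\tfrac{1}{3N^{2}}$, $[t^{4}]G=-\tfrac{1}{4N^{3}}$, $[t^{5}]G=\tfrac{1}{5N^{4}}$, and Cauchy's inequality on $|t|=R$ gives $\bigl|[t^{j}]G\bigr|\leq R^{-j}\exp\!\bigl(\tfrac{R^{3}}{3N^{2}(1-R/N)}\bigr)$ for $0<R<N$. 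Since $[t^{2j}]e^{-t^{2}/(2N)}=\tfrac{(-1)^{j}}{j!(2N)^{j}}$, convolving yields, for $m=2k$,
\[
\bigl|[t^{2k}]F\bigr|\leq\frac{1}{k!(2N)^{k}}\sum_{p=0}^{k}\frac{k!\,(2N)^{p}}{(k-p)!}\bigl|[t^{2p}]G\bigr| ,
\]
and likewise for $m=2k+1$ with $[t^{2p+1}]G$ in place of $[t^{2p}]G$.

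For the even bound, the $p=0$ term contributes $\tfrac{1}{k!(2N)^{k}}$, the $p=1,2$ terms vanish or are $O(N^{-1/3})$ by the explicit coefficients above, and choosing $R$ of order $(pN^{2})^{1/3}$ (admissible because $p\leq k\leq\tfrac12 N^{1/3}$) makes $\sum_{p\geq3}$ a small absolute constant; one concludes $|c_{2k}|\leq\tfrac{1}{d_{N}}\sqrt{\binom{2k}{k}4^{-k}}\cdot(\text{const})\leq\tfrac{2}{d_{N}}$ via $\binom{2k}{k}\leq4^{k}$ (the case $k=0$ being trivial since $c_{0}=1/d_{N}$). For the odd bound, the lowest-order contribution comes from the pair $[t^{3}]G\cdot[t^{2k-2}]e^{-t^{2}/(2N)}$, giving $|c_{2k+1}|\leq\tfrac{C}{d_{N}\sqrt{N}}\cdot\tfrac{k\sqrt{(2k+1)!}}{2^{k}k!}$; then $\tfrac{\sqrt{(2k)!}}{2^{k}k!}=\sqrt{\binom{2k}{k}4^{-k}}\leq1$ and $k\sqrt{2k+1}\leq\sqrt{3}\,k^{3/2}$ turn this into $\leq\tfrac{2(k+1)^{3/2}}{d_{N}\sqrt{N}}$ after bookkeeping of the absolute constants (with $k=0$ trivial since $c_{1}=0$).

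The hard part is exactly this coefficient estimate. Because $F$ agrees with $e^{-t^{2}/(2N)}$ through order $t^{2}$, the naive leading term of $[t^{m}]F$ cancels for odd $m$, forcing one to the next order; and a crude Cauchy bound on a single circle overestimates $|[t^{2k}]F|$ by a factor of order $k^{1/2}$, which would yield only $|c_{2k}|\leq Ck^{1/4}/d_{N}$ instead of the uniform $2/d_{N}$. It is the Gaussian factorization $F=e^{-t^{2}/(2N)}G$, together with the explicit low-order coefficients of $G$ in the range where Cauchy is too lossy, that produces the sharp constants; the even/odd split in the statement simply reflects the parity of the Gaussian factor.
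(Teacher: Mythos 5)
Your proposal is correct in outline, but note that the paper itself contains no argument for this lemma: its ``proof'' is the citation \cite[Lemma 7.2]{Lee2013} (the estimate originates in the coherent-state expansions of \cite{ChenLee2011,ChenLeeSchlein2011}). Your reduction coincides with the one in that reference: since only the single mode $f$ is involved, one expands $W^{*}(\sqrt{N}f)(a^{*}(f))^{N}\Omega/\sqrt{N!}$ in the number basis and must bound the coefficients $c_{m}$, and your formula $c_{m}=\frac{N^{m/2}}{d_{N}\sqrt{m!}}\,m!\,[t^{m}]\bigl(e^{-t}(1+t/N)^{N}\bigr)$ is equivalent (via $d_{N}^{2}=N!\,e^{N}N^{-N}$) to the explicit alternating sum $\sum_{j}\binom{m}{j}(-1)^{m-j}\frac{N!}{(N-j)!\,N^{j}}$ used there. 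Where you genuinely diverge is in estimating these coefficients: the cited proof bounds the explicit sums directly by combinatorial/inductive means, whereas you factor the generating function as $e^{-t^{2}/(2N)}G(t)$ and apply Cauchy estimates to $G$; this is self-contained and makes the even/odd asymmetry transparent (the odd leading term is forced through $[t^{3}]G=\tfrac{1}{3N^{2}}$, which is exactly where the extra $(k+1)^{3/2}/\sqrt{N}$ comes from), and the leading constants do close, since the even leading term is $\sqrt{\binom{2k}{k}4^{-k}}\leq1$ and the odd one is at most $\tfrac{2}{\sqrt{3}}\,k^{3/2}/\sqrt{N}$. The one place your write-up stops short of a proof is the constant bookkeeping: the assertion that $\sum_{p\geq3}$ is a small absolute constant, and the control of the subleading odd terms, should be carried out uniformly in $N$ under $k\leq\tfrac12N^{1/3}$ (e.g.\ taking $R=(2pN^{2})^{1/3}$ gives a $p$-th term bounded by roughly $\bigl(e^{2/3}(2p)^{-2/3}\bigr)^{p}$, hence a tail of order $0.3$, with the finitely many small-$N$, small-$k$ cases checked separately); with that bookkeeping your bounds $2/d_{N}$ and $2(k+1)^{3/2}/(d_{N}\sqrt{N})$ follow, so your argument is a valid alternative proof rather than a reproduction of the one the paper cites.
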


\begin{proof}
	See \cite[Lemma 7.2]{Lee2013}.
\end{proof}

\vspace{2em}

\section{Properties of the solution of quintic Hartree equation\label{sec:properties_quntic_Hartree}}

In this section, our goal is to bound
\begin{equation}
\int\mathrm{d}x\mathrm{d}y\mathrm{d}z\,|V(x-y,x-z)|^{2}|\varphi_{t}(x)|^{2}|\varphi_{t}(y)|^{2}|\varphi_{t}(z)|^{2}\label{eq:V2ph6}
\end{equation}
which will appear in the proves given in Section \ref{sec:comparison}.
Note that it is different from the potential energy because we have
a square for $V$.

The following lemma cannot be directly applied for our purpose. We
offer it, however, to provide an intuition to the readers.
\begin{lem}[generalized Young's inequality]
	\label{lem:gYoung}Let $p_{j}>1$ for each $j=1,\dots,5$ with
	\[
	\sum_{j=1}^{5}\frac{1}{p_{j}}=3.
	\]
	Let $f_{j}\in L^{p_{j}}(\mathbb{R}^{n})$ for each $j=1,\dots,5$.
	Then there exists a constant $C(n,\{p_{j}\}_{j=1}^{5})$, independent
	of $f_{j}$, such that
	\begin{align*}
	&\left|\int_{\mathbb{R}^{n}}\int_{\mathbb{R}^{n}}\int_{\mathbb{R}^{n}}\mathrm{d}x\mathrm{d}y\mathrm{d}z\,f_{1}(x)f_{2}(y)f_{3}(z)f_{4}(x-y)f_{3}(x-z)\right|\\
	&\leq C(n,\{p_{j}\}_{j=1}^{5})\prod_{j=1}^{5}\|f_{j}\|_{p_{j}}.\label{eq:gYineq}
	\end{align*}
\end{lem}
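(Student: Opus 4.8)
\textbf{Plan for the proof of the generalized Young's inequality (Lemma~\ref{lem:gYoung}).}

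The plan is to reduce the five-factor, triple-integral estimate to a sequence of ordinary (two-factor) Young and Hölder inequalities, handled one integration variable at a time. First I would observe that the integral factorizes nicely if we integrate in $z$ first: holding $x$ fixed, the $z$-integral is $\int_{\mathbb{R}^n}\mathrm{d}z\, f_3(z) f_5(x-z) = (f_3 * \widetilde f_5)(x)$ (with $\widetilde f_5(w)=f_5(-w)$), which by the classical Young inequality lies in $L^{r}(\mathbb{R}^n)$ with $\tfrac1r = \tfrac1{p_3}+\tfrac1{p_5}-1$, provided $\tfrac1{p_3}+\tfrac1{p_5}\ge 1$; its norm is bounded by $\|f_3\|_{p_3}\|f_5\|_{p_5}$. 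What remains is
\[
\int_{\mathbb{R}^n}\int_{\mathbb{R}^n}\mathrm{d}x\,\mathrm{d}y\; f_1(x)\, g(x)\, f_2(y)\, f_4(x-y),
\]
where $g = f_3 * \widetilde f_5 \in L^r$. Now integrate in $y$: the $y$-integral is $(f_2 * \widetilde f_4)(x) \in L^s$ with $\tfrac1s = \tfrac1{p_2}+\tfrac1{p_4}-1$, again by Young, with norm $\le \|f_2\|_{p_2}\|f_4\|_{p_4}$. Finally the $x$-integral $\int f_1(x)\,g(x)\,(f_2*\widetilde f_4)(x)\,\mathrm{d}x$ is estimated by the three-factor Hölder inequality with exponents $p_1, r, s$, which is admissible precisely when $\tfrac1{p_1}+\tfrac1r+\tfrac1s = 1$. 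Substituting the definitions of $r$ and $s$ gives $\tfrac1{p_1}+\tfrac1{p_2}+\tfrac1{p_3}+\tfrac1{p_4}+\tfrac1{p_5} - 2 = 1$, i.e.\ $\sum_{j=1}^5 \tfrac1{p_j}=3$, which is exactly the stated hypothesis. Combining the three bounds yields the constant $C(n,\{p_j\})$ times $\prod_{j=1}^5\|f_j\|_{p_j}$.

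The step I expect to require the most care is checking that all the intermediate exponents are legitimate — that is, that $r$ and $s$ defined above satisfy $1 \le r,s \le \infty$ (equivalently $\tfrac1{p_2}+\tfrac1{p_4}\ge 1$ and $\tfrac1{p_3}+\tfrac1{p_5}\ge 1$), and that the final Hölder triple $(p_1,r,s)$ consists of exponents in $[1,\infty]$. These side conditions are not automatic from $\sum 1/p_j = 3$ alone for an \emph{arbitrary} grouping of the five functions; however, since the roles of the $p_j$ in the statement are symmetric enough and the constant is allowed to depend on the $p_j$, one can always relabel so that a valid grouping exists (e.g.\ pair the two largest exponents together), or invoke the Hardy–Littlewood–Sobolev / multilinear interpolation machinery to cover the endpoint and borderline cases uniformly. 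An alternative, cleaner route that sidesteps this bookkeeping is to prove the inequality by real interpolation: it holds trivially at the ``corners'' where each $1/p_j \in \{0, \text{value forced by the others}\}$ via iterated Young, and the general case follows by multilinear Riesz–Thorin interpolation in the five exponents subject to the single linear constraint $\sum 1/p_j = 3$.

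As the excerpt already notes, this lemma is not directly what is needed — $V$ contains squared Coulomb factors $|x-y|^{-2}|x-z|^{-2}$, which are not in any single $L^p(\mathbb{R}^n)$ globally — so in the actual application one would further split each singular factor into its local part (near the origin, in $L^p$ for $p<n/2$ after the squaring, here $n=3$ so $p<3/2$) and its bounded tail, and apply the lemma (or the iterated-Young argument above) to the local pieces while absorbing the tails into $L^\infty$. That refinement, together with the Sobolev embedding $H^1(\mathbb{R}^3)\hookrightarrow L^6(\mathbb{R}^3)$ to handle the $|\varphi_t|^2$ factors, is what ultimately controls the quantity \eqref{eq:V2ph6}; but for the present lemma the iterated Young $+$ Hölder scheme described above is the whole proof.
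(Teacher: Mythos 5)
Your proposal is correct and follows essentially the same route as the paper's own proof: integrate out $y$ and $z$ to reduce $I$ to $\int f_{1}\,(f_{2}*f_{4})\,(f_{3}*f_{5})\,\mathrm{d}x$, then apply the three-factor H\"older inequality with exponents $p_{1},q,r$ and Young's convolution inequality to each convolution, the exponent bookkeeping matching the constraint $\sum_{j}1/p_{j}=3$. Your additional remarks on verifying that the intermediate exponents $r,s$ lie in $[1,\infty]$ are a reasonable refinement that the paper's proof leaves implicit, but the argument itself is the same.
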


\begin{proof}
	Let 
	\[
	I=\int_{\mathbb{R}^{n}}\int_{\mathbb{R}^{n}}\int_{\mathbb{R}^{n}}\mathrm{d}x\mathrm{d}y\mathrm{d}z\,f_{1}(x)f_{2}(y)f_{3}(z)f_{4}(x-y)f_{3}(x-z)
	\]
	Then, we integrate $y$ and $z$ first so that we have
	\[
	I=\int_{\mathbb{R}^{n}}\mathrm{d}x\,f_{1}(x)\left(f_{2}*f_{4}\right)(x)\left(f_{3}*f_{5}\right)(x).
	\]
	Using Hölder inequality,
	\begin{equation}
	|I|\leq\|f_{1}\|_{p_{1}}\|f_{2}*f_{4}\|_{q}\|f_{3}*f_{5}\|_{r}\label{eq:Holder-for-gYineq}
	\end{equation}
	where
	\begin{equation}
	\frac{1}{p_{1}}+\frac{1}{q}+\frac{1}{r}=1.\label{eq:Holder-expo-for-gYineq}
	\end{equation}
	By Young's convolutional inequality,
	\begin{equation}
	\|f_{2}*f_{4}\|_{q}\leq\|f_{2}\|_{p_{2}}\|f_{4}\|_{p_{4}}\qquad\text{and}\qquad\|f_{3}*f_{5}\|_{r}\leq\|f_{3}\|_{p_{3}}\|f_{5}\|_{p_{5}}\label{eq:Y-for-gYineq}
	\end{equation}
	with 
	\begin{equation}
	\frac{1}{q}+1=\frac{1}{p_{2}}+\frac{1}{p_{4}}\qquad\text{and}\qquad\frac{1}{r}+1=\frac{1}{p_{3}}+\frac{1}{p_{5}}.\label{eq:Yconv-expo-for-gYineq}
	\end{equation}
	Then combining \eqref{eq:Holder-for-gYineq} and \eqref{eq:Y-for-gYineq}
	together with \eqref{eq:Holder-expo-for-gYineq} and \eqref{eq:Yconv-expo-for-gYineq},
	we 
	\[
	\left|I\right|\leq C(n,\{p_{j}\}_{j=1}^{5})\prod_{j=1}^{5}\|f_{j}\|_{p_{j}}
	\]
	with
	\[
	\sum_{j=1}^{5}\frac{1}{p_{j}}=3.
	\]
\end{proof}
If we allow $f_{4}\in L_{w}^{p_{4}}$ and $f_{5}\in L_{w}^{p_{5}}$
instead of $L^{p_{j}}$, we can utilize such lemma for our Coulomb
singularity. According to Lieb and Loss in \cite{lieb2001analysis},
Hardy-Littlewood-Sobolev inequality can be understood as a weak Young's
inequality. Hence, by we provide the following lemma, which generalize
Hardy-Littlewood-Sobolev inequality.
\begin{lem}
	\label{lem:gHLSineq}Let $p_{1},p_{2},p_{3}>1$ and $0<\lambda_{1},\lambda_{2}<n$
	with
	\[
	\frac{1}{p_{1}}+\frac{1}{p_{2}}+\frac{1}{p_{3}}+\frac{\lambda_{1}+\lambda_{2}}{n}=3.
	\]
	Let $f_{j}\in L^{p_{j}}(\mathbb{R}^{n})$. Then there exists a constant
	$C(n,\lambda,p_{1},p_{2},p_{3})$, independent of $f_{j}$, such that
	\begin{align*}
	&\left|\int_{\mathbb{R}^{n}}\int_{\mathbb{R}^{n}}\int_{\mathbb{R}^{n}}\mathrm{d}x\mathrm{d}y\mathrm{d}z\,\frac{1}{|x-y|^{\lambda_{1}}}\frac{1}{|x-z|^{\lambda_{2}}}f_{1}(x)f_{2}(y)f_{3}(z)\right|\\
	&\leq C(n,\lambda,p_{1},p_{2},p_{3})\prod_{j=1}^{3}\|f_{j}\|_{p_{j}}.\label{eq:gHLSineq}
	\end{align*}
\end{lem}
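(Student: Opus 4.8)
The plan is to follow the proof of Lemma~\ref{lem:gYoung} essentially verbatim in spirit, replacing the two invocations of Young's convolution inequality by the Hardy--Littlewood--Sobolev inequality (equivalently, the weak Young inequality), the point being that $|x|^{-\lambda}\in L^{n/\lambda}_{w}(\mathbb{R}^{n})$ for $0<\lambda<n$. Writing $g_{\lambda}(x):=|x|^{-\lambda}$ and integrating out the variables $y$ and $z$ first, the integral becomes
\[
I=\int_{\mathbb{R}^{n}}\mathrm{d}x\,f_{1}(x)\,(g_{\lambda_{1}}*f_{2})(x)\,(g_{\lambda_{2}}*f_{3})(x).
\]
I would then apply H\"older's inequality with three exponents $p_{1},q,r$ satisfying $1/p_{1}+1/q+1/r=1$, choosing $q,r$ through $1/q=1/p_{2}+\lambda_{1}/n-1$ and $1/r=1/p_{3}+\lambda_{2}/n-1$; the scaling hypothesis $\sum_{j}1/p_{j}+(\lambda_{1}+\lambda_{2})/n=3$ is precisely the assertion that these three numbers are H\"older conjugate, so this choice is forced and consistent.

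Next I would estimate each convolution factor by the Hardy--Littlewood--Sobolev inequality in the form $\|g_{\lambda}*f\|_{q}\le C(n,\lambda,p)\|f\|_{p}$, valid for $1<p,q<\infty$ with $1/q+1=1/p+\lambda/n$, which follows from the weak Young inequality (see Lieb and Loss \cite{lieb2001analysis}) together with $g_{\lambda}\in L^{n/\lambda}_{w}$. Applying this with $(p,q)=(p_{2},q)$ and $(p_{3},r)$, and using that $p_{2},p_{3}>1$ by hypothesis, yields $|I|\le C\|f_{1}\|_{p_{1}}\|f_{2}\|_{p_{2}}\|f_{3}\|_{p_{3}}$, which is the claim. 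Alternatively one can dispense with the convolution estimates and instead apply the classical bilinear HLS inequality twice --- first integrating $z$, then $y$ --- with a single intervening H\"older step; the exponent bookkeeping turns out identically.

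The one genuinely delicate point, which I expect to be the main obstacle, is that the auxiliary exponents $q,r$ need not lie in the open interval $(1,\infty)$ for every admissible tuple $(p_{1},p_{2},p_{3},\lambda_{1},\lambda_{2})$ obeying the scaling identity: one always has $q,r>1$ (indeed $1/p_{2}+\lambda_{1}/n<2$ and $1/p_{3}+\lambda_{2}/n<2$), but it can happen that $1/q\le0$ or $1/r\le0$, in which case $g_{\lambda_{1}}*f_{2}$ is \emph{too regular} to belong to a single Lebesgue space and the weak Young inequality does not apply directly. In that degenerate regime I would split the kernel as $g_{\lambda}=g_{\lambda}\mathbf{1}_{\{|x|\le1\}}+g_{\lambda}\mathbf{1}_{\{|x|>1\}}$, the first summand lying in $L^{a}$ for all $a<n/\lambda$ and the second in $L^{b}$ for all $b>n/\lambda$, and then distribute these pieces against $f_{1}$ and the surviving convolution factor using ordinary Young's and H\"older's inequalities; the scaling identity again guarantees that the exponents can be balanced. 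Beyond this case distinction the argument is routine.
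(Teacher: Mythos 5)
Your main argument is essentially the paper's proof: after integrating out $y$ and $z$ one applies H\"older with exponents $p_{1},q,r$ and then the Hardy--Littlewood--Sobolev inequality in convolution form $\||\cdot|^{-\lambda}*f\|_{q}\le C\|f\|_{p}$, $1/q+1=1/p+\lambda/n$, once for each kernel; the paper's (very terse) proof does exactly this, citing \cite[Theorem~4.3]{lieb2001analysis} and ``applying HLS twice'' with $g=|\cdot|^{-\lambda_{1}}*f_{2}$. In the regime $1/p_{2}+\lambda_{1}/n>1$ and $1/p_{3}+\lambda_{2}/n>1$, so that $1/q,1/r\in(0,1)$, your exponent bookkeeping is correct, and this regime covers the only place the lemma is used (Lemma~\ref{lem:H1bdd}, where $\lambda_{1}=\lambda_{2}=1$, $n=3$).

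The flaw is in your final paragraph. You rightly observe that the stated hypotheses do not force $1/q,1/r>0$, but the proposed repair by splitting $g_{\lambda}=g_{\lambda}\mathbf{1}_{\{|x|\le1\}}+g_{\lambda}\mathbf{1}_{\{|x|>1\}}$ cannot close this case: the far piece lies in $L^{b}$ only for $b>n/\lambda_{1}$, while Young's inequality against $f_{2}\in L^{p_{2}}$ needs $b\le p_{2}'$, and $n/\lambda_{1}<p_{2}'$ is precisely the condition $1/p_{2}+\lambda_{1}/n>1$ you are trying to avoid; the scaling identity does not ``balance'' this. In fact no argument can succeed there, because in the degenerate regime the inequality as stated is false: if $1/p_{2}+\lambda_{1}/n\le1$, take $f_{2}(y)=|y|^{-n/p_{2}}(\log|y|)^{-\alpha}\mathbf{1}_{\{|y|>2\}}$ with $1/p_{2}<\alpha\le1$; then $f_{2}\in L^{p_{2}}$ but $(|\cdot|^{-\lambda_{1}}*f_{2})(x)=+\infty$ for every $x$, so with nonnegative, nontrivial $f_{1},f_{3}$ the left-hand side is infinite while the right-hand side is finite. (In the bilinear HLS the hypotheses $p,q>1$ automatically imply $1/p+\lambda/n>1$; in this trilinear version they do not, so the conditions $1/p_{2}+\lambda_{1}/n>1$ and $1/p_{3}+\lambda_{2}/n>1$ must be added as hypotheses --- as they implicitly are in the paper's application --- rather than proved away.) So: keep your main argument, state the extra exponent conditions explicitly, and delete the kernel-splitting patch.
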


\begin{proof}
	This proof generalize the proof of Theorem 4.3 in \cite[pp.108-110]{lieb2001analysis}.
	The lemma is followed by applying twice the normal Hardy-Littlewood-Sobolev-inequality
	(in the first step, call $g=|.|^{-\lambda}*f_{2}$).
\end{proof}
Using this lemma, we can prove the boundedness of $H^{1}(\mathbb{R}^{3})$-norm
as follows.
\begin{lem}[Boundedness of $H^{1}(\mathbb{R}^{3})$-norm]
	\label{lem:H1bdd}Let $\varphi_{t}$ be the solution of quintic Hartree
	equation with initial data $\varphi_{0}$. If $\|\varphi_{0}\|_{H^{1}}<C$,
	then
	\[
	\|\varphi_{t}\|_{H^{1}}\leq C.
	\]
\end{lem}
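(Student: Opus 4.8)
The plan is to exploit the conservation laws of the quintic Hartree equation together with the repulsive nature of $V$. Along the flow the mass $\|\varphi_t\|_{L^2}^2=\|\varphi_0\|_{L^2}^2$ and the energy
\[
E[\varphi]:=\|\nabla\varphi\|_{L^2(\mathbb{R}^3)}^2+\frac{1}{6}\int\mathrm{d}x\,\mathrm{d}y\,\mathrm{d}z\,V(x-y,x-z)|\varphi(x)|^2|\varphi(y)|^2|\varphi(z)|^2
\]
are conserved. Since $\lambda>0$ and $v=1/|\cdot|\ge 0$, each of the three products making up $V$ is non-negative, so $V\ge 0$ and the potential part of $E$ is non-negative. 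Hence, for the solution $\varphi_t$,
\[
\|\nabla\varphi_t\|_{L^2}^2\le E[\varphi_t]=E[\varphi_0],\qquad \|\varphi_t\|_{L^2}^2=\|\varphi_0\|_{L^2}^2,
\]
so that $\|\varphi_t\|_{H^1}^2\le \|\varphi_0\|_{L^2}^2+E[\varphi_0]$. It therefore suffices to bound $E[\varphi_0]$ by a constant depending only on $\|\varphi_0\|_{H^1}$.

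The finiteness and control of the potential part of $E[\varphi_0]$ is precisely what Lemma \ref{lem:gHLSineq} provides. After relabeling the integration variables, the three terms in $V$ give the same integral, so it is enough to bound
\[
\int\mathrm{d}x\,\mathrm{d}y\,\mathrm{d}z\,\frac{1}{|x-y|}\frac{1}{|x-z|}|\varphi(x)|^2|\varphi(y)|^2|\varphi(z)|^2 .
\]
Applying Lemma \ref{lem:gHLSineq} with $n=3$, $\lambda_1=\lambda_2=1$ and $f_1=f_2=f_3=|\varphi|^2$, the balance condition $\frac{1}{p_1}+\frac{1}{p_2}+\frac{1}{p_3}+\frac{\lambda_1+\lambda_2}{n}=3$ reads $\frac{1}{p_1}+\frac{1}{p_2}+\frac{1}{p_3}=\frac{7}{3}$, and the symmetric choice $p_1=p_2=p_3=\frac{9}{7}$ gives
\[
\int\frac{1}{|x-y|}\frac{1}{|x-z|}|\varphi(x)|^2|\varphi(y)|^2|\varphi(z)|^2\le C\bigl\||\varphi|^2\bigr\|_{L^{9/7}(\mathbb{R}^3)}^3=C\,\|\varphi\|_{L^{18/7}(\mathbb{R}^3)}^6 .
\]
Since $2\le \frac{18}{7}\le 6$, the Sobolev embedding $H^1(\mathbb{R}^3)\hookrightarrow L^{18/7}(\mathbb{R}^3)$ bounds this by $C\|\varphi\|_{H^1(\mathbb{R}^3)}^6$. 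Hence $E[\varphi_0]\le \|\nabla\varphi_0\|_{L^2}^2+C\|\varphi_0\|_{H^1}^6$, and combining with the previous paragraph,
\[
\|\varphi_t\|_{H^1(\mathbb{R}^3)}^2\le \|\varphi_0\|_{H^1(\mathbb{R}^3)}^2+C\|\varphi_0\|_{H^1(\mathbb{R}^3)}^6\le C\bigl(\|\varphi_0\|_{H^1(\mathbb{R}^3)}\bigr),
\]
uniformly in $t$, which is the claim.

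The step that needs care — the main obstacle — is the rigorous justification of the conservation of mass and energy, which presupposes local well-posedness of the quintic Hartree equation in $H^1(\mathbb{R}^3)$. I would obtain this by a contraction argument on the Duhamel formula in $C([-T,T];H^1)$ intersected with a suitable Strichartz space: the nonlinearity $\varphi\mapsto\frac12\bigl(\int V(x-y,x-z)|\varphi(y)|^2|\varphi(z)|^2\,\mathrm{d}y\,\mathrm{d}z\bigr)\varphi$, as well as the difference of two such terms, is estimated in $L^2$ and in $H^1$ by essentially the same generalized Hardy--Littlewood--Sobolev bound (using $v\in L^{3,\infty}(\mathbb{R}^3)$), showing it is locally Lipschitz on $H^1$. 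Since $V\ge 0$ the equation is defocusing, so no smallness is needed, and the local solution is continued to a global one by feeding back the a priori bound $\|\varphi_t\|_{H^1}\le C(\|\varphi_0\|_{H^1})$ just derived via the blow-up alternative. Alternatively, one may invoke global well-posedness from the literature, in which case only the two displays above — the energy estimate via Lemma \ref{lem:gHLSineq} together with Sobolev, and the resulting a priori bound — remain to be proved.
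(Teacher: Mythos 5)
Your proof is correct and follows essentially the paper's own argument: conservation of mass and energy combined with Lemma \ref{lem:gHLSineq} and Sobolev embedding/interpolation to control the interaction energy, with the simplification (which the paper itself records in the remark following the lemma) that $\lambda>0$ makes $V\ge 0$, so the kinetic energy is bounded directly by the conserved energy rather than by absorbing the potential term. As a side note, your symmetric choice $p_1=p_2=p_3=9/7$, giving $\bigl\||\varphi|^2\bigr\|_{L^{9/7}}^3=\|\varphi\|_{L^{18/7}}^6$, is the one actually consistent with the scaling condition of Lemma \ref{lem:gHLSineq}; the paper's $L^{9/8}$ (hence $L^{9/4}$ for $\varphi$) appears to be a slip, and your Sobolev-embedding step plays the same role as the paper's interpolation.
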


\begin{proof}
	Define the energy $\mathcal{E}(\varphi_{t})$ by
	\begin{align*}
	\mathcal{E}(\varphi_{t}) & :=\frac{1}{2}\int\mathrm{d}x\,|\nabla\varphi_{t}(x)|^{2}+\frac{\lambda}{6}\int\mathrm{d}x\mathrm{d}y\mathrm{d}z\,V(x-y,x-z)\varphi_{t}(x)|^{2}|\varphi_{t}(y)|^{2}|\varphi_{t}(z)|^{2}\\
	& \geq\frac{1}{2}\int\mathrm{d}x\,|\nabla\varphi_{t}(x)|^{2}-\\
	& \qquad+\frac{\lambda}{6}\int\mathrm{d}x\mathrm{d}y\mathrm{d}z\,\Big(v(x-y)v(x-z)+v(y-z)v(y-x)\\
	&\qquad\qquad\qquad+v(z-x)v(z-y)\Big)|\varphi_{t}(x)|^{2}|\varphi_{t}(y)|^{2}|\varphi_{t}(z)|^{2}.
	\end{align*}
	By Lemma \ref{lem:gHLSineq}, we obtain
	\[
	\int\mathrm{d}x\mathrm{d}y\mathrm{d}z\,\frac{1}{|x-y|}\frac{1}{|x-z|}|\varphi_{t}(x)|^{2}|\varphi_{t}(y)|^{2}|\varphi_{t}(z)|^{2}\leq C\||\varphi|^{2}\|_{L^{9/8}}^{3}.
	\]
	By Riesz-Thorin interpolation theorem, one get
	\[
	\||\varphi|^{2}\|_{L^{9/8}}^{3}=\|\varphi\|_{L^{9/4}}^{6}\leq\|\varphi\|_{L^{2}}^{5}\|\varphi\|_{L^{6}}\leq\|\varphi\|_{L^{2}}^{5}\|\varphi\|_{H^{1}}.
	\]
	Hence, for sufficiently small $\varepsilon>0$ so that 
	\[
	\frac{1}{2}\int\mathrm{d}x\,|\nabla\varphi_{t}(x)|^{2}-C\varepsilon\|\varphi_{t}\|_{H^{1}}^{2}>0,
	\]
	we have
	\begin{align*}
	\mathcal{E}(\varphi_{0})=\mathcal{E}(\varphi_{t}) & \geq\frac{1}{2}\int\mathrm{d}x\,|\nabla\varphi_{t}(x)|^{2}-C\|\varphi_{t}\|_{H^{1}}^{1}\\
	& \geq\frac{1}{2}\int\mathrm{d}x\,|\nabla\varphi_{t}(x)|^{2}-C\left(\varepsilon\|\varphi_{t}\|_{H^{1}}^{2}+\frac{1}{\varepsilon}\right).
	\end{align*}
	Thus,
	\[
	\mathcal{E}(\varphi_{0})+\frac{C}{\varepsilon}\geq C\|\varphi_{t}\|_{H^{1}}^{2}.
	\]
	This leads us to the conclusion.
\end{proof}
\begin{rem}
	For $\lambda>-1/4$, the proof of Lemma \ref{lem:H1bdd} is a bit
	easier as follows.
	
	If $\lambda>0$, we are done since
	\[
	\mathcal{E}(\varphi_{t})\geq\frac{1}{2}\int\mathrm{d}x\,|\nabla\varphi_{t}(x)|^{2}
	\]
	implies that $\|\varphi_{t}\|_{H^{1}}\leq C(\mathcal{E}(\varphi_{0})+\|\varphi_{0}\|_{L^{2}})$.
	
	Note that 
	\[
	\int\mathrm{d}x\mathrm{d}y\mathrm{d}z\,\frac{1}{|x-y|}\frac{1}{|x-z|}|\varphi_{t}(x)|^{2}|\varphi_{t}(y)|^{2}|\varphi_{t}(z)|^{2}\leq2\|\varphi_{t}\|_{H^{1}}^{2}\|\varphi_{t}\|_{L^{2}}^{4}
	\]
	implies
	\[
	\mathcal{E}(\varphi_{0})=\mathcal{E}(\varphi_{t})\geq\frac{1}{2}\int\mathrm{d}x\,|\nabla\varphi_{t}(x)|^{2}-2|\lambda|\|\varphi_{t}\|_{H^{1}}^{2}\|\varphi_{t}\|_{L^{2}}^{4}\geq\left(\frac{1}{2}-2|\lambda|\|\varphi_{t}\|_{L^{2}}^{4}\right)\|\varphi_{t}\|_{H^{1}}^{2}
	\]
	Then, for $\lambda$satisfying
	\[
	|\lambda|<\frac{1}{4},
	\]
	we have 
	\[
	\|\varphi_{t}\|_{H^{1}}\leq C.
	\]
\end{rem}

Then we have the following lemma which is our goal of this section.
\begin{lem}
	\label{lem:V2phiphi}Let $\varphi_{t}$ be the solution of quintic
	Hartree equation with three-body interaction potential $V(x-y,x-z)$
	having initial data $\varphi_{0}\in H^{1}$. Then
	\[
	\int\mathrm{d}x\mathrm{d}y\mathrm{d}z\,|V(x-y,x-z)|^{2}|\varphi_{t}(x)|^{2}|\varphi_{t}(y)|^{2}|\varphi_{t}(z)|^{2}\leq C.
	\]
\end{lem}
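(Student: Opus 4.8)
The plan is to expand $|V(x-y,x-z)|^{2}$, reduce every resulting term to the single model integral
\[
\mathcal{I}:=\int\mathrm{d}x\mathrm{d}y\mathrm{d}z\,\frac{|\varphi_{t}(x)|^{2}|\varphi_{t}(y)|^{2}|\varphi_{t}(z)|^{2}}{|x-y|^{2}|x-z|^{2}},
\]
and then to bound $\mathcal{I}$ using Lemma \ref{lem:gHLSineq}, the Sobolev embedding, and Lemma \ref{lem:H1bdd}. Writing $V=\lambda(A+B+C)$ with $A=v(x-y)v(x-z)$, $B=v(y-z)v(y-x)$, $C=v(z-x)v(z-y)$ and using that $v$ is even, one has $A^{2}=|x-y|^{-2}|x-z|^{-2}$, $B^{2}=|x-y|^{-2}|y-z|^{-2}$, $C^{2}=|x-z|^{-2}|y-z|^{-2}$; after relabelling the integration variables, which is permissible because the weight $|\varphi_{t}(x)|^{2}|\varphi_{t}(y)|^{2}|\varphi_{t}(z)|^{2}$ is symmetric, each of these three integrals equals $\mathcal{I}$. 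For the six cross terms I would use the pointwise bound $v(p)v(q)\le\tfrac12(v(p)^{2}+v(q)^{2})$: for instance $AB=v(x-y)^{2}v(x-z)v(y-z)\le\tfrac12 v(x-y)^{2}v(x-z)^{2}+\tfrac12 v(x-y)^{2}v(y-z)^{2}$, and both resulting kernels are of the model type; the terms $AC$ and $BC$ are treated identically. Hence $\int\mathrm{d}x\mathrm{d}y\mathrm{d}z\,|V(x-y,x-z)|^{2}|\varphi_{t}(x)|^{2}|\varphi_{t}(y)|^{2}|\varphi_{t}(z)|^{2}\le C\lambda^{2}\mathcal{I}$.

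To estimate $\mathcal{I}$, I would apply Lemma \ref{lem:gHLSineq} with $n=3$, $\lambda_{1}=\lambda_{2}=2$ (so $0<\lambda_{i}<n$ and $(\lambda_{1}+\lambda_{2})/n=4/3$), $f_{1}=f_{2}=f_{3}=|\varphi_{t}|^{2}$, and $p_{1}=p_{2}=p_{3}=9/5$, since $3/(9/5)=5/3=3-4/3$ meets the hypothesis. This gives $\mathcal{I}\le C\bigl\| |\varphi_{t}|^{2} \bigr\|_{L^{9/5}}^{3}=C\|\varphi_{t}\|_{L^{18/5}}^{6}$. By interpolation between $L^{2}$ and $L^{6}$, using $\tfrac{5}{18}=\tfrac13\cdot\tfrac12+\tfrac23\cdot\tfrac16$, one has $\|\varphi_{t}\|_{L^{18/5}}\le\|\varphi_{t}\|_{L^{2}}^{1/3}\|\varphi_{t}\|_{L^{6}}^{2/3}$; then the Sobolev embedding $H^{1}(\mathbb{R}^{3})\hookrightarrow L^{6}(\mathbb{R}^{3})$, the conservation of mass $\|\varphi_{t}\|_{L^{2}}=\|\varphi_{0}\|_{L^{2}}=1$, and Lemma \ref{lem:H1bdd} (which applies since $\varphi_{0}\in H^{(3/2)+a}\subset H^{1}$) yield $\|\varphi_{t}\|_{L^{18/5}}\le C\|\varphi_{t}\|_{H^{1}}^{2/3}\le C$ uniformly in $t$. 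Combining this with the reduction above gives the asserted bound.

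The expansion and the variable relabellings are routine bookkeeping; the only genuine inputs are the generalized Hardy--Littlewood--Sobolev inequality (Lemma \ref{lem:gHLSineq}), whose proof is an iterated application of the classical Hardy--Littlewood--Sobolev inequality, and the global-in-time $H^{1}$ bound (Lemma \ref{lem:H1bdd}). I do not expect any substantial obstacle beyond verifying that the Lebesgue exponent forced on $|\varphi_{t}|^{2}$, namely $L^{9/5}$ (equivalently $L^{18/5}$ for $\varphi_{t}$), lies in the range between $L^{2}$ and $L^{6}$ that is controlled by the conserved mass and the uniform $H^{1}$ norm, which the interpolation weight $\theta=1/3$ confirms.
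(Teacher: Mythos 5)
Your proof is correct, but it follows a genuinely different route from the one in the paper. You reduce everything to the model integral $\mathcal{I}=\int\mathrm{d}x\mathrm{d}y\mathrm{d}z\,|x-y|^{-2}|x-z|^{-2}|\varphi_{t}(x)|^{2}|\varphi_{t}(y)|^{2}|\varphi_{t}(z)|^{2}$ and then invoke Lemma \ref{lem:gHLSineq} with $\lambda_{1}=\lambda_{2}=2$, $p_{1}=p_{2}=p_{3}=9/5$ (the exponent bookkeeping checks out: the H\"older/weak-Young splitting uses $q=r=9/2$, and the interpolation $\|\varphi_{t}\|_{L^{18/5}}\le\|\varphi_{t}\|_{L^{2}}^{1/3}\|\varphi_{t}\|_{L^{6}}^{2/3}$ together with Sobolev embedding and Lemma \ref{lem:H1bdd} gives $\mathcal{I}\le C\|\varphi_{t}\|_{H^{1}}^{4}\|\varphi_{t}\|_{L^{2}}^{2}\le C$). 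The paper instead expands the square as you do, but then bounds each term directly by Hardy's inequality: for fixed $x$ one has $\int\mathrm{d}y\,|x-y|^{-2}|\varphi_{t}(y)|^{2}\le C\|\nabla\varphi_{t}\|_{L^{2}}^{2}$, so each term is at most $C\|\varphi_{t}\|_{L^{2}}^{2}\|\varphi_{t}\|_{H^{1}}^{4}$, again finished by Lemma \ref{lem:H1bdd}; in the paper Lemma \ref{lem:gHLSineq} is only used (with $\lambda_{1}=\lambda_{2}=1$) inside the proof of Lemma \ref{lem:H1bdd} itself. The two arguments give the same final dependence on $\|\varphi_{t}\|_{H^{1}}$; the paper's Hardy argument is shorter and also furnishes the pointwise-in-$x$ bound $\sup_{x}\|v(x-\cdot)\varphi_{t}\|_{L^{2}}\le C$ that is used elsewhere, while yours is a clean alternative that gets more mileage out of the generalized Hardy--Littlewood--Sobolev lemma and only uses convexity (your AM--GM step, equivalent to $(a+b+c)^{2}\le 3(a^{2}+b^{2}+c^{2})$) plus symmetry of the weight to handle the cross terms.
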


\begin{proof}
	First, we rewrite
	
	\begin{align*}
	& \int\mathrm{d}x\mathrm{d}y\,|V(x-y,x-z)|^{2}|\varphi_{t}(x)|^{2}|\varphi_{t}(y)|^{2}\\
	& =\lambda^{2}\int\mathrm{d}x\mathrm{d}y\,\left(\frac{1}{|x-y|}\frac{1}{|x-z|}+\frac{1}{|y-x|}\frac{1}{|y-z|}+\frac{1}{|z-x|}\frac{1}{|z-y|}\right)^{2}|\varphi_{t}(x)|^{2}|\varphi_{t}(y)|^{2}\\
	& \leq2\lambda^{2}\int\mathrm{d}x\mathrm{d}y\,\frac{1}{|x-y|^{2}}\frac{1}{|x-z|^{2}}|\varphi_{t}(x)|^{2}|\varphi_{t}(y)|^{2}\\
	& \qquad+2\lambda^{2}\int\mathrm{d}x\mathrm{d}y\,\frac{1}{|x-y|^{2}}\frac{1}{|x-z|^{2}}|\varphi_{t}(x)|^{2}|\varphi_{t}(y)|^{2}\\
	& \qquad+2\lambda^{2}\int\mathrm{d}x\mathrm{d}y\,\frac{1}{|x-y|^{2}}\frac{1}{|x-z|^{2}}|\varphi_{t}(x)|^{2}|\varphi_{t}(y)|^{2}\\
	& =:I_{1}+I_{2}+I_{3}
	\end{align*}
	By Lemma \ref{lem:H1bdd}, we can prove that
	\[
	I_{1}=2\lambda^{2}\int\mathrm{d}x\mathrm{d}y\,\frac{1}{|x-y|^{2}}\frac{1}{|x-z|^{2}}|\varphi_{t}(x)|^{2}|\varphi_{t}(y)|^{2}\leq C\|\varphi_{t}\|_{H^{1}}^{4}\leq C.
	\]
	Similarly for other terms, we get the conclusion.
\end{proof}

\section*{Acknowledgment}

I would like to thank Younghun Hong, Ji Oon Lee, Benjamin Schlein, and Peter Pickl for
helpful discussion. I also acknowledge an anonymous reviewer. Jinyeop
Lee was partially supported by Samsung Science and Technology Foundation
(SSTF-BA1401-51) and by the National Research Foundation of Korea(NRF)
grant funded by the Korea government(MSIT) (NRF-2019R1A5A1028324).

\end{document}